\newtheorem{theorem}{Theorem}
\newtheorem{definition}{Definition}
\newtheorem{lemma}{Lemma}
\newtheorem{corollary}{Corollary}
\begin{document}
%
\title{A Supermodular Optimization Framework for Leader Selection under Link Noise in Linear Multi-Agent Systems}
%
%
%

\author{Andrew~Clark,
        Linda~Bushnell\footnote{$^{1}$Corresponding author.  Email: lb2@uw.edu}$^{1}$,
        and~Radha~Poovendran \\
        Department of Electrical Engineering\\
         University of Washington, Seattle, WA 98195\\
        \{awclark, lb2, rp3\}@uw.edu}

\maketitle

\begin{abstract}
\noindent 
In many applications of multi-agent systems (MAS),  a set of leader agents acts as a control input to the remaining follower agents.  In this paper, we introduce an analytical approach to selecting leader agents  in order to minimize the  total mean-square error of the follower agent states from their desired value in steady-state in the presence of noisy communication links.  We show
 that the problem of choosing leaders in order to minimize this error can be solved using supermodular optimization techniques, leading to efficient algorithms that are within a provable bound of the optimum.  We formulate two leader selection problems within our framework, namely the problem of choosing a fixed number of leaders to minimize the error, as well as the problem of choosing the minimum number of leaders to achieve a tolerated level of error.  We study both  leader selection criteria for different scenarios, including MAS with  static topologies, topologies experiencing random link or node failures, switching topologies, and topologies that vary arbitrarily in time due to node mobility.  In addition to providing provable bounds for all these cases, simulation results demonstrate that  our approach outperforms other leader selection methods, such as node degree-based and random selection methods, and provides comparable performance to current state of the art algorithms.
\end{abstract}

\section{Introduction}
\label{sec:intro}

Multi-agent systems (MAS) consist of networked, autonomous agents, where
each agent receives inputs from its neighbors, uses the inputs to perform computations and update its state information, and broadcasts the resulting information as output to its neighbors.
The MAS framework has been used to model and analyze man-made systems in a wide variety of applications, including the power grid~\cite{MAS_power_grid}, formations of unmanned vehicles~\cite{olfati2006flocking}, sensor networks~\cite{olfati2005consensus}, and networks of nanoscale devices~\cite{MAS_nanosystem}.
Natural phenomena, such as flocking in birds~\cite{reynolds1987flocks}, have also been  modeled as MAS.

An important sub-class of MAS consists of leader-follower systems, in which a set of leader agents act as control inputs to the remaining agents~\cite{ren2005coordination}.
Each follower agent computes its state value based on the states of its neighbors, which may include one or more leader agents.  Hence,  controlling the states of the leaders  influences the dynamics of the follower agents.
A leader-follower system can therefore be viewed as a controlled system, in which the follower agents act as the plant and the leader agents act as control inputs~\cite{ji2006leader}.

Existing control-theoretic work on leader-follower systems has shown that the  performance of the system, including the level of error in the follower node states due to link noise~\cite{wang2009robustness,patterson2010leader,lin2011optimal}, depends on which agents act as leaders.  Errors in a follower agent's state occur when the inputs  from its neighbors are corrupted by noise, causing the agent to update its state based on incorrect information.
The affected agent will  broadcast its updated state information to its neighbors, which update their states based on the received information, causing state errors to propagate through the MAS.  The choice of leader agents determines the level of error in the follower agent states due to the propagation of leader inputs through noisy communication links.  

In spite of the effect of the choice of leader agents on MAS performance, the design of algorithms for selecting agents to act as leaders is currently in its early stages.
Since the number of possible leader sets grows exponentially in the number of leaders and the total number of agents, an exhaustive search over all leader sets is impractical.
An analytical approach for choosing leaders in order to ensure that the MAS is controllable from its leader agents was introduced in~\cite{liu2011controllability}.   This approach, however, does not consider the impact of noise in the communication links between agents, leading to deviations from the desired behavior when even a single link experiences noise.

Leader selection algorithms based on convex optimization have been proposed for static networks in order to minimize the error in the follower agent states due to noise in~\cite{fardad2011noisefree} and \cite{lin2011noisecorrupted}.  These convex optimization-based algorithms, however, do not provide provable guarantees on the optimality of the resulting leader set.  Currently there is no analytical framework for leader selection for minimizing error due to noise that provides such guarantees.


In this paper, we present an analytical approach to solve the problem of selecting the leader set that minimizes  the overall system error, defined as the mean-square error of  the follower agent states from their desired steady-state value.  We formulate the problem of selecting the optimal set of leaders as a set optimization problem, and present a solution framework based on supermodular optimization.  Our framework leads to efficient algorithms that provide provable bounds on the gap between the mean-square error resulting from the  leader set under our framework and the minimum possible error in both static and dynamic networks.  
We make the following specific contributions:
\begin{itemize}
\item We develop a supermodular optimization framework for choosing leaders in a linear MAS in order to minimize the sum of the mean-square errors of the follower agent states.
\item We prove that the mean-square error due to link noise is a supermodular function of the set of leader agents by  observing that the error of each follower agent's state is proportional to the commute time of a random walk between the follower agent and the leader set\footnote[2]{The formulation in this paper differs from that of~\cite{clark2011submodular}, which selected leaders based on minimizing the effective resistance of an equivalent electrical network.}.  We  then show that the commute time is a supermodular function of the leader set.
\item We analyze two classes of the leader selection problem within our framework:  the problem of choosing a fixed number of leaders in order to minimize the error due to noise, and  the problem of finding the minimum number of leaders needed, as well as the identities of the leaders, in order to meet a given error bound.
\item We extend our approach to a broad class of multi-agent system topologies, including systems with: (1)  static network topology, (2)  random link and node failures, (3) switching between predefined topologies, and (4)  network topologies that vary arbitrarily in time.
\item 
    We compare our results with other leader selection methods, including random heuristics and choosing high- and average-degree agents, through simulation and show that our supermodular approach outperforms both schemes.  We also show that the supermodular optimization approach provides comparable performance to state-of-the-art methods based on convex optimization.  
\end{itemize}

The  paper is organized as follows.  In Section \ref{sec:related}, we review  related work on leader-follower MAS.  Section \ref{sec:preliminaries} states our basic definitions and assumptions and gives background on supermodular functions.  In Section \ref{sec:problem_static}, we formulate the leader selection problem for the case of MAS with a static network topology and derive a supermodular optimization solution.  In Section \ref{sec:problem_dynamic}, we analyze leader selection in networks with time-varying, dynamic topologies.  Section \ref{sec:simulation} evaluates our approach and compares with other widely-used leader selection algorithms through a simulation study.  Section \ref{sec:conclusion} presents our conclusions. 
\section{Related Work}
\label{sec:related}


The impact of a given leader set on system performance was considered in \cite{tanner2004controllability}, where the states of the follower agents are treated as the plant to be controlled, while the leader states act as control inputs.  In the case of  linear MAS, the plant dynamics are given by the graph Laplacian of the subgraph defined by the follower agents.  In \cite{tanner2004controllability}, it was shown that the leader-follower system is controllable from the leader agents if and only if the Laplacian eigenvalues are distinct.  An alternative condition for controllability, based on the automorphism group of the graph, was given in \cite{rahmani2009controllability}.  Controllability of MAS with switching network topology was studied in \cite{liu2008controllability}.  Although these studies characterized the controllability of leader-follower systems with a given leader set, they have  not addressed the question of finding the leader set.

A systematic framework for leader selection in the absence of noise was developed in \cite{liu2011controllability}.  The authors showed that leaders chosen according to a matching algorithm on the underlying communication graph of the MAS satisfy the structural controllability criterion~\cite{lin1974structural}, meaning that the system is controllable from the leaders for  any choice of follower dynamics, except in certain pathological cases.  The resulting polynomial-time algorithm returns the minimum number of leaders, as well as the identities of the leaders, needed to control the system.
This approach, however,  does not consider the effect of errors in the agent states that are introduced by noise in the communication links.

Leader selection in the presence of noise in networks with static topology was considered  in \cite{patterson2010leader}.  The authors of \cite{patterson2010leader} introduce the network coherence metric, which measures how close the follower agents states are to their desired consensus values, and equals the $H_{2}$-norm of the leader-follower system.  It is shown that the network coherence is a monotone nonincreasing function of the leader set, and a greedy algorithm for maximizing the network coherence is presented.  While the network coherence is equivalent to the metric we derive for static networks, provable bounds on the optimality of the selected leader sets cannot be derived from monotonicity alone. 



In~\cite{fardad2011noisefree,lin2011noisecorrupted}, the authors propose a semidefinite programming relaxation of the problem of selecting a set of up to $k$ leaders to minimize the $H_{2}$-norm defined in~\cite{patterson2010leader}.  These algorithms, however, do not provide any guarantees on the optimality of the chosen set of leaders.

Furthermore, while current approaches consider selecting a set of up to $k$ leaders in static networks in order to minimize the error in the agent states, the problem of selecting the minimum-size set of leader agents to meet a given bound on the error, as well as leader selection in dynamic networks, is not studied in~\cite{patterson2010leader,fardad2011noisefree,lin2011noisecorrupted}.  

When the leader set is given, the effect of noise on leader-follower MAS  protocols, such as consensus, has  been studied using a variety of approaches.   For a leader-follower system with additive link noise, the steady-state error due to noise was shown to be proportional to the graph effective resistance between the leader and follower agents in \cite{barooah2006graph}.  Alternative metrics, based on the transfer matrix between the noise inputs and the follower states, were proposed and analyzed in \cite{wang2009robustness}.
In~\cite{lin2011optimal}, decentralized control for vehicular networks with static topology, single- and double-integrator dynamics, and noise in the agent states was considered, and it was proved that at least one leader node must be present in the network to achieve stability.   In~\cite{bamieh2011coherence}, existing schemes for consensus and vehicle formation control were studied in the $H_{2}$-norm framework.  While these methods can be used to design and evaluate a leader-follower system with given leaders, they do not address the question of selecting a leader set in the presence of noise.     


The commute time of a random walk on a graph, defined as the expected time for a walk originating at a node $u$ to reach a node $v$, has been extensively studied~\cite{levin2009markov}.  In~\cite{chandra1996electrical}, it was shown that the graph effective resistance between two nodes $u$ and $v$ is proportional to the commute time of a random walk between $u$ and $v$.  To the best of our knowledge, however, our result that the commute time between a node $u$ and a set $S$ is a supermodular function of $S$ does not appear in the existing literature.
\section{Preliminaries and Background}
\label{sec:preliminaries}
\noindent In this section, we give preliminary information on the system model, system error, and supermodular functions.  
\subsection{System Model}
\label{subsec:system_model}
We consider a MAS consisting of $n$ agents, indexed by the set $V = \{1,\ldots, n\}$.  An edge $(i,j)$ exists between agents $i$ and $j$ iff $i$ and $j$ are within communication range of each other.  Letting $E$ denote the set of edges, the graph structure of the MAS is given by $G=(V,E)$.  For an agent indexed $i$, the neighbor set of $i$, denoted $N(i)$, is defined by $N(i) \triangleq \{j: (j,i) \in E\}$.  The degree of $i$ is defined to be the number of its neighbors $|N(i)|$.  It is assumed that the edges are undirected and the graph $G$ is connected.

Each agent $i$ has a time-varying state, denoted $x_{i}$(t), which may, in different contexts, represent $i$'s position, velocity, or sensed measurement.  Let $\mathbf{x} \in \mathbf{R}^{n}$ represent the vector of agent states.   The set of leader agents, denoted $S$, consists of agents who receive their state values directly from the MAS owner.  By broadcasting these state values to their one-hop neighbors, the leaders influence the dynamics of the follower agents.  Without loss of generality, we choose the indices such that $\mathbf{x}(t) = [\mathbf{x}_{f}(t)^{T} \quad \mathbf{x}_{l}(t)^{T}]^{T}$, where $\mathbf{x}_{f}(t)$ and $\mathbf{x}_{l}(t)$ denote the vectors of follower and leader states, respectively.

The goal of the MAS  is for the differences between the states of neighboring agents $i$ and $j$, $x_{i}-x_{j}$, to reach desired values, denoted $r_{ij}$, for all $(i,j) \in E$, so that $x_{i}-x_{j} = r_{ij}$.  The desired state $x_{i}^{\ast}$ is defined as the state satisfying $x_{i}^{\ast} - x_{j}^{\ast} = r_{ij}$ for all $j \in N(i)$.  We assume that $r_{ij}$ is known to agents $i$ and $j$ and the MAS owner, and that there exists at least one value of $\mathbf{x}^{\ast}$ such that $x_{i}^{\ast} - x_{j}^{\ast} = r_{ij}$ for all $(i,j) \in E$.

In the noiseless case, in order to reach the desired state $\mathbf{x}^{\ast}$, the follower agent $i \in V \setminus S$ updates its state according to the linear model
\begin{displaymath}
\dot{x}_{i}(t) = -\sum_{j \in N(i)}{W_{ij}(x_{i}(t) - x_{j}(t) - r_{ij})},
\end{displaymath}
where $W$ is a real-valued weight matrix with nonnegative entries. Furthermore, it is assumed that each link $(i,j)$ is affected by an additive, zero-mean white noise process, denoted $\epsilon_{ij}(t)$, with autocorrelation function $\mathbf{E}(\epsilon_{ij}(t)\epsilon_{ij}(t+\tau)) = \nu_{ij}\delta(\tau)$, where $\delta(\cdot)$ denotes the unit impulse function.  The noise values on each link are assumed to be independent.  This leads to the overall linear dynamics of agent $i \in V \setminus S$
\begin{equation}
\label{eq:general_linear_dynamics_noise}
\dot{x}_{i}(t) = -\sum_{j \in N(i)}{W_{ij}(x_{i}(t) - x_{j}(t) - r_{ij} + \epsilon_{ij}(t))}.
\end{equation}


In order to minimize the effect of noise, it is assumed that the link weights $W_{ij}$ are chosen in order to generate a best unbiased linear estimate of the leader agent states \cite{barooah2006graph}.  The link weights are therefore chosen as $W_{ij} = \nu_{ij}^{-1}/D_{i}$, where $D_{i} \triangleq \sum_{j \in N(i)}{\nu_{ij}^{-1}}$.  A detailed derivation of these dynamics is given in Appendix \ref{sec:derivation_agent_dynamics}.

Define the elements of the weighted Laplacian matrix $L$ by
\begin{equation}
\label{eq:Laplacian_def}
L_{ij} = \left\{
\begin{array}{cc}
-\nu_{ij}^{-1}, & (i,j) \in E \\
D_{i}, & i=j \\
0, & \mbox{else}
\end{array}
\right.
\end{equation}
which can be further decomposed as
 \begin{displaymath}
 L = \left(
 \begin{array}{c|c}
 L_{ff} & L_{fl} \\
 \hline
 L_{lf} & L_{ll}
 \end{array}
 \right),
 \end{displaymath}
 where $L_{ff}$ and $L_{fl}$ characterize the impact of the follower and leader agent states, respectively, on the follower update dynamics.
The dynamics of the follower agent (\ref{eq:general_linear_dynamics_noise}) for $i \in V \setminus S$ can be written in terms of $L$ as
 \begin{IEEEeqnarray*}{rCl}
 \dot{x}_{i}(t) &=& -D_{i}^{-1}\sum_{j \in N(i)}{\nu_{ij}^{-1}(x_{i}(t) - x_{j}(t) - r_{ij} + \epsilon_{ij}(t))} \\
  &=& -D_{i}^{-1}\left(\sum_{j \in N(i)}{\nu_{ij}^{-1}x_{i}(t)} - \sum_{j \in N(i)}{\nu_{ij}^{-1}x_{j}(t)} - \sum_{j \in N(i)}{\nu_{ij}^{-1}r_{ij}}\right) + w_{i}(t) \\
  &=& -D_{i}^{-1}\left(L_{ii}x_{i}(t) + \sum_{j \in N(i)}{L_{ij}x_{j}(t)} + \sum_{j \in N(i)}{L_{ij}r_{ij}}\right) + w_{i}(t),
  \end{IEEEeqnarray*}
  where $w_{i}(t)$ is a zero-mean white noise process.  Define matrix $B$ to be an $(n - |S|) \times |E|$ matrix, with $B_{il} = L_{ij}$ if edge $l$ is given by $(i,j)$ for some $j \in N(i)$ and $B_{il} = 0$ otherwise.
 The dynamics of the follower agents are given in vector form by
 \begin{displaymath}
\dot{\mathbf{x}}_{f}(t) = -D_{f}^{-1}(L_{ff}\mathbf{x}_{f}(t) + L_{fl}\mathbf{x}_{l}(t) + B\mathbf{r}) + \mathbf{w}(t),
\end{displaymath}
where $D_{f}$ is a $(n-|S|) \times (n-|S|)$ diagonal matrix with $D_{i}$ as the $i$-th diagonal entry.

We assume that the leaders maintain a constant state, $\mathbf{x}_{l}(t) \equiv \mathbf{x}_{l}^{\ast}$.  Based on this assumption, the desired state of the followers is defined to be $\mathbf{x}_{f}^{\ast} = -L_{ff}^{-1}(L_{fl}\mathbf{x}_{l}^{\ast} + B\mathbf{r})$, where we use the fact that $L_{ff}^{-1}$ exists when $G$ is connected~\cite[Lemma 10.36]{mesbahi2010graph}.

\subsection{Quantifying System Error}
\label{subsec:quantify_error}
The mean-square error in the follower agent states due to link noise in steady-state is defined as follows.
\begin{definition}
Let $\mathbf{x}_{f}^{\ast} \in \mathbf{R}^{n}$ denote a desired state for the follower agents. 
The total error of the follower agents at time $t$ is defined by $\mathbf{E}||\mathbf{x}_{f}(t)-\mathbf{x}_{f}^{\ast}||_{2}^{2}$. The system error, denoted $R(S)$, is defined as
\begin{displaymath}
R(S) \triangleq  \lim_{t \rightarrow \infty}{\mathbf{E}||\mathbf{x}_{f}(t)-\mathbf{x}_{f}^{\ast}||_{2}^{2}}.
\end{displaymath}
\end{definition}
The following theorem gives an explicit formula for $R(S)$ in terms of the matrix $L$.

\begin{theorem}
\label{theorem:error_Laplacian}
$R(S)$ is equal to $\frac{1}{2}Tr(L_{ff}^{-1}) = \frac{1}{2}\sum_{u \in V \setminus S}{(L_{ff}^{-1})_{uu}}$, where $(L_{ff}^{-1})_{uu}$ denotes the $(u,u)$-entry of $L_{ff}^{-1}$.
\end{theorem}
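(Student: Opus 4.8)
The plan is to reduce the statement to a standard steady-state covariance computation for a stable linear system driven by white noise, and then to recognize the resulting Lyapunov equation as being solved in closed form by $\frac{1}{2}L_{ff}^{-1}$.

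First I would pass to the error coordinates $\mathbf{e}(t) \triangleq \mathbf{x}_{f}(t) - \mathbf{x}_{f}^{\ast}$. Since the leaders hold a constant state $\mathbf{x}_{l}(t) \equiv \mathbf{x}_{l}^{\ast}$, the target $\mathbf{x}_{f}^{\ast}$ is time-invariant and $\dot{\mathbf{e}}(t) = \dot{\mathbf{x}}_{f}(t)$. Substituting the follower dynamics and using the defining relation $L_{ff}\mathbf{x}_{f}^{\ast} = -(L_{fl}\mathbf{x}_{l}^{\ast} + B\mathbf{r})$, the affine forcing terms cancel and the error obeys the autonomous linear stochastic equation
\begin{displaymath}
\dot{\mathbf{e}}(t) = -D_{f}^{-1}L_{ff}\,\mathbf{e}(t) + \mathbf{w}(t), \qquad A \triangleq -D_{f}^{-1}L_{ff}.
\end{displaymath}

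Next I would establish that $A$ is Hurwitz so that a finite steady-state covariance exists. Because $G$ is connected and $S \neq \emptyset$, the matrix $L_{ff}$ is symmetric positive definite (its inverse exists by \cite[Lemma 10.36]{mesbahi2010graph}), and $A$ is similar to $-D_{f}^{-1/2}L_{ff}D_{f}^{-1/2}$ under the transform $\mathbf{M} \mapsto D_{f}^{1/2}\mathbf{M}D_{f}^{-1/2}$; the latter matrix is symmetric negative definite, so every eigenvalue of $A$ is real and negative. The steady-state covariance $\Sigma \triangleq \lim_{t \rightarrow \infty}{\mathbf{E}[\mathbf{e}(t)\mathbf{e}(t)^{T}]}$ then exists and is the unique solution of the continuous-time Lyapunov equation $A\Sigma + \Sigma A^{T} + Q = 0$, where $Q$ is the spectral density of $\mathbf{w}(t)$. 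Since $w_{i}(t) = -D_{i}^{-1}\sum_{j \in N(i)}{\nu_{ij}^{-1}\epsilon_{ij}(t)}$, evaluating its autocorrelation from $\mathbf{E}(\epsilon_{ij}(t)\epsilon_{ij}(t+\tau)) = \nu_{ij}\delta(\tau)$ together with the link-independence assumption collapses the per-link contributions to $\sum_{j \in N(i)}{\nu_{ij}^{-1}} = D_{i}$, yielding $Q = D_{f}^{-1}$.

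Finally I would verify the claimed solution directly. Using $A^{T} = -L_{ff}D_{f}^{-1}$ (which relies on $L_{ff}$ being symmetric and $D_{f}$ diagonal), substituting $\Sigma = \frac{1}{2}L_{ff}^{-1}$ gives $A\Sigma + \Sigma A^{T} = -\frac{1}{2}D_{f}^{-1} - \frac{1}{2}D_{f}^{-1} = -D_{f}^{-1} = -Q$, so $\frac{1}{2}L_{ff}^{-1}$ solves the Lyapunov equation and, by uniqueness, equals $\Sigma$. Taking traces gives $R(S) = \mathrm{Tr}(\Sigma) = \frac{1}{2}\mathrm{Tr}(L_{ff}^{-1})$, as claimed. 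I expect the main obstacle to be the bookkeeping in the noise step: correctly deriving the spectral density $Q = D_{f}^{-1}$ of the aggregated process $\mathbf{w}(t)$ from the per-link model while tracking the $D_{i}^{-1}$ weights and confirming that the off-diagonal cross-terms vanish, and rigorously justifying that the limit defining $R(S)$ exists and coincides with $\mathrm{Tr}(\Sigma)$. The algebraic verification in the last step is routine once the symmetry $A^{T} = -L_{ff}D_{f}^{-1}$ is noted.
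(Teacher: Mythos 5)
Your proposal is correct and takes essentially the same route as the paper: the paper's main-text proof reduces $R(S)$ to the steady-state variance of the follower states and then invokes its Appendix~\ref{sec:derivation_agent_dynamics} lemma (following \cite{barooah2006graph}), which is precisely your computation, namely that the covariance solves the Lyapunov equation $-D_{f}^{-1}L_{ff}X - XL_{ff}D_{f}^{-1} + D_{f}^{-1} = 0$, with $X = \frac{1}{2}L_{ff}^{-1}$ by inspection. The only difference is that you make explicit some steps the paper asserts or defers, such as the Hurwitz property of $-D_{f}^{-1}L_{ff}$ via the symmetrizing similarity and the derivation of the noise intensity $Q = D_{f}^{-1}$ from the per-link model.
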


\begin{proof}
In the absence of noise, the follower agent dynamics are given by
\begin{equation}
\label{eq:noise_free_vector_dynamics}
\dot{\mathbf{x}}_{f}(t) =  -D_{f}^{-1}(L_{ff}\mathbf{x}_{f}(t) + L_{fl}\mathbf{x}_{l}^{\ast} + B\mathbf{r}).
\end{equation}
Since $L_{ff}$ is positive definite when $G$ is connected~\cite[Lemma 10.36]{mesbahi2010graph},  $\mathbf{x}_{f}^{\ast} = -L_{ff}^{-1}(L_{fl}\mathbf{x}_{l}^{\ast} + B\mathbf{r})$ is a global asymptotic equilibrium of (\ref{eq:noise_free_vector_dynamics}).  In the presence of noise, $\mathbf{x}_{f}(t)$ is given by
\begin{equation}
\label{eq:x_t_expression}
\mathbf{x}_{f}(t) = e^{-D_{f}^{-1}L_{ff}t}\mathbf{x}_{f}(0) - \int_{0}^{t}{e^{-D_{f}^{-1}L_{ff}(t-\tau)}D_{f}^{-1}(L_{fl}\mathbf{x}_{l}^{\ast} + B\mathbf{r})} +  \int_{0}^{t}{e^{-D_{f}^{-1}L_{ff}(t-\tau)}\mathbf{w}(\tau) \ d\tau}.
\end{equation}
 The first two terms converge to $\mathbf{x}_{f}^{\ast}$, since $\mathbf{x}_{f}^{\ast}$ is a global asymptotic equilibrium of (\ref{eq:noise_free_vector_dynamics}).  Since $\mathbf{w}$ is a zero-mean white process, the expected value of the third term of (\ref{eq:x_t_expression}) is zero by linearity of expectation.  Thus $\lim_{t \rightarrow \infty}{\mathbf{E}(\mathbf{x}_{f}(t)-\mathbf{x}_{f}^{\ast})} = 0$, leading to
 \begin{equation}
 \label{eq:variance}
R(S) = \lim_{t \rightarrow \infty}{\mathbf{E}||\mathbf{x}_{f}(t) - \mathbf{x}_{f}^{\ast}||_{2}^{2}} = \sum_{u \in V \setminus S}{\mathbf{E}((x_{u}(t)-x_{u}^{\ast})^{2})}
 =  \sum_{u \in V \setminus S}{var(x_{u}(t))},
 \end{equation}
where (\ref{eq:variance}) follows from the fact that $x_{u}(t) - x_{u}^{\ast}$ is zero-mean, where $x_{u}(t)$ denotes the state of agent $u$ at time $t$.  In \cite[Section IV-B]{barooah2006graph}, it was shown that $\lim_{t \rightarrow \infty}{var(x_{u}(t))} = \frac{1}{2}(L_{ff}^{-1})_{uu}$ (see Lemma \ref{lemma:appendix_error_proof} in Appendix \ref{sec:derivation_agent_dynamics} for a proof of this fact), implying that $R(S) = \frac{1}{2}\sum_{u \in V \setminus S}{(L_{ff}^{-1})_{uu}}$, as desired.
\end{proof}

For $u \in V \setminus S$, let $R(S,u) = \frac{1}{2}(L_{ff}^{-1})_{uu}$, so that $R(S) = \sum_{u \in V \setminus S}{R(S,u)}$.  Note that $L_{ff}^{-1}$ can be computed in worst-case $O(n^{3})$ time for a given leader set $S$.

\subsection{Supermodular Functions}
\label{subsec:submodular_background}
Let $V$ be a finite set, and let $2^{V}$ denote the set of subsets of $V$.  Then a supermodular function on $V$ is defined as follows.
\begin{definition}
\label{def:submodular}
Let $f: 2^{V} \rightarrow \mathbb{R}$, and let $S \subseteq T \subseteq V$.  The function $f$ is \emph{supermodular} if and only if, for any $v \in V \setminus T$,
\begin{equation}
\label{eq:submodular}
f(S)  - f(S \cup \{v\}) \geq  f(T) - f(T \cup \{v\}).
\end{equation}
\end{definition}
Intuitively, this identity implies that adding an element $v$ to a set $S$ results in a larger incremental decrease in $f$ than adding $v$ to a superset, $T$.  This can be interpreted as diminishing returns from $v$ as the set $S$ grows larger.
A function $f$ is \emph{submodular} if -$f$ is supermodular.  We first define the notion of a monotone set function.
\begin{definition}
\label{def:monotone}
A function $f: 2^{V} \rightarrow \mathbb{R}$ is \emph{monotone nondecreasing} (resp. \emph{nonincreasing}) if, for any $S \subseteq T$, $f(S) \leq f(T)$ (resp. $f(S) \geq f(T)$).
\end{definition}

The following two lemmas from \cite{fujishige2005submodular} will be used in our derivations below.
\begin{lemma}
\label{lemma:additive_submod}
Any nonnegative finite weighted sum of supermodular (resp. submodular) functions is supermodular (resp. submodular).
\end{lemma}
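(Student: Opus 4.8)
The plan is to exploit the additive structure of the defining inequality~(\ref{eq:submodular}) directly, since supermodularity is preserved under nonnegative linear combinations by a one-line manipulation. I would begin by fixing supermodular functions $f_{1}, \ldots, f_{m}$ together with nonnegative weights $\alpha_{1}, \ldots, \alpha_{m} \geq 0$, and setting $f = \sum_{i=1}^{m} \alpha_{i} f_{i}$. For arbitrary $S \subseteq T \subseteq V$ and any $v \in V \setminus T$, the supermodularity of each $f_{i}$ supplies the inequality
\[
f_{i}(S) - f_{i}(S \cup \{v\}) \geq f_{i}(T) - f_{i}(T \cup \{v\}).
\]

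Next, because each weight $\alpha_{i}$ is nonnegative, multiplying the $i$-th inequality through by $\alpha_{i}$ preserves its direction, and adding the resulting $m$ inequalities termwise again preserves the direction. The only bookkeeping to verify is that the summation commutes with the set evaluations in the obvious way, namely $\sum_{i} \alpha_{i} f_{i}(S) = f(S)$ and likewise for $S \cup \{v\}$, $T$, and $T \cup \{v\}$. Substituting these identities into the summed inequality reproduces~(\ref{eq:submodular}) verbatim for $f$, which establishes that $f$ is supermodular.

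For the submodular half of the statement, I would invoke the equivalence recorded just above the lemma, that $g$ is submodular if and only if $-g$ is supermodular. If $f_{1}, \ldots, f_{m}$ are instead submodular with the same nonnegative weights, then each $-f_{i}$ is supermodular, so by the supermodular case already proved, $\sum_{i} \alpha_{i}(-f_{i}) = -f$ is supermodular, and hence $f$ is submodular. This sign-flip reduction lets me reuse the computation rather than repeating it with reversed inequalities.

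There is no substantive obstacle in this argument: it rests entirely on the elementary facts that multiplication by a nonnegative scalar preserves the sense of an inequality and that inequalities of the same direction may be added. The only point that genuinely requires care is keeping the orientation straight in the convention adopted in~(\ref{eq:submodular}), where it is the incremental quantity $f(S) - f(S \cup \{v\})$ that must be nonincreasing as the set grows; dispatching the submodular case through the $-f$ reduction sidesteps any risk of flipping the inequality incorrectly.
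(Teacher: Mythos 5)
Your proof is correct: the paper does not prove this lemma itself but cites it from \cite{fujishige2005submodular}, and your argument is exactly the standard one that reference relies on --- scale each defining inequality~(\ref{eq:submodular}) by the nonnegative weight, add, and handle the submodular case via the sign-flip $f \mapsto -f$. Nothing is missing; the reduction for the submodular half correctly uses the paper's own convention that $f$ is submodular iff $-f$ is supermodular.
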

\begin{lemma}
\label{lemma:max_supermod}
Let $f: 2^{V} \rightarrow \mathbf{R}$ be a nonincreasing supermodular function, and let $c \geq 0$ be a constant.  Then $g(S) = \max{\{f(S), c\}}$ is supermodular.
\end{lemma}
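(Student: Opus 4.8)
The plan is to verify the single-element form of supermodularity in Definition~\ref{def:submodular} directly for $g$. Fix $S \subseteq T \subseteq V$ and $v \in V \setminus T$, and abbreviate the four relevant values of $f$ as $a = f(S)$, $b = f(S \cup \{v\})$, $d = f(T)$, and $e = f(T \cup \{v\})$. The goal then reduces to the scalar inequality $\max\{a,c\} - \max\{b,c\} \geq \max\{d,c\} - \max\{e,c\}$. Before touching $c$, I would record everything the hypotheses give about $a,b,d,e$. Since $f$ is nonincreasing (Definition~\ref{def:monotone}) and $S \subseteq S\cup\{v\}$, $T \subseteq T\cup\{v\}$, $S \subseteq T$, and $S\cup\{v\} \subseteq T\cup\{v\}$, the four inclusions yield $a \geq b$, $d \geq e$, $a \geq d$, and $b \geq e$. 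Supermodularity of $f$ (Definition~\ref{def:submodular} applied to $f$ itself) supplies the one extra relation $a - b \geq d - e$.

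The geometric picture I would keep in mind is that, for $x \geq y$, the quantity $\max\{x,c\}-\max\{y,c\}$ is exactly the length of the overlap $[y,x]\cap[c,\infty)$. The hypotheses say that the interval $[b,a]$ has both endpoints at least as large as the corresponding endpoints of $[e,d]$ (because $a\geq d$ and $b\geq e$) and is at least as long (because $a-b\geq d-e$); intuitively such an interval can only capture more of the ray $[c,\infty)$, which is precisely the desired inequality. Turning this intuition into a proof is a short case analysis on where $c$ sits relative to these values, and this clipping by $c$ is where the real content of the lemma lives.

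Concretely I would split into three cases. If $c \geq d$, then $e \leq d \leq c$ forces $\max\{d,c\}=\max\{e,c\}=c$, so the right-hand side is $0$ while the left-hand side is nonnegative (as $a\geq b$), and the inequality is immediate. If $c \leq e$, then all four values lie above $c$, the maxima strip away nothing, and the claim reduces exactly to the supermodularity relation $a-b\geq d-e$ for $f$. The remaining case $e \leq c \leq d$ is the only delicate one: here the right-hand side equals $d-c$, and since $a\geq d\geq c$ the left-hand side is $a-\max\{b,c\}$; I would further distinguish $b \geq c$, where the left-hand side is $a-b \geq d-e \geq d-c$ using $e\leq c$, from $b < c$, where it is $a-c \geq d-c$ using $a \geq d$. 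In every branch the inequality closes, establishing supermodularity of $g$.

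I expect the main obstacle to be bookkeeping rather than depth: the only substantive step is the middle case, and the essential ingredient that makes it work is the monotonicity hypothesis, which supplies the orderings $a\geq d$ and $b\geq e$ that control how the truncation by $c$ affects the two marginals. Without monotonicity the truncation can shrink the marginal at the smaller set more than at the larger one, and the conclusion fails, so I would be careful to invoke $a \geq d$ and $b \geq e$ explicitly rather than relying on the supermodularity of $f$ alone.
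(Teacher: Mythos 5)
Your proof is correct, and it takes a mildly but genuinely different route from the paper's. The paper does not verify the single-element condition of Definition~\ref{def:submodular} directly; instead it invokes the equivalent lattice characterization $F(A)+F(B) \leq F(A\cup B)+F(A\cap B)$ from \cite{fujishige2005submodular}, uses monotonicity to obtain the chain $f(A\cup B)\leq f(A),f(B)\leq f(A\cap B)$, and then splits into four cases according to where $c$ sits in that chain. You instead work with the paper's stated definition via the four marginal values $a=f(S)$, $b=f(S\cup\{v\})$, $d=f(T)$, $e=f(T\cup\{v\})$, the orderings $a\geq b$, $d\geq e$, $a\geq d$, $b\geq e$, and the relation $a-b\geq d-e$, closing three cases (with two subcases in the middle one) on the position of $c$ relative to $e$ and $d$; I checked each branch and all close, with the boundary overlaps harmless. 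The mathematical content is the same in both proofs --- truncation from below at $c$ preserves supermodularity only in the presence of monotonicity, and in each proof the ``middle'' case is where monotonicity does the real work --- but your version is more self-contained, since it avoids appealing to the equivalence between the marginal and lattice forms of supermodularity, and it sidesteps some notational slips in the paper's appendix (where $c$ and $\alpha$ are used interchangeably and the ordering is written as $f(A\cup B)\leq (f(A),f(B))\leq f(A\cap B)$); what the paper's lattice form buys in exchange is symmetry in $A$ and $B$, which makes its case analysis read cleanly once the chain is established. Your closing remark that monotonicity is essential is also correct: the modular (hence supermodular) function on $V=\{1,2\}$ with $f(\emptyset)=0$, $f(\{1\})=-1$, $f(\{2\})=2$, $f(\{1,2\})=1$ and $c=0$ yields $g(\{1\})+g(\{2\})=2>1=g(\emptyset)+g(\{1,2\})$, so the conclusion genuinely fails without it.
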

A proof of Lemma \ref{lemma:max_supermod} can be found in Appendix \ref{sec:supermod_lemma_proof}.

\section{Leader Selection in Static Networks}
\label{sec:problem_static}
\noindent In this section, we consider the problem of selecting a leader set in order to minimize the system error in static networks, in which the set of links $E$ and the error variances $\nu_{ij}$ do not change over time.  We address this problem for two cases.  In the first case, no more than $k$ agents can act as leaders, and the goal is to choose a set of leaders that minimizes the system error.  In the second case, the system error cannot exceed an upper bound $\alpha$, and the goal is to find the minimum number of leaders, as well as the identities of the leader, such that the system error is less than or equal to $\alpha$.  In both cases, we construct algorithms for leader selection.
\subsection{Case I -- Choosing up to $k$ Leaders to Minimize System Error}
\label{subsec:primal_formulation}

The problem of choosing a set $S$ of $k$ leaders in order to minimize the system error $R(S)$ is given by
\begin{equation}
\label{eq:primal_static_opt}
\begin{array}{cc}
\mbox{minimize} & R(S) \\
\mbox{s.t.} & |S| \leq k
\end{array}
\end{equation}


Note that problem (\ref{eq:primal_static_opt}) always has at least one feasible solution when $k \geq 1$, for example any set consisting of a single node.  In what follows, we prove that the total system error $R(S)$ is a supermodular function of the leader set $S$, leading to efficient algorithms for approximating the optimal solution to (\ref{eq:primal_static_opt}) up to a provable bound.  We first prove that for each agent $u$, $R(S,u)$ is proportional to the commute time of a random walk on the graph $G$ from agent $u$ to the leader set $S$, and then show that the commute time is supermodular.  The supermodularity of $R(S)$ follows as a corollary.





\begin{theorem}
\label{theorem:commute_time}
Define a random walk on the graph $G$ starting at node $u \in V \setminus S$, in which the probability of transitioning from node $i$ to node $j \in N(i)$, denoted $P(i,j)$, is given by
\begin{equation}
\label{eq:probability_distribution}
P(i,j) = \frac{\nu_{ij}^{-1}}{D_{i}}.
\end{equation}
The commute time $\kappa(S,u)$, defined as the expected number of steps for the random walk to reach the leader set $S$ and return to $u$, is proportional to $R(S,u) = (L_{ff}^{-1})_{uu}$.
\end{theorem}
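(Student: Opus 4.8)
The plan is to route the entire argument through the electrical-network interpretation of the random walk. First I would identify the walk with transition probabilities $P(i,j) = \nu_{ij}^{-1}/D_i$ as the natural random walk on the weighted graph $G$ whose edge conductances are $w_{ij} = \nu_{ij}^{-1}$; here the weighted degree of each node $i$ is exactly $D_i$, and the matrix $L$ of (\ref{eq:Laplacian_def}) is precisely the Laplacian of this conductance network. This reframing lets me invoke the resistance--commute-time correspondence of Chandra et al.~\cite{chandra1996electrical}, which states that for such a reversible walk the commute time between two terminals equals $W \cdot R_{\mathrm{eff}}$, where $W = \sum_{i \in V} D_i$ is the total conductance and $R_{\mathrm{eff}}$ is the effective resistance between the terminals.

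To handle a set rather than a single terminal, I would collapse the leader set $S$ into one grounded terminal, so that $\kappa(S,u)$ becomes the commute time between $u$ and this terminal and equals $W \cdot R_{\mathrm{eff}}(u,S)$. The key observation for what follows is that $W$ is determined by the fixed graph alone and is independent of $S$ and $u$, which is exactly the form of proportionality needed for the downstream supermodularity argument (via Lemma \ref{lemma:additive_submod}). The crux is then to identify $R_{\mathrm{eff}}(u,S)$ with the diagonal entry $(L_{ff}^{-1})_{uu}$.

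For that identification I would use the current-injection characterization of effective resistance. Grounding every leader (setting the potential $\phi_j = 0$ for $j \in S$) and injecting a unit current at $u$, Kirchhoff's current law at each follower reads $\sum_{j} w_{ij}(\phi_i - \phi_j) = [\,i = u\,]$, which in block form is exactly $L_{ff}\phi = e_u$. Since $L_{ff}$ is invertible for connected $G$, this gives $\phi = L_{ff}^{-1} e_u$, so the potential at the injection node is $\phi_u = (L_{ff}^{-1})_{uu}$. Because the grounded terminal sits at potential $0$, this potential is by definition $R_{\mathrm{eff}}(u,S)$, yielding $R_{\mathrm{eff}}(u,S) = (L_{ff}^{-1})_{uu}$. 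Combining the two relations gives $\kappa(S,u) = W\,(L_{ff}^{-1})_{uu} = 2W\,R(S,u)$, the desired proportionality with an $S$-independent constant.

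The step I expect to be the main obstacle is the passage from a single terminal to the set $S$: I must verify that collapsing $S$ (equivalently, grounding all its nodes) is faithful both for the walk---so that ``reaching $S$ and returning to $u$'' is genuinely the commute time to the collapsed terminal---and for the resistance computation, where edges internal to $S$ become self-loops that must not alter $R_{\mathrm{eff}}$ nor the total conductance $W$ used in the scaling. A clean way to sidestep this collapsing bookkeeping is a short escape-probability computation that I would use as a cross-check: the fundamental matrix of the walk absorbed on $S$ is $N = (I - P_{ff})^{-1} = L_{ff}^{-1}D_f$, so $N_{uu} = (L_{ff}^{-1})_{uu}\,D_u$ counts the expected visits to $u$ before absorption, and the escape probability is $1/N_{uu}$; matching this against the electrical escape probability $1/(D_u R_{\mathrm{eff}}(u,S))$ re-derives $R_{\mathrm{eff}}(u,S) = (L_{ff}^{-1})_{uu}$ directly, without any collapsing.
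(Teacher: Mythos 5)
Your proposal is correct and lands on the same proportionality constant as the paper ($W=\sum_{i}D_{i}=2\sum_{(s,t)\in E}\nu_{st}^{-1}$), but it takes a genuinely different route. The paper argues probabilistically: it sets up the Dirichlet problem $Lv=J$ (Lemma \ref{lemma:intermediate0}), identifies its solution with the hitting probabilities of the walk via harmonicity and the Maximum Principle (Lemmas \ref{lemma:intermediate1} and \ref{lemma:intermediate2}), concludes that $(L_{ff}^{-1})_{uu}=\left(D_{u}Q(u,S)\right)^{-1}$ where $Q(u,S)$ is the escape probability, and then converts escape probability into commute time via Proposition 2.3 of \cite{lovasz1993random}. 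You instead obtain $R_{\mathrm{eff}}(u,S)=(L_{ff}^{-1})_{uu}$ by a one-line current-injection solve $L_{ff}\phi=e_{u}$ --- compressing the paper's three lemmas into a single linear-algebra step --- and then invoke the commute-time identity of \cite{chandra1996electrical} on the graph with $S$ collapsed; your fundamental-matrix cross-check $N_{uu}=D_{u}(L_{ff}^{-1})_{uu}$, escape probability $1/N_{uu}$, is in fact exactly the paper's Lemmas \ref{lemma:intermediate0}--\ref{lemma:intermediate2} in matrix clothing. The collapsing obstacle you flag is real but symmetric: the paper itself applies Lovász's two-vertex proposition to the set $S$ without comment, and the standard cycle/renewal proof of that proposition extends verbatim to sets, which simultaneously resolves your worry --- keep each internal edge of $S$ as a self-loop at the merged terminal, counted twice in its degree, so that neither $W$ nor $R_{\mathrm{eff}}$ changes, and note that the forward legs of the two commutes coincide since transitions at follower nodes are untouched. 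Alternatively, you can avoid collapsing altogether by combining your own escape-probability computation with the set version of the renewal identity $\kappa(S,u)=\left(\pi_{u}\,Q(u,S)\right)^{-1}=\frac{W}{D_{u}}\,N_{uu}=W\,(L_{ff}^{-1})_{uu}$, which is essentially the paper's route. Net assessment: your argument is sound, more compact on the algebraic side, and makes the $S$-independence of the scaling constant (needed downstream for Lemma \ref{lemma:additive_submod}) explicit; the paper's version is more self-contained at the level of individual vertices but equally reliant on the folklore set extension of the pairwise commute identity.
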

The proof of Theorem \ref{theorem:commute_time} can be found in the appendix.  

\begin{theorem}
\label{theorem:commute_time_supermodular}
The commute time $\kappa(S,u)$ is a nonincreasing supermodular function of $S$.
\end{theorem}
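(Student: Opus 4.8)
The plan is to reduce the claim to a statement about expected numbers of visits of the random walk defined in Theorem~\ref{theorem:commute_time}, and then to prove supermodularity by a first-passage decomposition. Writing $P = I - D^{-1}L$ for the transition matrix, with $D = \mathrm{diag}(D_i)$, the follower block satisfies $L_{ff} = D_f(I - P_{ff})$, so that $L_{ff}^{-1} = (I - P_{ff})^{-1}D_f^{-1} = \left(\sum_{k \ge 0} P_{ff}^{k}\right)D_f^{-1}$. The matrix $\sum_{k\ge 0}P_{ff}^{k}$ is the fundamental matrix of the walk absorbed on $S$, whose $(u,u)$ entry is the expected number of visits $N_{uu}(S)$ to $u$ before the walk first hits $S$. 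Hence $(L_{ff}^{-1})_{uu} = N_{uu}(S)/D_u$, and by Theorem~\ref{theorem:commute_time} the commute time satisfies $\kappa(S,u) = c\,N_{uu}(S)/D_u$ for a constant $c>0$ independent of $S$. Since $D_u$ does not depend on $S$, it suffices to show that $S \mapsto N_{uu}(S)$ is nonincreasing and supermodular.

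Monotonicity follows by coupling: for $S \subseteq T$ a single realization of the walk hits $T$ no later than $S$ (that is, $\tau_T \le \tau_S$ pathwise, where $\tau_A$ denotes the first hitting time of $A$), so the number of visits to $u$ before absorption can only decrease, giving $N_{uu}(T) \le N_{uu}(S)$.

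The heart of the argument is an exact formula for the marginal decrease produced by adjoining a node $v \notin S$. Comparing the visits to $u$ before $\tau_S$ with those before $\tau_{S\cup\{v\}} = \min(\tau_S,\tau_v)$, the two counts agree except on the event $\{\tau_v < \tau_S\}$, where they differ precisely by the visits to $u$ falling in the interval $[\tau_v,\tau_S)$. Applying the strong Markov property at $\tau_v$, at which instant the walk sits at $v$, yields
\begin{equation*}
N_{uu}(S) - N_{uu}(S\cup\{v\}) = P_u(\tau_v < \tau_S)\,N_{vu}(S),
\end{equation*}
where $N_{vu}(S)$ is the expected number of visits to $u$, started from $v$, before hitting $S$. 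Both factors on the right are nonnegative and, by the same coupling, nonincreasing as $S$ grows to $T$: enlarging the target set shrinks the event $\{\tau_v < \tau_\bullet\}$ (since $\tau_T \le \tau_S$) and reduces $N_{vu}(\cdot)$. Their product is therefore nonincreasing in $S$, so that
\begin{equation*}
N_{uu}(S) - N_{uu}(S\cup\{v\}) \ge N_{uu}(T) - N_{uu}(T\cup\{v\})
\end{equation*}
for all $S \subseteq T$ and $v \in V\setminus(T \cup \{u\})$, which is exactly the supermodularity inequality of Definition~\ref{def:submodular}. Scaling by the positive, $S$-independent constant $c/D_u$ transfers both properties to $\kappa(S,u)$.

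I expect the main obstacle to be the bookkeeping in the decomposition step: one must verify that the difference of visit counts is captured exactly by the post-$\tau_v$ excursion and that the strong Markov property cleanly separates the scalar $P_u(\tau_v < \tau_S)$ from the deterministic quantity $N_{vu}(S)$. Once this identity is established, the remainder is routine, and the key structural insight is simply that \emph{both} factors of the marginal gain move in the same (nonincreasing) direction as the leader set enlarges, which is what forces diminishing returns.
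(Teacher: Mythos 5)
Your proof is correct, and it takes a genuinely different route from the paper's, even though the two arguments share a skeleton. The paper works directly with the reach-$S$-and-return time: conditioning on the event $\tau_{j}(A)$ that the walk hits the new node $j$ before the set $A$, it obtains $\kappa(A,u)-\kappa(A\cup\{j\},u)=\mathbf{E}(h_{jAu}-h_{ju})Pr(\tau_{j}(A))$ and must then establish the pathwise comparison $h_{jAu}\geq h_{jBu}$, which costs it a product-of-indicators computation in the main text (and a three-case sample-path analysis in the alternate proof in the appendix), alongside the easy inclusion $\tau_{j}(B)\subseteq\tau_{j}(A)$. You instead change currency to the Green's function: from $L_{ff}=D_{f}(I-P_{ff})$ you get $(L_{ff}^{-1})_{uu}=N_{uu}(S)/D_{u}$, and your exact identity $N_{uu}(S)-N_{uu}(S\cup\{v\})=P_{u}(\tau_{v}<\tau_{S})\,N_{vu}(S)$ follows from the strong Markov property at $\tau_{v}$, since $\{\tau_{v}<\tau_{S}\}$ is determined by the pre-$\tau_{v}$ path, so the scalar factor $N_{vu}(S)$ separates cleanly (I verified the identity, e.g.\ on the path $u$--$v$--$s$ with $S=\{s\}$: $2-1=1\cdot 1$). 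What your route buys is that \emph{both} factors of the marginal decrement are monotone in $S$ by one-line couplings ($\tau_{T}\leq\tau_{S}$ pathwise), so you never need the paper's more delicate comparison of detour times $h_{jAu}$ versus $h_{jBu}$: counting visits to $u$ before absorption shrinks trivially as the absorbing set grows, whereas ``time through $A$ then to $u$'' does not shrink for trivial reasons. A side benefit is that your argument proves supermodularity of $(L_{ff}^{-1})_{uu}$ directly from the linear algebra, so Corollary \ref{theorem:R_supermodular} would follow without routing through Theorem \ref{theorem:commute_time} at all; you only invoke the proportionality to phrase the conclusion for $\kappa$. Two points to tidy, neither a real gap: justify the Neumann series $(I-P_{ff})^{-1}=\sum_{k\geq 0}P_{ff}^{k}$ by noting the absorbed chain is strictly substochastic (spectral radius $<1$) on a connected graph with $S\neq\emptyset$ (the degenerate case $S=\emptyset$, where both sides are infinite, is left implicit in the paper as well), and note that your restriction $v\in V\setminus(T\cup\{u\})$ matches the implicit assumption $j\neq u$ in the paper's own proof.
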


\begin{proof}
The nonincreasing property follows from the fact that, if $S \subseteq T$, then any walk that reaches $S$ and returns to $u$ has also reached $T$ and returned to $u$.  Thus $\kappa(T,u) \leq \kappa(S,u)$.

By Definition \ref{def:submodular}, $\kappa(S,u)$ is a supermodular function of $S$ if and only if, for any sets $A$ and $B$ with $A \subseteq B$ and for any $j \notin B$,
\begin{equation}
\label{eq:commute_supermod}
\kappa(A,u) - \kappa(A \cup \{j\}, u) \geq \kappa(B,u) - \kappa(B \cup \{j\},u).
\end{equation}
Consider the quantity $\kappa(A,u) - \kappa(A \cup \{j\},u)$.  Define $A^{\prime} = A \cup \{j\}$, and define $T_{Au}$ and $T_{A^{\prime}u}$ to be the (random) times for a random walk to reach $A$ (respectively $A^{\prime}$) and return to $u$.  Then by definition, $\kappa(A,u) = \mathbf{E}(T_{Au})$ and $\kappa(A^{\prime},u) = \mathbf{E}(T_{A^{\prime}u})$.  This implies $\kappa(A,u) - \kappa(A^{\prime},u) = \mathbf{E}(T_{Au} - T_{A^{\prime}u})$.

Let $\tau_{j}(A)$ denote the event where the random walk reaches node $j$ before any of the nodes in $A$.  Further, let $h_{jAu}$ be the  time for the random walk to travel from $j$ to $A$ and then to $u$, while $h_{ju}$ is the time to travel directly from $j$ to $u$.  We have
\begin{IEEEeqnarray*}{rCl}
\nonumber
\kappa(A,u) - \kappa(A^{\prime},u) &=& \mathbf{E}(T_{Au} - T_{A^{\prime}u} | \tau_{j}(A))Pr(\tau_{j}(A)) + \mathbf{E}(T_{Au} - T_{A^{\prime}u}|\tau_{j}(A)^{c})Pr(\tau_{j}(A)^{c}) \\
  &=& \mathbf{E}(h_{jAu} - h_{ju})Pr(\tau_{j}(A))
  \end{IEEEeqnarray*}
noting that if the walk reaches $A$ first, then $T_{Au}$ and $T_{A^{\prime}u}$ are equal.  Hence (\ref{eq:commute_supermod}) becomes
\begin{equation}
\label{eq:commute_supermod_equiv}
\mathbf{E}(h_{jAu} - h_{ju})Pr(\tau_{j}(A)) \geq \mathbf{E}(h_{jBu} - h_{ju})Pr(\tau_{j}(B))
\end{equation}
In order to prove (\ref{eq:commute_supermod_equiv}) holds, it suffices to prove that $h_{jAu} \geq h_{jBu}$ and $Pr(\tau_{j}(A)) \geq Pr(\tau_{j}(B))$.  Let $W_{jau}^{k}$ denote the event that, after $k$ steps, a random walk starting at $j$ has either not reached node $a \in A$, or has reached $a$ but has not yet reached $u$. Define $W_{jAu}^{k} = \cap_{a \in A}{W_{jau}^{k}}$.  Let $I(W_{jAu}^{k})$ denote the indicator function of the set $W_{jAu}$, and note that
 \begin{equation}
 I(W_{jAu}^{k}) = \prod_{a \in A}{I(W_{jau}^{k})}
 \end{equation}
 by the observations above and the definition of indicator functions.
 $h_{jAu}$ can  be rewritten as
  \begin{IEEEeqnarray*}{rCl}
 h_{jAu} &=& \min{\{k: I(W_{jAu}^{k}) = 0\}} =: k^{\ast} \\
  \label{eq:first_step}
  &\stackrel{(a)}{=}& \sum_{k=0}^{k^{\ast}-1}{I(W_{jAu}^{k})}
  \stackrel{(b)}{=} \sum_{k=0}^{k^{\ast}-1}{I(W_{jAu}^{k})} + \sum_{k=k^{\ast}}^{\infty}{I(W_{jAu}^{k})} \\
  &=& \sum_{k=0}^{\infty}{\prod_{a \in A}{I(W_{jau}^{k})}}
  \stackrel{(c)}{\geq} \sum_{k=0}^{\infty}{\left(\prod_{a \in A}{I(W_{jau}^{k})}\prod_{b \in B \setminus A}{I(W_{jbu}^{k})}\right)} \\
  &=& \sum_{k=0}^{\infty}{\prod_{b \in B}{I(W_{jbu}^{k})}} = \sum_{k=0}^{\infty}{I(W_{jBu}^{k})}= h_{jBu}
 \end{IEEEeqnarray*}
 where (a) follows from the definition of $I(W_{jAu}^{k})$, (b) follows from the fact that $I(W_{jAu}^{k}) = 0$ for $k \geq k^{\ast}$, and  (c) follows from the fact that $\left(\bigcap_{a \in A}{W_{jau}^{k}}\right) \cap \left(\bigcap_{b \in B \setminus A}{W_{jbu}^{k}}\right) \subseteq \bigcap_{a \in A}{W_{jau}^{k}}$.

 In order to show that $Pr(\tau_{j}(A)) \geq Pr(\tau_{j}(B))$, first let $\tau_{j}(v)$ denote the event that a random walk reaches node $j \neq v$ before $v$. Hence,
 \begin{displaymath}
 \tau_{j}(B) = \bigcap_{v \in B}{\tau_{j}(v)}
  = \left(\bigcap_{v \in A}{\tau_{j}(v)}\right) \cap \left(\bigcap_{v \in B\setminus A}{\tau_{j}(v)}\right) \subseteq \bigcap_{v \in A}{\tau_{j}(v)} = \tau_{j}(A)
  \end{displaymath}

  $\tau_{j}(B) \subseteq \tau_{j}(A)$ then implies that $Pr(\tau_{j}(B)) \leq Pr(\tau_{j}(A))$, as desired.
 Hence (\ref{eq:commute_supermod_equiv}) holds, thus proving that the commute time is supermodular as a function of the leader set $S$.
\end{proof}

An alternate proof of Theorem \ref{theorem:commute_time_supermodular} can be found in Appendix \ref{sec:alternate_proof_main_theorem}.


\begin{corollary}
\label{theorem:R_supermodular}
$R(S)$ is a nonincreasing supermodular function of the leader set, $S$.
\end{corollary}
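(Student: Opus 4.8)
The plan is to realize $R(S)$ as a nonnegative sum of per-node error functions, each individually nonincreasing and supermodular, and then invoke Lemma \ref{lemma:additive_submod}. By Theorem \ref{theorem:error_Laplacian} we have $R(S) = \sum_{u \in V \setminus S} R(S,u)$, and by Theorem \ref{theorem:commute_time} each term satisfies $R(S,u) = c\,\kappa(S,u)$ for a fixed proportionality constant $c > 0$. Theorem \ref{theorem:commute_time_supermodular} establishes that $\kappa(\cdot, u)$ is nonincreasing and supermodular, so for each fixed $u$ the scaled function $R(\cdot, u)$ inherits both properties on the collection of sets not containing $u$. It would then be tempting to apply Lemma \ref{lemma:additive_submod} directly.

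The main obstacle is that the summation index $V \setminus S$ itself depends on $S$, whereas Lemma \ref{lemma:additive_submod} applies to a fixed finite family of functions, each defined on all of $2^V$. I would remove this difficulty by extending each per-node error to the whole power set: define $g_u : 2^V \to \mathbb{R}$ by $g_u(S) = c\,\kappa(S,u)$ when $u \notin S$ and $g_u(S) = 0$ when $u \in S$. Since a node that has been made a leader contributes no follower error, this extension is the natural one and gives $R(S) = \sum_{u \in V} g_u(S)$, now a sum over the fixed index set $V$.

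It then remains to check that each $g_u$ is nonincreasing and supermodular over $2^V$, which I would do by a short case analysis on the membership of $u$ and of the added element $v$. For $S \subseteq T$ and $v \in V \setminus T$ the supermodularity inequality $g_u(S) - g_u(S \cup \{v\}) \geq g_u(T) - g_u(T \cup \{v\})$ splits as follows: when $u \notin T$ and $v \neq u$ all four values are commute times and the inequality is exactly Theorem \ref{theorem:commute_time_supermodular}; when adding $v = u$ turns $u$ into a leader, or when $u \in T \setminus S$, the right-hand side (or both sides) collapse to zero and the claim reduces to the nonincreasing property $\kappa(S,u) \geq \kappa(S \cup \{v\}, u)$; and when $u \in S$ every term vanishes. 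The analogous case split shows each $g_u$ is nonincreasing. With all $g_u$ nonincreasing and supermodular, Lemma \ref{lemma:additive_submod} applied with unit (hence nonnegative) weights yields that $R(S) = \sum_{u \in V} g_u(S)$ is supermodular, and a sum of nonincreasing functions is nonincreasing, completing the proof. I expect the bookkeeping in the case analysis — in particular keeping track of when $v = u$ and when $u$ migrates from follower to leader — to be the only delicate point; everything else follows mechanically from the two cited theorems and Lemma \ref{lemma:additive_submod}.
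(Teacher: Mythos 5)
Your proposal is correct and takes essentially the same route as the paper's own proof: decompose $R(S)$ via Theorem \ref{theorem:error_Laplacian} into per-node terms proportional to commute times (Theorem \ref{theorem:commute_time}), apply Theorem \ref{theorem:commute_time_supermodular} to each term, and sum using Lemma \ref{lemma:additive_submod}. The only difference is your explicit extension of each per-node error to all of $2^{V}$ (setting $g_{u}(S)=0$ when $u \in S$) together with the case analysis on $v=u$ and $u \in T \setminus S$ — a step the paper's proof silently glosses over, since its summation index $V \setminus S$ depends on $S$ while Lemma \ref{lemma:additive_submod} requires a fixed family of functions on $2^{V}$; your bookkeeping patches this gap correctly (each case reduces to either Theorem \ref{theorem:commute_time_supermodular}, the nonincreasing property, or nonnegativity of the commute time) and is a rigor improvement rather than a detour.
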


\begin{proof}
By Theorem \ref{theorem:commute_time_supermodular}, $\kappa(S,u)$ is supermodular as a function of $S$.  Since $R(S,u) = (L_{f}^{-1})_{uu}$ is proportional to $\kappa(S,u)$, $(L_{f}^{-1})_{uu}$ is supermodular as a function of $S$ as well.  By Theorem \ref{theorem:error_Laplacian}, $R(S) = \sum_{u \in V \setminus S}{(L_{f}^{-1})_{uu}}$,  hence $R(S)$ is a sum of supermodular functions.  $R(S)$ is therefore supermodular by Lemma \ref{lemma:additive_submod}.
\end{proof}

 Corollary \ref{theorem:R_supermodular} implies that the problem of selecting up to $k$ leader agents in order to minimize the system error for a static network (\ref{eq:primal_static_opt}) is a supermodular optimization problem.
Although supermodular optimization problems of this form are NP-hard in general, a greedy algorithm will return a set $S^{\ast}$ such that $R(S^{\ast})$ is within a factor of $(1-1/e)$ of the optimum value, denoted $R^{\ast}$~\cite{nemhauser1978analysis}.

We now present a greedy algorithm for selecting a set of $k$ leaders for a static network topology, which is an approximate solution to (\ref{eq:primal_static_opt}).
Let $S_{i}^{\ast}$ denote the set of leader agents at the $i$-th iteration of the algorithm.  $S_{0}^{\ast}$ is initialized to $\emptyset$.  At the $i$-th iteration of the algorithm, the element $s_{i}^{\ast} \in V$ is found such that $\{R(S_{i-1}^{\ast}) - R(S_{i-1}^{\ast} \cup \{s_{i}^{\ast}\})\}$ is maximized.  $S_{i}^{\ast}$ is then updated to $(S_{i-1}^{\ast} \cup \{s_{i}^{\ast}\})$.  The algorithm terminates when either $R(S_{i}^{\ast}) = R(S_{i}^{\ast} \cup \{j\})$ for all $j$, or when $i=k$ (i.e., when the number of leaders is equal to $k$), whichever condition is reached first.  A pseudocode description of the algorithm is given as algorithm \textbf{static}-$k$.

\begin{figure}[ht]
\centering
\renewcommand{\arraystretch}{0.6}
\begin{tabular}{lr}
\hline
\small{\textbf{Algorithm static-$k$}: Algorithm for choosing up to $k$ leaders }& \\
\hline
\small{\textbf{Input:} $G=(V,E)$, link error variances $\nu_{ij}$}  & \\
~~~~~~~\small{Maximum number of leader nodes $k$} &\\
\small{\textbf{Output:} Set of leader nodes $S^{\ast}$} &\\
\small{\textbf{Initialization:} $S^{\ast} \leftarrow \emptyset$, $i \leftarrow 0$} & \small{1}\\
\small{\textbf{while} $i \leq k$} & \small{2}\\
\small{~~~~$s_{i}^{\ast} \leftarrow \arg\max_{j \in V \setminus S}{\{R(S^{\ast}) - R(S^{\ast} \cup \{j\})\}}$} & \small{3}\\
~~~~\small{\textbf{if} $R(S^{\ast}) - R(S^{\ast} \cup \{s_{i}^{\ast}\})\} \leq  0$} & \small{4}\\
~~~~~~~~\small{\textbf{return} $S^{\ast}$; \textbf{exit}} & \small{5}\\
~~~~\small{\textbf{else}} & \small{6}\\
~~~~~~~~\small{$S^{\ast} \leftarrow S^{\ast} \cup \{s_{i}^{\ast}\}$} & \small{7}\\
~~~~~~~~\small{$i \leftarrow i+1$} & \small{8}\\
~~~~\small{\textbf{end}} & \small{9}\\
\small{\textbf{end}} & \small{10}\\
\small{\textbf{return} $S^{\ast}$; \textbf{exit}} & \small{11}\\
\hline
\end{tabular}
\label{algorithm:primal_static}
\end{figure}

The following theorem gives a bound on the performance of \textbf{static-$k$}, making use of the fact that (\ref{eq:primal_static_opt}) is a supermodular optimization problem.

\begin{theorem}
\label{theorem:primal_static_opt_bound}
Define $R_{max}$ by
$R_{max} \triangleq \max_{i}{R(\{i\})}$,
which is the worst-case error when a single agent is chosen as a leader, and let $R^{\ast}$ be the optimal value of (\ref{eq:primal_static_opt}).
Then the algorithm \textbf{static}-$k$ terminates in polynomial time and returns a set $S^{\ast}$ satisfying
\begin{displaymath}
R(S^{\ast}) \leq \left(1 - \left(\frac{k-1}{k}\right)^{k}\right)R^{\ast} + \frac{1}{e}R_{max}
\approx \left(1 - \frac{1}{e}\right)R^{\ast} + \frac{1}{e}R_{max}
\end{displaymath}
\end{theorem}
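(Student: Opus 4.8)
The plan is to recognize Theorem~\ref{theorem:primal_static_opt_bound} as an instance of the greedy performance guarantee for maximizing a monotone submodular function, in the spirit of Nemhauser--Wolsey--Fisher, but adapted both to the fact that we are \emph{minimizing} the nonincreasing supermodular function $R$ of Corollary~\ref{theorem:R_supermodular} and to the complication that $R(\emptyset)$ is infinite. Let $O$ be an optimal solution of (\ref{eq:primal_static_opt}), so $|O| \leq k$ and $R(O) = R^{\ast}$, and let $S_{0} = \emptyset, S_{1}, S_{2}, \ldots$ be the leader sets produced by \textbf{static}-$k$, where $S_{m} = S_{m-1} \cup \{s_{m}^{\ast}\}$. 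The key estimate I would establish is that, for every nonempty greedy set $S_{m-1}$,
\begin{displaymath}
R(S_{m-1}) - R^{\ast} \leq k\left(R(S_{m-1}) - R(S_{m})\right).
\end{displaymath}
I would derive this by first writing $R(S_{m-1}) - R(S_{m-1} \cup O) \leq \sum_{o \in O}\left(R(S_{m-1}) - R(S_{m-1} \cup \{o\})\right)$, the standard consequence of the supermodularity of $R$; then using monotonicity (Corollary~\ref{theorem:R_supermodular}) to replace $R(S_{m-1} \cup O)$ on the left by the no-larger quantity $R(O) = R^{\ast}$; and finally using that the greedy rule in line~3 selects the element of largest marginal decrease, so each of the at most $k$ summands on the right is bounded by $R(S_{m-1}) - R(S_{m})$.

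Writing $d_{m} = R(S_{m}) - R^{\ast} \geq 0$, the displayed inequality rearranges to the geometric contraction $d_{m} \leq (1 - 1/k)\,d_{m-1}$, valid at every step for which $S_{m-1} \neq \emptyset$. \textbf{The main obstacle is anchoring this recursion at its first usable step.} Since the empty leader set leaves $L_{ff}$ equal to the full (and hence singular) graph Laplacian, $R(\emptyset)$ is infinite, so the contraction cannot be applied at $S_{0} = \emptyset$; the textbook Nemhauser--Wolsey--Fisher argument, which anchors at $\emptyset$ with a finite value and yields a clean $(1 - 1/e)$ factor, is therefore unavailable, and this is exactly what forces the extra additive $R_{max}$ term. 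I would circumvent the obstacle by anchoring instead at the first singleton $S_{1}$: for any nonempty $S$ and any $\{i\} \subseteq S$, monotonicity gives $R(S) \leq R(\{i\}) \leq R_{max}$, so in particular $d_{1} = R(S_{1}) - R^{\ast} \leq R_{max} - R^{\ast}$. Iterating the contraction across the algorithm's subsequent refinement steps multiplies this initial gap by $(1 - 1/k)$ at each step, giving $R(S^{\ast}) - R^{\ast} \leq (\tfrac{k-1}{k})^{k}(R_{max} - R^{\ast})$. Rearranging yields $R(S^{\ast}) \leq (1 - (\tfrac{k-1}{k})^{k})R^{\ast} + (\tfrac{k-1}{k})^{k}R_{max}$, and bounding $(\tfrac{k-1}{k})^{k} \leq 1/e$ on the final term, together with $(\tfrac{k-1}{k})^{k} \to 1/e$, produces both the stated inequality and its approximation.

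For the polynomial-time claim, I would argue that each iteration evaluates $R(S^{\ast} \cup \{j\})$ over the candidates $j \in V \setminus S^{\ast}$, and by the remark following Theorem~\ref{theorem:error_Laplacian} each evaluation costs $O(n^{3})$ to invert $L_{ff}$; with $O(n)$ candidates per round and at most $k \leq n$ rounds, the total cost is $O(kn^{4})$, which is polynomial. I would finally note that the early-termination test in line~4 only stops the algorithm once no admissible element yields a positive marginal decrease, which can only leave the gap $d$ smaller than the worst case analyzed above; hence it does not weaken the guarantee.
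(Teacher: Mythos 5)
Your proposal takes a genuinely different route from the paper. The paper's proof is a black-box reduction: it sets $f(S) = R_{max} - R(S)$, invokes Theorem 9.3 of Wolsey and Nemhauser for greedy maximization of a nonnegative nondecreasing submodular function under a cardinality constraint, rearranges, and then counts operations; your runtime and early-termination analyses match the paper's (the same $O(n)$ evaluations per round, $O(n^{3})$ per inversion of $L_{ff}$, $O(kn^{4})$ total). You instead reprove the Nemhauser--Wolsey--Fisher guarantee from scratch via the per-step estimate $R(S_{m-1}) - R^{\ast} \leq k\left(R(S_{m-1}) - R(S_{m})\right)$, which you derive correctly from supermodularity (telescoping over $O$), monotonicity, and the greedy rule. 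What your route buys is a real observation that the paper silently elides: $R(\emptyset)$ is infinite (with no leaders, $L_{ff}$ is the full, singular Laplacian), so $f(\emptyset)$ is not nonnegative and the cited theorem, whose standard proof anchors the contraction at $\emptyset$, does not apply verbatim. Anchoring at the first singleton is the right repair, and your remark that early termination only helps is justified by your own key estimate (a nonpositive best marginal forces $R(S^{\ast}) \leq R^{\ast}$).

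There is, however, a concrete off-by-one in your iteration count. Anchoring at $S_{1}$ means the contraction $d_{m} \leq (1-1/k)\,d_{m-1}$ is available only for $m = 2, \ldots, k$, i.e., $k-1$ times, so your recursion yields $d_{k} \leq \left(\frac{k-1}{k}\right)^{k-1}(R_{max} - R^{\ast})$, not $\left(\frac{k-1}{k}\right)^{k}(R_{max} - R^{\ast})$ as you assert. This matters for the final display: $\left(\frac{k-1}{k}\right)^{k-1} > 1/e$ for every finite $k$ (it decreases to $1/e$ from above), so the coefficient $1/e$ on $R_{max}$ in the theorem statement does not follow from what you proved; your argument actually delivers $R(S^{\ast}) \leq \left(1 - \left(\frac{k-1}{k}\right)^{k-1}\right)R^{\ast} + \left(\frac{k-1}{k}\right)^{k-1}R_{max}$, which is strictly weaker for finite $k$ (for $k=2$, the coefficient on $R_{max}$ is $1/2$ versus the claimed $1/e$) and coincides with the stated bound only as $k \to \infty$. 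Nor can the gap be patched by the convention $R(\emptyset) := R_{max}$: the first-step contraction it would require, $R(S_{1}) - R^{\ast} \leq (1-1/k)(R_{max} - R^{\ast})$, fails in any symmetric network where every singleton gives the same error $R_{max}$ while $k$ leaders give strictly smaller error. So either supply a genuine first-step inequality or state the conclusion with exponent $k-1$; as written, your last display claims one more factor of $(1-1/k)$ than the recursion provides. It is worth noting that this subtlety equally affects the paper's own proof, which asserts nonnegativity of $f$ without addressing $S = \emptyset$.
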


\begin{proof}
By Theorem 9.3 of \cite[Ch III.3.9]{wolsey1999integer}, for any nonnegative monotone nondecreasing submodular function $f(S)$, the greedy algorithm returns a set $S^{\ast}$ satisfying $f(S^{\ast}) \geq (1 - 1/e)f^{\ast}$, where $f^{\ast}$ is the optimal value of the optimization problem
\begin{equation}
\label{eq:general_submod}
\begin{array}{cc}
\mbox{maximize} & f(S) \\
\mbox{s.t.} & |S| \leq k
\end{array}
\end{equation}
Since $R(S)$ is nonincreasing and supermodular by Corollary \ref{theorem:R_supermodular}, the function $f(S) = R_{max} - R(S)$ is nonnegative,
nondecreasing, and submodular.  Maximizing $f(S)$ is thus equivalent to minimizing $R(S)$.  Hence, the set $S^{\ast}$ returned by the greedy algorithm satisfies
$f(S^{\ast}) \geq \left(1 - \frac{1}{e}\right)f^{\ast}$.
Substituting the definition of $f(S)$ yields
$R_{max} - R(S^{\ast}) \geq \left(1 - \frac{1}{e}\right)(R_{max} - R^{\ast})$
and rearranging terms gives the desired result.

Algorithm \textbf{static}-$k$ requires $k$ iterations.  At each iteration, the function $R(S)$ is evaluated $O(n)$ times. Since each evaluation of $R(S)$ involves inverting the Laplacian matrix, which requires $O(n^{3})$ operations (and can be reduced to $O(n^{2})$ operations, with some loss in computation accuracy, if the Laplacian matrix is sparse~\cite{patterson2010leader}), the total runtime is $O(kn^{4})$.
\end{proof}

In \cite{nemhauser1978best}, it was shown that no polynomial-time algorithm that improves on the approximation bound $\left(1 - \frac{1}{e}\right)$ for an arbitrary problem of the for (\ref{eq:general_submod}) unless $P=NP$.  There may, however, be additional structure to $R(S)$ other than supermodularity that may improve the guarantees of Theorem \ref{theorem:primal_static_opt_bound}.  This remains an open problem.


\subsection{Case II -- Choosing the Minimum-Size Leader Set to Achieve an Error Bound}
\label{subsec:static_dual}

When the system is required to operate below a given error bound, denoted $\alpha$, the problem of choosing a minimal set of leaders that achieves this bound can be stated as

\begin{equation}
\label{eq:dual_opt}
\begin{array}{cc}
\mbox{minimize} & |S| \\
\mbox{s.t.} & R(S) \leq \alpha
\end{array}
\end{equation}

Note that, for any $\alpha \geq 0$, there exists at least one $S$ meeting the condition $R(S) \leq \alpha$, namely the set $S = V$.

The supermodularity of $R(S)$ enables an efficient approximate solution of (\ref{eq:dual_opt}) by a greedy algorithm, given as follows.  The set of leaders is initialized to $S_{0}^{\ast} = \emptyset$.  As with the leader selection algorithm \textbf{static}-$k$, the node $s_{i}^{\ast}$ that maximizes $\{R(S_{i-1}^{\ast}) - R(S_{i-1}^{\ast} \cup \{s_{i}^{\ast}\})\}$ is added at the $i$-th iteration, so that $S_{i}^{\ast} = S_{i-1}^{\ast} \cup \{s_{i}^{\ast}\}$.  The algorithm terminates when $R(S_{i}^{\ast}) \leq \alpha$ and returns the set $S^{\ast} = S_{i}^{\ast}$.  A pseudocode description of the algorithm is given as algorithm \textbf{static}-$\alpha$.


\begin{figure}[ht]
\centering
\renewcommand{\arraystretch}{0.6}
\begin{tabular}{lr}
\hline
\small{\textbf{Algorithm static-$\alpha$}: Algorithm for choosing the minimum-size} &\\
\small{set of leaders to achieve an error bound $\alpha$} & \\
\hline
\small{\textbf{Input:} $G=(V,E)$, link error variances $\nu_{ij}$} & \\
~~~~~~~\small{Error at termination $\alpha$} & \\
\small{\textbf{Output:} Set of leader nodes $S^{\ast}$} & \\
\small{\textbf{Initialization:} $S^{\ast} \leftarrow \emptyset$, $error \leftarrow \alpha+1$} & \small{1}\\
\small{\textbf{while} $error > \alpha$} & \small{2}\\
~~~~\small{$s^{\ast} \leftarrow \arg\max_{j \in V \setminus S}{R(S^{\ast}) - R(S^{\ast} \cup \{j\})\}}$} & \small{3}\\
~~~~\small{\textbf{if} $(R(S^{\ast}) - R(S^{\ast} \cup \{s^{\ast}\})) \leq 0$} & \small{4}\\
~~~~~~~~\small{\textbf{return} $S^{\ast}$; \textbf{exit}} & \small{5}\\
~~~~\small{\textbf{else}} & \small{6}\\
~~~~~~~~\small{$S^{\ast} \leftarrow S^{\ast} \cup \{s^{\ast}\}$} & \small{7}\\
~~~~~~~~\small{$error \leftarrow R(S^{\ast})$} & \small{8}\\
~~~~\small{\textbf{end}} & \small{9}\\
\small{\textbf{end}} & \small{10}\\
\small{\textbf{return} $S^{\ast}$; \textbf{exit}} & \small{11}\\
\hline
\end{tabular}
\label{algorithm:dual_static}
\end{figure}

The following theorem gives bounds on the optimality of the set $S^{\ast}$ returned by \textbf{static-$\alpha$}.
\begin{theorem}
\label{theorem:dual_bound}
Let $k^{\ast}$ be the smallest integer such that a set $S$ exists with $|S| = k$ and $R(S) \leq \alpha$ (i.e., $k^{\ast}$ is the optimal value of (\ref{eq:dual_opt})).
 The algorithm \textbf{static}-$\alpha$ terminates in polynomial time in $n$.  If the algorithm terminates after step $k$, so that $|S^{\ast}| = k$, then
\begin{displaymath}
\frac{k}{k^{\ast}} \leq 1 + \log{\left\{ \frac{R_{max}}{R(S_{k-1})}\right\}}
\end{displaymath}
holds, where
$R_{max}$ is as defined in Theorem \ref{theorem:primal_static_opt_bound}.
\end{theorem}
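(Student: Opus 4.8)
The plan is to recognize problem (\ref{eq:dual_opt}) as a \emph{submodular covering} problem and to adapt the classical greedy analysis for such problems (in the spirit of the covering bounds in \cite{wolsey1999integer}) to our setting. Since $R$ is nonincreasing and supermodular by Corollary \ref{theorem:R_supermodular}, the function $f(S) = R_{max} - R(S)$ is nonnegative, nondecreasing, and submodular, and meeting the constraint $R(S) \le \alpha$ is equivalent to raising $f$ to the threshold $R_{max} - \alpha$. The rule used by \textbf{static}-$\alpha$ adds at each iteration the element of largest marginal increase of $f$ (equivalently, the largest single-element decrease of $R$), which is exactly the greedy rule analyzed in the submodular covering literature.

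The heart of the argument is a per-step progress lemma. Let $O$ be an optimal solution of (\ref{eq:dual_opt}) with $|O| = k^{\ast}$, and let $S_i$ denote the greedy iterate after $i$ steps. I would show that at every iteration $i$ with $S_{i-1} \neq \emptyset$,
\begin{equation*}
R(S_{i-1}) - R(S_i) \;\ge\; \frac{1}{k^{\ast}}\bigl(R(S_{i-1}) - \alpha\bigr).
\end{equation*}
The proof applies the marginal decomposition of supermodularity to the sets $S_{i-1}$ and $S_{i-1}\cup O$: ordering the elements of $O$ and telescoping the marginals, supermodularity gives $R(S_{i-1}) - R(S_{i-1}\cup O) \le \sum_{j\in O}\bigl(R(S_{i-1}) - R(S_{i-1}\cup\{j\})\bigr) \le k^{\ast}\bigl(R(S_{i-1}) - R(S_i)\bigr)$, where the last inequality holds because the greedy choice maximizes the single-element decrease. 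Since $R$ is nonincreasing and $R(O) \le \alpha$, we have $R(S_{i-1}\cup O) \le \alpha$, which turns the left-hand side into $R(S_{i-1}) - \alpha$ and establishes the lemma.

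Writing $\delta_i = R(S_i) - \alpha$ for the residual error above the threshold, the lemma rearranges to the geometric contraction $\delta_i \le \bigl(1 - 1/k^{\ast}\bigr)\delta_{i-1}$. Iterating this from the first nonempty iterate $S_1$ up to the last iteration $S_{k-1}$ before termination, taking logarithms, and using $\ln(1 - 1/k^{\ast}) \le -1/k^{\ast}$ converts the contraction into a linear bound on the step count $k$ of the form $k/k^{\ast} \le 1 + \ln(\cdot)$. The numerator of the logarithm is supplied by the starting residual, which I bound by $R_{max} = \max_i R(\{i\})$ (an upper bound on $R(S_1)$, the error once a single leader has been placed), and the denominator by the terminal residual $R(S_{k-1})$; this yields the stated bound. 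Polynomial termination follows from the same lemma: whenever $R(S_{i-1}) > \alpha$ the guaranteed marginal decrease is strictly positive, so the algorithm never stalls before meeting the bound and halts in at most $n$ iterations, each requiring $O(n)$ evaluations of $R$ at $O(n^{3})$ apiece.

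The main obstacle is the boundary bookkeeping at the two ends of the recursion rather than the contraction itself. At the low end, $R(\emptyset)$ is infinite because $L_{ff}$ is singular with no leaders, so the contraction cannot be seeded at $S_0 = \emptyset$; the argument must instead start at the first chosen leader and absorb that step into the constant, which is precisely where $R_{max}$ rather than $R(\emptyset)$ enters. At the high end, pinning the logarithm's denominator to $R(S_{k-1})$ and recovering the sharp leading constant $1$ requires careful treatment of the final step, as in the refined covering analysis; in particular, reconciling the residual $R(S_{k-1}) - \alpha$ produced by the contraction with the printed denominator $R(S_{k-1})$ is the delicate point that must be verified to land the bound exactly as stated.
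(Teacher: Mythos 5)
Your core machinery is correct and is, in fact, exactly the engine inside the result the paper invokes: the paper's proof of Theorem \ref{theorem:dual_bound} is a short appeal to Theorem 9.4 of \cite{wolsey1999integer} applied to the nondecreasing submodular function $f(S) = R_{max} - R(S)$, whereas you re-derive that theorem's per-step progress lemma from scratch via the telescoping submodular decomposition $R(S_{i-1}) - R(S_{i-1} \cup O) \leq \sum_{j \in O}\bigl(R(S_{i-1}) - R(S_{i-1} \cup \{j\})\bigr)$, the greedy choice, and monotonicity ($R(S_{i-1} \cup O) \leq R(O) \leq \alpha$). The lemma, the contraction $\delta_i \leq (1 - 1/k^{\ast})\delta_{i-1}$, and your termination and $O(n^{5})$ runtime count are all sound, and your observation that the recursion cannot be seeded at $S_0 = \emptyset$ (since $L_{ff}$ is singular there, so $R(\emptyset)$ is unbounded) is a real point the paper glosses over.

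The gap is the endgame that you yourself flag, and it is not mere bookkeeping: as set up, your argument cannot produce the printed bound. Your contraction controls the \emph{residual} $R(S_i) - \alpha$, so iterating from $S_1$ yields an inequality of the form $(k-2)/k^{\ast} \leq \ln\bigl((R(S_1)-\alpha)/(R(S_{k-1})-\alpha)\bigr)$, i.e., additive constant $2$ and a logarithm whose argument is at best $(R_{max}-\alpha)/(R(S_{k-1})-\alpha)$. Whenever $\alpha > 0$ this argument is \emph{strictly larger} than $R_{max}/R(S_{k-1})$ (cross-multiplying, $(a-\alpha)/(b-\alpha) \geq a/b$ for $a \geq b > \alpha \geq 0$), so the residual route proves a strictly weaker statement than the theorem claims; no refinement of the final step converts the denominator $R(S_{k-1}) - \alpha$ into $R(S_{k-1})$. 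The paper reaches the printed form only by taking Wolsey's bound $1 + \log\{(f(V)-f(\emptyset))/(f(V)-f(S_{k-1}))\}$ and identifying the numerator with $R_{max}$ and the denominator with $R(S_{k-1})$ --- an identification that is itself loose, since $f(V) = R_{max}$ (because $R(V)=0$) while $f(\emptyset)$ involves the unbounded $R(\emptyset)$, so the paper's chain of equalities implicitly assumes $f(V)=0$. In short, you have correctly located the soft joint of the paper's own argument, but a complete proof must resolve it --- either by carrying out the covering analysis on the truncated function $\max\{R(S), \alpha\}$ and accepting a residual-form bound, or by justifying the paper's identification directly --- rather than leaving it as a point ``to be verified.''
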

\begin{proof}
Theorem 9.4 of \cite[Ch III.3.9]{wolsey1999integer} states that the greedy algorithm for solving problems of the form
\begin{equation}
\label{eq:generic_dual_form}
\begin{array}{cc}
\mbox{minimize}_{ \ S \subseteq V} & |S| \\
\mbox{s.t.} & f(S) \geq \lambda
\end{array}
\end{equation}
where $f(S)$ is a nondecreasing submodular function returns a set $S^{\ast}$ satisfying
\begin{displaymath}
\frac{|S^{\ast}|}{|S^{\prime}|} \leq 1 + \log{\left\{\frac{f(V) - f(\emptyset)}{f(V) - f(S_{k-1})}\right\}},
\end{displaymath}
where $S^{\prime}$ is the optimal solution to (\ref{eq:generic_dual_form}) and $S_{k-1}$ is the set obtained at the $(k-1)$-th iteration of the greedy algorithm.  Letting $f(S) = R_{max} - R(S)$, we have that $f(S)$ is a nondecreasing submodular function of $S$.  Since the greedy algorithm for optimizing $f(S)$ is equivalent to optimizing $R(S)$, we have
\begin{displaymath}
\frac{|S^{\ast}|}{|S^{\prime}|} \leq 1 + \log{\left\{\frac{f(V) - f(\emptyset)}{f(V) - f(S_{k-1})}\right\}}
 = 1 + \log{\left\{\frac{-f(\emptyset)}{-f(S_{k-1})}\right\}}
 = 1 + \log{\left\{\frac{R_{max}}{f(S_{k-1})}\right\}}
 \end{displaymath}
In the worst case, the algorithm will not terminate until $S = V$, i.e., after $n$ iterations.  This will require $O(n^{2})$ evaluations of $R$, each of which requires $O(n^{3})$ computations, for a total runtime of $O(n^{5})$.
\end{proof} 
\section{Leader Selection in Dynamic Networks}
\label{sec:problem_dynamic}
\noindent Multi-agent systems may undergo changes in topology or link noise characteristics for three reasons.  First, the agents may experience link or device failures~\cite{hatano2005agreement}.  Second, the MAS may switch between prespecified topologies~\cite{olfati2004consensus}.  Third, the topology may vary arbitrarily over time due to agent mobility; for example, agents may change positions in order to avoid an obstacle, affecting both the set of links $E$ and the set of error variances $\nu_{ij}$~\cite{olfati2006flocking}.  Under each case, the optimal set of leaders may be different from that of a static network.  In this section, we study leader selection for each of these dynamic networks.
\subsection{Leader Selection Under Random Link Failures}
\label{subsec:random_topology}
Random topology changes may occur due to link failures, which are reflected in the weighted Laplacian matrix $L$ of (\ref{eq:Laplacian_def}).
Since the set of links may not be known in advance under these circumstances, leaders can  be selected to minimize the expected system error based on  the distribution of possible weighted Laplacians.

Let $\mathcal{L}$ denote the set of possible weighted Laplacians.  Define $\pi$ to be a probability distribution on $\mathcal{L}$, so that $\pi(L)$ is  the probability that the Laplacian is $L \in \mathcal{L}$.  An example distribution is the \emph{random link failure} model, in which each link in an underlying link set $E$ fails independently with equal probability $p$.  The expected system error is defined by
\begin{displaymath}
\mathbf{E}_{\pi}(R(S)) = \sum_{L \in \mathcal{L}}{R(S|L)\pi(L)},
\end{displaymath}
where $R(S|L)$ denotes the system error when the leader set is $S$ and the Laplacian is $L$.

The problems of (a) choosing a set of $k$ leaders to minimize the expected error $\mathbf{E}_{\pi}(R(S))$ and (b) choosing the smallest possible leader set $S$ such that the error is within an upper bound $\alpha$ are formulated as
\begin{equation}
\label{eq:random_failures_opt}
\begin{array}{c|c}
\mbox{\textbf{Choosing up to $k$ leaders}} & \mbox{\textbf{Choosing minimum number of leaders}} \\
\begin{array}{ccl}
\mbox{minimize} & & \mathbf{E}_{\pi}(R(S)) \\
\mbox{s.t.} & & |S| \leq k
\end{array}
&
\begin{array}{ccl}
\mbox{minimize} & & |S| \\
\mbox{s.t.} & & \mathbf{E}_{\pi}(R(S)) \leq \alpha
\end{array}
 \\
(a) & (b)
\end{array}
\end{equation}

In order to solve (\ref{eq:random_failures_opt}a) and (\ref{eq:random_failures_opt}b), it is necessary to compute $\mathbf{E}_{\pi}(R(S))$ for a given distribution $\pi$ and leader set $S$. The summation, however, can have up to $2^{|E|}$ possible topologies under the random link failure model, making exact computation of $\mathbf{E}_{\pi}(R(S))$  difficult.  We present two approaches to approximating $\mathbf{E}_{\pi}(R(S))$: first, a Monte Carlo approximation that is valid for any distribution $\pi$, and second, a gradient-based approximation that is valid when the probability of link failures is small.

\noindent \underline{\emph{Monte Carlo Approximation:}}
Under the Monte Carlo approach, a set consisting of $M$ Laplacian matrices $\{L_{1}, \ldots, L_{M}\}$, each chosen independently according to distribution $\pi$, is generated.
$\mathbf{E}_{\pi}(R(S))$ is then approximated by $R_{mc}(S) = \frac{1}{M}\sum_{i=1}^{M}{R(S|L_{i})}$.
\begin{theorem}
\label{theorem:MC_convergence}
For any $\epsilon > 0$,
$\lim_{M \rightarrow \infty}{Pr(|R_{mc}(S) - \mathbf{E}_{\pi}(R(S))| < \epsilon)} = 1$
for every $S \subseteq V$.
\end{theorem}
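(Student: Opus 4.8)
The plan is to recognize this statement as an instance of the weak law of large numbers applied to the i.i.d.\ sequence of sampled errors, and to make the convergence quantitative via Chebyshev's inequality. First I would fix the leader set $S$ and observe that, by construction of the Monte Carlo scheme, the Laplacians $L_{1}, \ldots, L_{M}$ are drawn independently from the distribution $\pi$. Consequently the random variables $X_{i} := R(S|L_{i})$, for $i = 1, \ldots, M$, are independent and identically distributed, each with common mean $\mathbf{E}_{\pi}(R(S))$ — this is precisely the definition of $\mathbf{E}_{\pi}(R(S))$ given above — and $R_{mc}(S) = \frac{1}{M}\sum_{i=1}^{M}{X_{i}}$ is their sample mean.

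Second, I would verify the moment condition needed for the law of large numbers. The key observation is that the support $\mathcal{L}$ of $\pi$ is finite: under the random link failure model there are at most $2^{|E|}$ realizable Laplacians. By Theorem \ref{theorem:error_Laplacian}, each value $R(S|L) = \frac{1}{2}Tr(L_{ff}^{-1})$ is finite whenever the corresponding follower dynamics are well-posed (i.e., $L_{ff}$ is invertible). Hence $X_{i}$ is a bounded random variable taking finitely many values, so its variance $\sigma^{2} := \mathrm{Var}_{\pi}(R(S|L))$ is finite.

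With finite variance in hand, I would apply Chebyshev's inequality to the sample mean,
\begin{displaymath}
Pr\left(|R_{mc}(S) - \mathbf{E}_{\pi}(R(S))| \geq \epsilon\right) \leq \frac{\sigma^{2}}{M\epsilon^{2}}.
\end{displaymath}
Letting $M \rightarrow \infty$ drives the right-hand side to zero, so the probability of the complementary event $\{|R_{mc}(S) - \mathbf{E}_{\pi}(R(S))| < \epsilon\}$ tends to $1$, which is the claim. Since the argument holds for each fixed $S \subseteq V$, the statement follows for every leader set.

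The main obstacle — really the only nontrivial point — will be the finiteness step: one must ensure that $R(S|L)$ is finite for every $L$ in the support of $\pi$, i.e., that each sampled topology keeps $L_{ff}$ invertible, or equivalently that every follower remains connected to the leader set $S$. Given the finiteness of $\mathcal{L}$, this is a modeling assumption (implicitly required already for $\mathbf{E}_{\pi}(R(S))$ to be well-defined) rather than a genuine technical difficulty. Once it is granted, finite variance and hence convergence in probability are immediate, and no deeper probabilistic machinery (such as Khinchin's law of large numbers under only a first-moment hypothesis) is needed.
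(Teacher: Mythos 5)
Your proposal is correct and takes essentially the same route as the paper, whose entire proof is to observe that the $R(S|L_{i})$ are i.i.d.\ samples and invoke the weak law of large numbers. Your extra work --- checking finite variance via the finiteness of $\mathcal{L}$ and the invertibility of $L_{ff}$ for each sampled topology, then applying Chebyshev's inequality --- just makes explicit a moment hypothesis the paper leaves implicit, so it is a (welcome) refinement of the same argument rather than a different one.
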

\begin{proof}
For a given $S$, each $R(S|L_{i})$ represents an independent sample from the probability distribution $\pi$.  Hence, by the weak law of large numbers, $\sum_{i=1}^{M}{R(S|L_{i})}$ converges in probability to the expected value $\mathbf{E}_{\pi}(R(S))$.
\end{proof}

\noindent \underline{\emph{Gradient Approximation:}}
Consider the random link failure model, and let $\tilde{G} = (V,\tilde{E})$, where $\tilde{E}$ denotes the set of links that have failed.  Define matrix $\Delta$ by
\begin{displaymath}
\Delta_{ij} = \left\{
\begin{array}{cc}
-\nu_{ij}^{-1}, & (i,j) \in \tilde{E} \\
\sum_{(i,j) \in \tilde{E}}{\nu_{ij}^{-1}}, & i=j \\
0, & \mbox{else}
\end{array}
\right.
\end{displaymath}
Let $\tilde{L}_{ff} \triangleq L_{f} - \Delta$, so that $R(S|\tilde{L}) = \sum_{u \in V \setminus S}{(\tilde{L}_{ff}^{-1})_{uu}}$.

\begin{lemma}
\label{lemma:random_gradient_helper}
Suppose that links fail independently and randomly with probability $p$.  Let $X = \max_{(i,j) \in E}{\nu_{ij}^{-1}}$, and let $d$ denote the maximum node degree of the graph $G$.  Define $||\Delta||_{2}$ to be the maximum singular value of $\Delta$. Then $||\Delta||_{2} \leq 2pdX$.
\end{lemma}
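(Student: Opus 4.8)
The plan is to bound the spectral norm of the failed-edge matrix $\Delta$ by a quantity governed purely by the maximum weighted degree, and then to introduce the failure probability $p$ through the expectation of that degree. The key structural observation is that $\Delta$ is (a principal block of) the weighted graph Laplacian of the subgraph $\tilde{G} = (V,\tilde{E})$ of failed links: it is symmetric, its off-diagonal entry in position $(i,j)$ equals $-\nu_{ij}^{-1}$ precisely when $(i,j) \in \tilde{E}$, and its $i$-th diagonal entry $\Delta_{ii} = \sum_{j:(i,j)\in\tilde{E}}{\nu_{ij}^{-1}}$ dominates the sum of the absolute values of the off-diagonal entries of row $i$. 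Hence each row of $\Delta$ has absolute sum at most $2\Delta_{ii}$.

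First I would reduce the spectral norm to the diagonal. For a symmetric matrix the spectral norm is bounded by the induced $\infty$-norm (equivalently, since $\Delta$ is positive semidefinite one may invoke Gershgorin's circle theorem, using $||\Delta||_{2} = \lambda_{\max}(\Delta)$), so that
\begin{displaymath}
||\Delta||_{2} \leq \max_{i}{\sum_{j}{|\Delta_{ij}|}} \leq 2\max_{i}{\Delta_{ii}} = 2\max_{i}{\sum_{j:(i,j)\in\tilde{E}}{\nu_{ij}^{-1}}}.
\end{displaymath}
Since node $i$ has at most $d$ incident (failed) edges, each of weight at most $X$, this already gives the deterministic bound $||\Delta||_{2} \leq 2dX$. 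The factor $p$ then enters by taking expectations over the random failures: by linearity of expectation, $\mathbf{E}(\Delta_{ii}) = p\sum_{j \in N(i)}{\nu_{ij}^{-1}} \leq p\,|N(i)|\,X \leq pdX$, so the expected perturbation $\mathbf{E}_{\pi}(\Delta)$ is itself a weighted Laplacian whose diagonal entries are at most $pdX$. Applying the same row-sum bound to the deterministic matrix $\mathbf{E}_{\pi}(\Delta)$ yields $||\mathbf{E}_{\pi}(\Delta)||_{2} \leq 2pdX$, which is the quantity that controls the first-order (gradient) approximation of $R(S \mid \tilde{L})$ about $L_{ff}$.

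The main subtlety I expect is exactly this interchange of the norm and the expectation. For a fixed realization of $\tilde{E}$ the bound $2pdX$ cannot hold in general, since if every link happens to fail then $\Delta$ is the full Laplacian and $||\Delta||_{2}$ is of order $2dX$; the statement must therefore be read as a bound on the \emph{expected} perturbation rather than on an arbitrary sample. The clean route is to apply the Laplacian row-sum (Gershgorin) estimate to the deterministic matrix $\mathbf{E}_{\pi}(\Delta)$, whose structure as a $p$-scaled weighted Laplacian is guaranteed by linearity of expectation; beyond that, the only facts needed are that $\mathbf{E}_{\pi}(\Delta)$ inherits the Laplacian sign pattern and that each unscaled weighted degree is at most $dX$.
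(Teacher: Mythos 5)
Your proof is correct, and its engine is the same as the paper's: observe that $\Delta$ is (a principal block of) the weighted Laplacian of the failed-edge subgraph, hence symmetric positive semidefinite with each diagonal entry dominating the absolute off-diagonal row sum, and apply the Gershgorin disc theorem to get $||\Delta||_{2} = \lambda_{\max}(\Delta) \leq 2\max_{i}{\Delta_{ii}}$. Where you genuinely part ways with the paper is in how the factor $p$ enters, and your treatment is the more careful one. The paper's proof argues pathwise, writing $\sum_{j \neq i}{|\Delta_{ij}|} = X|E_{i}| \leq pdX$ where $E_{i}$ is the (random) set of failed edges at agent $i$; that inequality is false for a fixed realization, since $|E_{i}|$ can be as large as $d$, and it holds only after the tacit replacement of $|E_{i}|$ by its expectation $p|N(i)|$ --- a slip the paper itself signals by writing $\Delta_{ii} \approx pdX$. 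You identify exactly this: the correct pathwise bound is $||\Delta||_{2} \leq 2dX$, while $2pdX$ is properly a bound on $||\mathbf{E}_{\pi}(\Delta)||_{2}$, obtained by applying the same row-sum estimate to the deterministic matrix $\mathbf{E}_{\pi}(\Delta)$, which by linearity of expectation is the $p$-scaled weighted Laplacian of $G$. This reading is also the one consistent with the lemma's downstream use in Theorem \ref{theorem:random_bound}, where the first-order correction $Tr(L_{ff}^{-1}\Delta L_{ff}^{-1})$ is linear in $\Delta$, so at that order only $\mathbf{E}_{\pi}(\Delta)$ matters. In short: same Gershgorin mechanism, but your version repairs the quantifier confusion in the paper's argument and makes explicit the (in-expectation) sense in which the stated bound $2pdX$ can actually hold.
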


\begin{proof}
Since $\Delta$ is symmetric and positive semidefinite, $||\Delta||_{2}$ is equal to the maximum eigenvalue of $\Delta$.
Let $E_{i}$ denote the set of edges incident on agent $i$ which fail.  Then
\begin{equation}
\sum_{j \neq i}{|\Delta_{ij}|} = \sum_{j \in E_{i}}{\nu_{ij}^{-1}} \leq X \sum_{j \in E_{i}}{1} = X|E_{i}| \leq pdX,
\end{equation}
for each $i \in V$.
Furthermore, $\Delta_{ii} = \sum_{j \neq i}{|\Delta_{ij}|} \approx pdX$.  Hence by the Gershgorin Disc Theorem \cite{trefethen1997numerical}, the eigenvalues of $\Delta$ must lie in the interval $[0, 2pdX]$.
\end{proof}

Lemma \ref{lemma:random_gradient_helper} leads to the following gradient approximation for $\mathbf{E}_{\pi}(R(S))$.

\begin{theorem}
\label{theorem:random_bound}
Let $p$, $d$, and $X$ be as defined above, and let $\delta = 2pdX$.  Then
\begin{equation}
R(S | \tilde{G}) = Tr(\tilde{L}_{ff}^{-1}) \leq Tr(L_{ff}^{-1}) + \frac{(n-|S|)\delta}{\lambda_{min}(L_{ff})^{2}} ,
\end{equation}
where $\lambda_{min}(L_{ff})$ is the smallest eigenvalue of $L_{f}$.
\end{theorem}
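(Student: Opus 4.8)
The plan is to treat the failed-link matrix $\Delta$ as a small perturbation of $L_{ff}$ and to estimate $Tr(\tilde{L}_{ff}^{-1}) = Tr((L_{ff}-\Delta)^{-1})$ by its first-order (gradient) term in $\Delta$, which is precisely the ``gradient approximation'' advertised for the small-$p$ regime. First I would record the facts supplied by Lemma \ref{lemma:random_gradient_helper}: the matrix $\Delta$ is symmetric positive semidefinite and $||\Delta||_{2} \le 2pdX = \delta$. Provided $\delta < \lambda_{min}(L_{ff})$ (the small-failure-probability regime), the perturbed matrix $\tilde{L}_{ff} = L_{ff}-\Delta$ remains positive definite and hence invertible, so $R(S\mid\tilde{G})$ is well defined. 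Since $\Delta \succeq 0$ gives $\tilde{L}_{ff} \preceq L_{ff}$ and therefore $\tilde{L}_{ff}^{-1} \succeq L_{ff}^{-1}$, the difference $Tr(\tilde{L}_{ff}^{-1}) - Tr(L_{ff}^{-1})$ is automatically nonnegative, which is a useful sanity check: removing links can only increase the error.

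The core step is the resolvent expansion
\begin{displaymath}
\tilde{L}_{ff}^{-1} = L_{ff}^{-1} + L_{ff}^{-1}\Delta L_{ff}^{-1} + O(||\Delta||_{2}^{2}),
\end{displaymath}
obtained by factoring $\tilde{L}_{ff} = L_{ff}(I - L_{ff}^{-1}\Delta)$ and expanding the Neumann series for $(I - L_{ff}^{-1}\Delta)^{-1}$, which converges because $||L_{ff}^{-1}\Delta||_{2} \le \delta/\lambda_{min}(L_{ff}) < 1$. Taking the trace, discarding the higher-order terms, and using the cyclic property of the trace isolates the gradient term
\begin{displaymath}
Tr(\tilde{L}_{ff}^{-1}) - Tr(L_{ff}^{-1}) \approx Tr(L_{ff}^{-1}\Delta L_{ff}^{-1}) = Tr(\Delta L_{ff}^{-2}).
\end{displaymath}
To convert this into the stated bound I would invoke the standard inequality that, for symmetric positive semidefinite $A$ and $B$, $Tr(AB) \le ||A||_{2}\, Tr(B)$ (which follows from $B^{1/2}AB^{1/2} \preceq ||A||_{2}B$). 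Applied with $A = \Delta$ and $B = L_{ff}^{-2}$ this gives $Tr(\Delta L_{ff}^{-2}) \le ||\Delta||_{2}\, Tr(L_{ff}^{-2})$. I would then bound $Tr(L_{ff}^{-2}) = \sum_{u}\lambda_{u}(L_{ff})^{-2} \le (n-|S|)/\lambda_{min}(L_{ff})^{2}$, since $L_{ff}$ is $(n-|S|)\times(n-|S|)$, and finally substitute $||\Delta||_{2} \le \delta$ from Lemma \ref{lemma:random_gradient_helper} to produce the claimed right-hand side $(n-|S|)\delta/\lambda_{min}(L_{ff})^{2}$.

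The main obstacle is justifying the ``$\le$'' rather than ``$\approx$'': the first-order expansion discards the positive higher-order terms of the Neumann series, so the gradient expression is not literally an exact upper bound. A fully rigorous version follows instead from Weyl's inequality, which yields $\lambda_{u}(L_{ff}) - \delta \le \lambda_{u}(\tilde{L}_{ff}) \le \lambda_{u}(L_{ff})$ for every $u$, and hence the exact estimate
\begin{displaymath}
Tr(\tilde{L}_{ff}^{-1}) - Tr(L_{ff}^{-1}) \le \frac{(n-|S|)\delta}{\lambda_{min}(L_{ff})\,(\lambda_{min}(L_{ff}) - \delta)}.
\end{displaymath}
The stated bound is then recovered by the approximation $\lambda_{min}(L_{ff}) - \delta \approx \lambda_{min}(L_{ff})$, which is exactly the small-$p$ assumption under which Theorem \ref{theorem:random_bound} is asserted to hold. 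I would therefore present the gradient computation as the principal derivation (matching the section's framing) and flag that its validity as an inequality rests on $\delta \ll \lambda_{min}(L_{ff})$, with the Weyl argument available as the rigorous companion bound.
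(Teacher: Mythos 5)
Your proposal is correct, and its core---the first-order resolvent expansion $Tr((L_{ff}-\Delta)^{-1}) \approx Tr(L_{ff}^{-1}) + Tr(L_{ff}^{-1}\Delta L_{ff}^{-1})$ followed by an operator-norm bound on the gradient term via Lemma \ref{lemma:random_gradient_helper}---is the same gradient approximation the paper's proof uses. Where you genuinely differ is in the second half. The paper diagonalizes $L_{ff} = U\Lambda U^{T}$ and asserts that the supremum of $Tr(U\Lambda^{-1}U^{T}\Delta U\Lambda^{-1}U^{T})$ over admissible $\Delta$ is attained at some $\Delta = U\Omega U^{T}$ commuting with $L_{ff}$, a claim it does not justify; your route via cyclicity of the trace and $Tr(\Delta L_{ff}^{-2}) \leq \|\Delta\|_{2}\,Tr(L_{ff}^{-2}) \leq (n-|S|)\delta/\lambda_{min}(L_{ff})^{2}$ reaches the identical estimate cleanly, and is in fact the standard way to make the paper's sup step rigorous. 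Your flag about ``$\leq$'' versus ``$\approx$'' is also well taken and sharper than the paper's own treatment: since $Tr\bigl(L_{ff}^{-1}(\Delta L_{ff}^{-1})^{k}\bigr) = Tr\bigl(L_{ff}^{-1/2}(L_{ff}^{-1/2}\Delta L_{ff}^{-1/2})^{k}L_{ff}^{-1/2}\bigr) \geq 0$, every discarded Neumann-series term is nonnegative, so the first-order truncation is a \emph{lower} bound on $Tr(\tilde{L}_{ff}^{-1})$ and the theorem's inequality cannot hold verbatim---already in the $1\times 1$ case, $1/(\lambda-\delta)$ exceeds $1/\lambda + \delta/\lambda^{2}$ for every $\delta>0$. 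The paper silently absorbs this into its ``$\approx$,'' whereas your Weyl-inequality companion bound, $Tr(\tilde{L}_{ff}^{-1}) - Tr(L_{ff}^{-1}) \leq (n-|S|)\delta/\bigl(\lambda_{min}(L_{ff})(\lambda_{min}(L_{ff})-\delta)\bigr)$ for $\delta < \lambda_{min}(L_{ff})$, is the correct rigorous statement and recovers the published bound to first order in $\delta$. In short: same main idea as the paper, a cleaner execution of the bounding step, plus an honest rigorous version the paper lacks.
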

\begin{IEEEproof}
$R(S)$ can be written as
\begin{IEEEeqnarray*}{rCl}
Tr((L_{ff} - \Delta)^{-1}) &\approx& Tr(L_{ff}^{-1}) + Tr(L_{ff}^{-1}\Delta L_{ff}^{-1}) \\
 &\leq& Tr(L_{ff}^{-1}) + Tr(U\Lambda^{-1}U^{T}\Delta U\Lambda^{-1}U^{T}) \nonumber\\
 &\leq& Tr(L_{ff}^{-1}) + \sup_{\Delta}{Tr(U\Lambda^{-1}U^{T}\Delta U\Lambda^{-1}U^{T})} \nonumber
\end{IEEEeqnarray*}
where $L_{f} = U\Lambda U^{T}$ is the eigen-decomposition of $L_{f}$.  The upper bound occurs when $\Delta = U\Omega U^{T}$ for some positive semidefinite diagonal matrix $\Omega$.
Together with Lemma \ref{lemma:random_gradient_helper}, this implies that
\begin{IEEEeqnarray*}{rCl}
Tr((L_{f}-\Delta)^{-1}) &\leq& Tr(L_{ff}^{-1}) + Tr(U\Lambda^{-1}\Omega\Lambda^{-1}U^{T}) \\
&=& Tr(L_{ff}^{-1}) Tr(U^{T}U\Lambda^{-1}\Omega\Lambda^{-1}) = Tr(L_{ff}^{-1}) + Tr(\Lambda^{-1}\Omega\Lambda^{-1}) \\
&\leq& Tr(L_{ff}^{-1}) + \frac{(n-|S|)\delta}{\lambda_{min}^{2}}.
\end{IEEEeqnarray*}
\end{IEEEproof}

Note that the upper bound on $R(S)$ can  be used as a worst-case value for $R(S)$ in the presence of link failures. 
Once an appropriate method for computing or estimating $\mathbf{E}_{\pi}(R(S))$ has been chosen, the leader set $S$ can be selected by solving (\ref{eq:random_failures_opt}a) or (\ref{eq:random_failures_opt}b).
The following lemma can be used to derive efficient algorithms for both problems.




\begin{lemma}
\label{lemma:random}
Under the random link failure model, for any distribution $\pi$ on $\mathcal{L}$, the function $\mathbf{E}_{\pi}(R(S))$ is supermodular.
\end{lemma}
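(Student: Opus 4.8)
The plan is to reduce the claim to the single-topology result already established in Corollary~\ref{theorem:R_supermodular} and then invoke closure of supermodularity under nonnegative weighted sums. The key observation is that, by definition, $\mathbf{E}_{\pi}(R(S)) = \sum_{L \in \mathcal{L}}{R(S|L)\pi(L)}$ is a linear combination of the per-topology error functions $R(S|L)$ with nonnegative coefficients $\pi(L) \geq 0$, and that $\mathcal{L}$ is finite (there are at most $2^{|E|}$ realizable Laplacians under the random link failure model). So once each summand is known to be supermodular, the result is immediate.

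First I would fix an arbitrary $L \in \mathcal{L}$ and argue that $R(S|L)$ is a supermodular function of $S$. This is exactly the content of Corollary~\ref{theorem:R_supermodular}, applied to the graph whose weighted Laplacian is $L$ rather than to the underlying graph $G$: for that graph, $R(S|L) = \frac{1}{2}\sum_{u \in V \setminus S}{(L_{ff}^{-1})_{uu}}$ is a sum of per-follower terms, each proportional to the commute time $\kappa(S,u)$, which is supermodular by Theorem~\ref{theorem:commute_time_supermodular}. Second, since each $\pi(L) \geq 0$ and the sum over the finite set $\mathcal{L}$ is finite, $\mathbf{E}_{\pi}(R(S))$ is a nonnegative finite weighted sum of supermodular functions. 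Lemma~\ref{lemma:additive_submod} then immediately yields that $\mathbf{E}_{\pi}(R(S))$ is supermodular, completing the proof.

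The main obstacle is that, unlike the underlying graph $G$ which was assumed connected, a realized topology after link failures may be disconnected, so some follower $u$ may be unable to reach the leader set $S$; in that case the block $L_{ff}$ for that topology is singular and the scalar error $R(S|L)$ need not be finite. I would handle this by working with the commute-time characterization rather than the matrix inverse: the supermodularity argument of Theorem~\ref{theorem:commute_time_supermodular} is phrased entirely in terms of the hitting-time indicators $I(W_{jAu}^{k})$ and the events $\tau_{j}(\cdot)$, and the inequalities $h_{jAu} \geq h_{jBu}$ and $Pr(\tau_{j}(B)) \leq Pr(\tau_{j}(A))$ remain valid in the extended reals when $\kappa(S,u) = +\infty$. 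Thus each $R(S|L)$ is supermodular as an extended-real-valued set function, and the nonnegative-weighted-sum argument goes through verbatim, with the convention $\mathbf{E}_{\pi}(R(S)) = +\infty$ when disconnection from the leaders occurs with positive probability. Alternatively, if one restricts $\mathcal{L}$ to topologies for which the realized graph remains connected, the finite-valued form of Corollary~\ref{theorem:R_supermodular} applies directly and no extended-reals bookkeeping is required.
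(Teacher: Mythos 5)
Your proposal is correct and follows essentially the same route as the paper's proof: write $\mathbf{E}_{\pi}(R(S)) = \sum_{L \in \mathcal{L}}{R(S|L)\pi(L)}$, note that $\mathcal{L}$ is finite and each $R(S|L)$ is supermodular by Corollary~\ref{theorem:R_supermodular}, and conclude by Lemma~\ref{lemma:additive_submod}. Your extra care about possibly disconnected realizations (where $L_{ff}$ is singular and $R(S|L)$ is infinite) goes beyond the paper, which silently assumes each summand is well-defined, and your extended-reals fix via the commute-time argument is a legitimate way to close that gap.
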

\begin{proof}
By definition,
$\mathbf{E}_{\pi}(R(S)) = \sum_{L \in \mathcal{L}}{R(S|L)\pi(L)}$.
Since the set $\mathcal{L}$ of possible weighted Laplacians is finite, this is a nonnegative weighted sum of supermodular functions, and hence is supermodular by Lemma \ref{lemma:additive_submod}.
\end{proof}

As a corollary to Lemma \ref{lemma:random}, the problem of choosing a set of $k$ leaders to minimize the expected error (\ref{eq:random_failures_opt}a), and the problem of selecting the minimum number of leaders to meet a bound on the expected error (\ref{eq:random_failures_opt}b), can be solved using algorithms analogous to \textbf{static-$k$} and \textbf{static-$\alpha$} 
respectively.  The modified algorithms take the distribution $\pi$ as an additional input parameter, and replace line three in both algorithms with $s_{i}^{\ast} \leftarrow \arg\max_{j \in V \setminus S}{\{\mathbf{E}_{\pi}(R(S^{\ast})) - \mathbf{E}_{\pi}(R(S^{\ast} \cup \{j\}))\}}$.

\subsection{Leader Selection Under Switching Between Predefined Topologies}
\label{subsec:switching}

A MAS may switch between a set of predefined topologies $\mathcal{T} = \{G_{1}, \ldots, G_{M}\}$, each with different link error variances represented by the corresponding weighted Laplacians $L_{1}, \ldots, L_{M}$ (e.g., a set of possible formations) in response to a switching signal from the MAS owner or environmental changes~\cite{olfati2004consensus}.
Leader selection under switching topologies can be divided into two cases.
In the first case, the set of leaders is updated after each change in topology~\cite{mesbahi_spacecraft}.  For this case, a different set of leaders $S_{i}$ can be selected for each topology $G_{i}$ using either \textbf{static-$k$} (if a fixed number $k$ of leaders is chosen to minimize error) or \textbf{static-$\alpha$} (if the minimum number of leaders is chosen to achieve an error bound $\alpha$).


In the second case, the same leader set $S$ is chosen and used for all topologies~\cite{liu2008controllability}.
Under this strategy, we consider two possible leader selection metrics, namely the average-case error, given as $R_{avg}(S) = \frac{1}{M}\sum_{i=1}^{M}{R(S|L_{i})}$, and the worst-case error, given as $R_{worst}(S) = \max{\{R(S|L_{i}) : i=1,\ldots, M\}}$.  $R_{avg}(S)$ is a nonnegative weighted sum of supermodular functions, and hence is supermodular as a function of $S$ by Lemma \ref{lemma:additive_submod}.  The problems of selecting up to $k$ leaders in order to minimize $R_{avg}(S)$ and selecting the minimum number of leaders to achieve an error bound $\alpha$ can  therefore be solved by  modified versions of \textbf{static-$k$} and \textbf{static-$\alpha$}, respectively.  The modified versions of both leader selection algorithms take the topologies $\{G_{1}, \ldots G_{M}\}$ as input, and replace line 3 in both algorithms with $s_{i}^{\ast} \leftarrow \arg\max_{j \in V \setminus S}{\left\{R_{avg}(S^{\ast})\right.}$ $\left. - R_{avg}(S^{\ast} \cup \{j\})\right\}$.  Similarly, the problem of selecting the minimum number of leaders to achieve an error bound $\alpha$ can be solved by a modified version of \textbf{static-$\alpha$}, also taking $\{G_{1}, \ldots, G_{M}\}$ as input, and with the same substitution at line 3.

If $R_{worst}(S)$ is used as a metric, however,  note that the maximum of supermodular functions is, in general, not supermodular~\cite{Krause_adversarial}.  Alternate approaches for leader selection problems are given as follows.

\subsubsection{Choosing $k$ leaders to minimize worst-case error}
\label{subsubsec:primal_switching}
The problem of choosing $k$ leaders in order to minimize $R_{worst}$ is stated as
\begin{equation}
\label{eq:primal_switching}
\begin{array}{cc}
\mbox{minimize} & R_{worst}(S) \triangleq \max_{i=1,\ldots,M}{R(S|L_{i})} \\
\mbox{s.t.} & |S| \leq k
\end{array}
\end{equation}

Let $S^{\ast}$ be the solution to (\ref{eq:primal_switching}), and let $R_{worst}^{\ast} = R_{worst}(S^{\ast})$.  $R_{worst}^{\ast}$ is bounded below by $0$ and bounded above by $R_{max} = \max_{i=1,\ldots, M}{\max_{u \in V}{\{R(\{u\}|L_{i})\}}}$.  As a preliminary, define
\begin{equation}
\label{eq:def_F}
F_{c}(S) \triangleq \frac{1}{M}\sum_{i=1}^{M}{\max{\{R(S|L_{i}),c\}}}.
\end{equation}
Note that, by Lemmas \ref{lemma:additive_submod} and \ref{lemma:max_supermod}, $F_{c}(S)$ is a supermodular function of $S$.


 An algorithm for approximating $S^{\ast}$ is as follows.    First, select parameters $\beta \geq 1$ and $\delta > 0$.  The algorithm finds a set $S$ satisfying $R(S) \leq R_{worst}^{\ast}$ and $|S| \leq \beta k$. Parameter $\delta$ determines the convergence speed of the algorithm.  Define $\alpha_{min}^{0} = 0$ and $\alpha_{max}^{0} = R_{max}$.  At the $j$-th iteration, let $\alpha^{j} = \frac{\alpha_{max}^{j-1} + \alpha_{min}^{j-1}}{2}$.  The goal of the $j$-th iteration is to determine if there is a set $S^{j}$ such that $|S^{j}| \leq \beta k$ and $R(S
 ^{j}|L_{i}) \leq \alpha^{j}$ for all $i=1,\ldots, M$.  This is accomplished by solving the optimization problem
 \begin{equation}
 \label{eq:worst_case_helper}
 \begin{array}{cc}
 \mbox{minimize} & |S| \\
 \mbox{s.t.} & F_{\alpha^{j}}(S) \leq \alpha^{j}
 \end{array}
 \end{equation}
 Since $F_{\alpha^{j}}(S)$ is supermodular, the solution to (\ref{eq:worst_case_helper}) can be approximated by an algorithm analogous to \textbf{static}-$\alpha$.  If the approximate solution to (\ref{eq:worst_case_helper}), denoted $S^{j}$, satisfies $|S^{j}| \leq \beta k$, then set $\alpha_{max}^{j} = \alpha^{j-1}$ and $\alpha_{min}^{j} = \alpha_{min}^{j-1}$.  Otherwise, set $\alpha_{max}^{j} = \alpha_{max}^{j-1}$ and $\alpha_{min}^{j} = \alpha^{j}$.  The algorithm terminates when $|\alpha_{max}^{j} - \alpha_{min}^{j}| < \delta$ and returns the current set $S^{j}$.  A pseudocode description of this algorithm is given as algorithm \textbf{switching}-$k$.


\begin{figure}[ht]
\centering
\renewcommand{\arraystretch}{0.6}
\begin{tabular}{lr}
\hline
\small{\textbf{Algorithm switching-$k$}: Algorithm for selecting up to $k$ leaders} & \\
 \small{to minimize worst-case error under switching topologies} & \\
 \hline
\small{\textbf{Input: } Topologies $G_{1}, \ldots, G_{M}$} & \\
~~~~~~~\small{Link error variances $\nu_{ij}^{(1)}, \ldots, \nu_{ij}^{(M)}$} & \\
~~~~~~~\small{Maximum number of leaders $\beta k$, threshold $\delta$} & \\
\small{\textbf{Output: } Set of leader nodes $S^{\ast}$} & \\
\small{\textbf{Initialization: } $S^{\ast} \leftarrow \emptyset$, $j \leftarrow 0$, $\alpha_{min}^{j} \leftarrow 0$, $\alpha_{max}^{j} \leftarrow R_{max}$} & \\
\small{\textbf{while} $\alpha_{max}^{j} - \alpha_{min}^{j} \geq \delta$} & \small{1} \\
~~\small{$\alpha^{j} \leftarrow \frac{\alpha_{max}^{j}+\alpha_{min}^{j}}{2}$} & \small{2} \\
~~\small{$r \leftarrow 0$, $S^{j} \leftarrow \emptyset$} & \small{3} \\
~~\small{\textbf{while} $F_{\alpha^{j}}(S) \leq \alpha^{j}$} & \small{4}\\
~~~~\small{$s_{i}^{\ast} \leftarrow \arg\max_{v \in V \setminus S}{\{F_{\alpha^{j}}(S) -  F_{\alpha^{j}}(S \cup \{v\})\}}$} & \small{5} \\
~~~~\small{$S^{j} \leftarrow S^{j} \cup \{s_{i}^{\ast}\}$, $r \leftarrow r+1$} & \small{6}\\
~~\small{\textbf{end while}} & \small{7} \\
~~\small{\textbf{if} $r > \beta k$} & \small{8} \\
~~~~\small{$\alpha_{max}^{j} \leftarrow \alpha^{j}$, $\alpha_{min}^{j} \leftarrow \alpha_{min}^{j-1}$} & \small{9} \\
~~\small{\textbf{else}} & \small{10} \\
~~~~\small{$\alpha_{min}^{j} \leftarrow \alpha^{j}$, $\alpha_{max}^{j} \leftarrow \alpha_{max}^{j-1}$} & \small{11} \\
~~\small{$j \leftarrow j+1$} & \small{12} \\
\small{\textbf{end while}} & \small{13} \\
\small{$S^{\ast} \leftarrow S^{j}$, \textbf{Return} $S^{\ast}$} & \small{14} \\
\hline
\end{tabular}
\label{algorithm:primal_switching}
\end{figure}

Since $\alpha_{max}^{j} - \alpha_{min}^{j}$ is strictly decreasing as $j$ increases, \textbf{switching-$k$} converges.  The optimality of \textbf{switching-$k$} is given by the following theorem.
\begin{theorem}
\label{theorem:switching_optimal}
When $\delta = \frac{1}{M}$ and $\beta$ satisfies
\begin{displaymath}
\beta \geq 1 + \log{\left(\max_{v \in V}{\left\{\sum_{i}{R(\{v\}|L_{i})}\right\}}\right)}
\end{displaymath}
\textbf{switching-$k$} returns a set $S^{\ast}$ such that $\max_{i}{\{R(S^{\ast}|L_{i})\}} \leq R_{worst}^{\ast}$ and $|S^{\ast}| \leq \beta k$.
\end{theorem}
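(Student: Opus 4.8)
The plan is to bypass the non-supermodularity of $R_{worst}$ by reasoning through the supermodular surrogate $F_c$ of (\ref{eq:def_F}). The enabling observation is a feasibility equivalence: for every $c\ge 0$,
\[
F_c(S)\le c \iff R(S|L_i)\le c \ \ \forall i \iff R_{worst}(S)\le c .
\]
The forward direction uses that each summand obeys $\max\{R(S|L_i),c\}\ge c$, so $F_c(S)\le c$ forces every summand to equal $c$; the reverse is immediate. Hence whenever the inner loop certifies $F_{\alpha^j}(S^j)\le\alpha^j$, the set $S^j$ already satisfies $R_{worst}(S^j)\le\alpha^j$. Since $F_c$ is nonincreasing and supermodular by Lemmas \ref{lemma:additive_submod} and \ref{lemma:max_supermod}, problem (\ref{eq:worst_case_helper}) is a supermodular set-cover instance and the guarantee of Theorem \ref{theorem:dual_bound} applies to it.

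Second, I would quantify the inner greedy loop. Writing $k'(\alpha^j)$ for the optimal cardinality of (\ref{eq:worst_case_helper}) at floor $\alpha^j$ and invoking Theorem \ref{theorem:dual_bound} (with the empty-set value replaced by the single leader surrogate $R_{max}$, exactly as in that proof), the greedy output $S^j$ with penultimate iterate $S_{r-1}$ satisfies
\[
|S^j|\le k'(\alpha^j)\Big(1+\log\tfrac{N}{D}\Big),\qquad N=\max_{v}\big(F_{\alpha^j}(\{v\})-\alpha^j\big),\quad D=F_{\alpha^j}(S_{r-1})-\alpha^j .
\]
I would then bound $N\le\max_v\tfrac1M\sum_i R(\{v\}|L_i)$ and, using the $\tfrac1M$ averaging in (\ref{eq:def_F}), bound the residual $D\ge\tfrac1M$; together these yield $N/D\le\max_v\sum_i R(\{v\}|L_i)$, so the factor is at most the prescribed $\beta$. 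Thus greedy on (\ref{eq:worst_case_helper}) uses at most $\beta\,k'(\alpha^j)$ leaders at every tested threshold.

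Third, I would close the loop with two bisection invariants. At $c=R_{worst}^{\ast}$ the optimizer of (\ref{eq:primal_switching}) is a feasible cover, so $k'(R_{worst}^{\ast})\le k$ and the factor bound gives greedy cardinality $\le\beta k$; hence $R_{worst}^{\ast}$ lies in the achievable regime and the running $\alpha_{max}$ descends to within $\delta$ of it. Conversely, whenever greedy exceeds $\beta k$ at the current $\alpha_{min}$, the factor bound forces $k'(\alpha_{min})>k$, so by the equivalence lemma no set of at most $k$ leaders can meet error $\alpha_{min}$; since the optimal set has $\le k$ leaders this gives $R_{worst}^{\ast}>\alpha_{min}$. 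The returned $S^{\ast}$ (the greedy cover at the final $\alpha_{max}$) therefore obeys $|S^{\ast}|\le\beta k$ and
\[
R_{worst}(S^{\ast})\le\alpha_{max}<\alpha_{min}+\delta<R_{worst}^{\ast}+\tfrac1M,
\]
and the matched choice $\delta=\tfrac1M$, with the same $\tfrac1M$ granularity, collapses this to $R_{worst}(S^{\ast})\le R_{worst}^{\ast}$. The main obstacle is exactly this granularity step: controlling the residual $D$ from below by $\tfrac1M$ so the logarithmic factor meets $\beta$, while simultaneously absorbing the final bisection slack of width $\delta=\tfrac1M$. It is the coupling between the averaging constant in $F_c$ and the termination threshold $\delta$ that makes both ends close precisely, and verifying it is where the real work lies.
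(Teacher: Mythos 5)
Your architecture is, in substance, the only proof the paper itself offers: the paper's argument for Theorem \ref{theorem:switching_optimal} is a one-line citation to Theorem 3 of \cite{Krause_adversarial}, and what you have written out is precisely a reconstruction of that cited proof (the SATURATE analysis) --- the truncation equivalence $F_{c}(S) \leq c \Leftrightarrow \max_{i}{R(S|L_{i})} \leq c$, the greedy cover guarantee of Theorem \ref{theorem:dual_bound} applied to the supermodular surrogate $F_{\alpha^{j}}$ of (\ref{eq:def_F}), and the two bisection invariants. Those parts are correct, and the bounds $N \leq \max_{v}\frac{1}{M}\sum_{i}{R(\{v\}|L_{i})}$ and $k^{\prime}(R_{worst}^{\ast}) \leq k$ are fine; making the citation self-contained is more explicit than the paper.

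The genuine gap is exactly the step you flagged as ``where the real work lies,'' and it is not a verification detail --- as stated it is false in this setting. The residual bound $D = F_{\alpha^{j}}(S_{r-1}) - \alpha^{j} \geq \frac{1}{M}$ does not follow from the $\frac{1}{M}$ averaging alone: it additionally requires that any violated term exceed the threshold by at least $1$, i.e., that the $R(\cdot|L_{i})$ (and the tested thresholds) take integer values. That integrality hypothesis is explicit in Krause et al.'s theorem and is silently inherited by the paper through its citation; here $R(S|L_{i}) = \frac{1}{2}Tr(L_{ff}^{-1})$ is real-valued, so the penultimate greedy set can violate feasibility by an arbitrarily small margin, $D$ can be arbitrarily small, and the ratio $N/D$ --- hence the $\beta$ needed for the factor bound --- is unbounded. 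This also undermines your bisection contrapositive (``greedy exceeds $\beta k$ forces $k^{\prime}(\alpha_{min}) > k$''), which is valid only once the factor bound holds, and your final collapse: without a $\frac{1}{M}$-grid on attainable values, $R_{worst}(S^{\ast}) < R_{worst}^{\ast} + \frac{1}{M}$ does not improve to $R_{worst}(S^{\ast}) \leq R_{worst}^{\ast}$, since $R_{worst}^{\ast}$ need not be a grid point. To close the argument you must either restrict to integral objectives (as Krause et al. do) or discretize each $R(\cdot|L_{i})$ to multiples of some $\theta > 0$ and restate the guarantee with $R_{worst}^{\ast}$ relaxed by $O(\theta)$ --- a caveat that, strictly speaking, applies to the paper's own statement of Theorem \ref{theorem:switching_optimal} as well.
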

\begin{proof}
The proof follows from the fact that $R(S|L_{i})$ is a supermodular function for all $i$ and Theorem 3 of \cite{Krause_adversarial}.
\end{proof}

\subsubsection{Choosing leaders to achieve an error bound}
\label{subsubsec:switching_error_bound}

In order to choose a minimum-size set of leaders to achieve an error bound $\alpha$, the following optimization problem must be solved

\begin{equation}
\label{eq:dual_switching_opt}
\begin{array}{cc}
\mbox{minimize} & |S| \\
\mbox{s.t.} & R(S|L_{i}) \leq \alpha \quad \forall i=1,\ldots, M \\
\end{array}
\end{equation}

The following lemma leads to efficient algorithms for solving (\ref{eq:dual_switching_opt}).

\begin{lemma}
Problem (\ref{eq:dual_switching_opt}) is equivalent to
\begin{equation}
\label{eq:dual_switching_opt_alternate}
\begin{array}{cc}
\mbox{minimize} & |S| \\
\mbox{s.t.} & F_{\alpha}(S) \leq \alpha
\end{array}
\end{equation}
where $F_{\alpha}(S)$ is defined as in (\ref{eq:def_F}).
\end{lemma}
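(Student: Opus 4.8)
The plan is to show that the two optimization problems possess identical feasible regions; since they also share the same objective $|S|$, this immediately yields equivalence of their optimal values and of their sets of minimizers. The entire argument rests on a single structural observation about the truncation operator $\max\{\cdot, \alpha\}$ appearing in the definition (\ref{eq:def_F}) of $F_{\alpha}(S)$, so I would organize the proof as a two-way inclusion of feasible sets.

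First I would record the pointwise lower bound. For every index $i$ and every set $S$, the term $\max\{R(S|L_{i}), \alpha\} \geq \alpha$ trivially, and averaging over $i = 1, \ldots, M$ gives $F_{\alpha}(S) \geq \alpha$ for all $S \subseteq V$. Consequently the constraint $F_{\alpha}(S) \leq \alpha$ in (\ref{eq:dual_switching_opt_alternate}) can hold only with equality, $F_{\alpha}(S) = \alpha$, and this equality forces every one of the nonnegative deviations $\max\{R(S|L_{i}), \alpha\} - \alpha$ to vanish. Hence $\max\{R(S|L_{i}), \alpha\} = \alpha$ for each $i$, which is equivalent to $R(S|L_{i}) \leq \alpha$ for each $i$. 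This shows that every $S$ feasible for (\ref{eq:dual_switching_opt_alternate}) is feasible for (\ref{eq:dual_switching_opt}).

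For the converse I would simply observe that if $R(S|L_{i}) \leq \alpha$ holds for every $i = 1, \ldots, M$, then each truncated term equals $\alpha$, so $F_{\alpha}(S) = \alpha \leq \alpha$ and $S$ is feasible for (\ref{eq:dual_switching_opt_alternate}). Combining the two directions, the feasible sets coincide exactly, and since the objective $|S|$ is unchanged, the two problems are equivalent and share the same optimal solutions.

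There is essentially no serious obstacle here: the statement is a direct consequence of the fact that an average of values each truncated below at $\alpha$ can attain the value $\alpha$ only when no value exceeds $\alpha$. The only point requiring a little care is the bookkeeping of the non-strict inequalities, namely reading the constraint ``$F_{\alpha}(S) \leq \alpha$'' together with the universal lower bound $F_{\alpha}(S) \geq \alpha$ so that it collapses to the equality $F_{\alpha}(S) = \alpha$; handling this correctly is what guarantees the two feasible regions are genuinely identical rather than merely one contained in the other.
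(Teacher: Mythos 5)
Your proposal is correct and follows essentially the same route as the paper, whose proof simply asserts the key equivalence $F_{\alpha}(S) \leq \alpha$ if and only if $R(S|L_{i}) \leq \alpha$ for all $i$, together with the identity of the objectives. You merely spell out the detail the paper leaves implicit—that the universal lower bound $F_{\alpha}(S) \geq \alpha$ collapses the constraint to an equality forcing each truncated term to equal $\alpha$—which is a faithful elaboration, not a different argument.
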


\begin{proof}
The proof follows from the facts that the objective functions of (\ref{eq:dual_switching_opt}) and (\ref{eq:dual_switching_opt_alternate}) are the same, as well as the fact that $F_{\alpha}(S) \leq \alpha$ if and only if $R(S|L_{i}) \leq \alpha$ for all $i$.
\end{proof}

Since $F_{\alpha}(S)$ is a supermodular function of $S$, this is a supermodular optimization problem similar to (\ref{eq:dual_opt}), and hence can be solved by an algorithm analogous to \textbf{static-$\alpha$}.

\subsubsection{Leader selection for switching topologies under random agent and link failures}
\label{subsubsec:switching_failures}
As in the static network case, MAS with switching topologies may experience random link or agent failures.  In this case, the expected values of the average and worst-case system error are of interest when selecting the leaders.

Under random failures, the $i$-th topology can be represented as a random variable $\mathbf{G}_{i}$.  Let $\pi$ denote the joint distribution of $\mathbf{G}_{1}, \ldots, \mathbf{G}_{M}$, so that $\pi(G_{1}, \ldots, G_{M}) = Pr(\mathbf{G}_{1} = G_{1}, \ldots, \mathbf{G}_{M} = G_{M})$, and let $\pi_{i}(G) = Pr(\mathbf{G}_{i} = G)$.  Note that the $\mathbf{G}_{i}$'s may not be independent; for example, under the random failure model of Section \ref{subsec:random_topology}, the failure of an agent in any topology implies failure in all topologies.

The average-case expected error $\mathbf{E}_{\pi}(R_{avg}(S))$ can be further simplified by
\begin{displaymath}
\mathbf{E}_{\pi}(R_{avg}) = \mathbf{E}_{\pi}\left(\frac{1}{M}\sum_{i=1}^{M}{R(S|\mathbf{L}_{i})}\right)
 = \frac{1}{M}\sum_{i=1}^{M}{\mathbf{E}_{\pi_{i}}(R(S|L_{i}))},
 \end{displaymath}
 which follows from linearity of expectation and the fact that, by definition of $\pi_{i}$, $\mathbf{E}_{\pi}(R(S|L_{i})) = \mathbf{E}_{\pi_{i}}(R(S|L_{i}))$.  By Lemma \ref{lemma:random}, $\mathbf{E}_{\pi_{i}}(R(S|\mathbf{L}_{i}))$ is supermodular as a function of $S$ for all $i$.  Hence, $\mathbf{E}_{\pi}(R_{avg}(S))$ is a nonnegative weighted sum of supermodular functions, and is therefore a supermodular function of $S$ by Lemma \ref{lemma:additive_submod}.

 The problem of minimizing $\mathbf{E}_{\pi}(R_{avg}(S))$ when the number of leaders cannot exceed $k$ can be solved using an algorithm analogous to \textbf{static-$k$}.  Similarly, the problem of finding the smallest leader set $S$ that is within an upper bound $\alpha$ on $\mathbf{E}_{\pi}(R_{avg}(S))$ can be solved using an algorithm analogous to \textbf{static-$\alpha$}.  Both modified algorithms take $\pi_{1}, \ldots, \pi_{M}$ as additional inputs and have line 3 changed to
 \begin{displaymath}
 s_{i}^{\ast} \leftarrow \arg\max_{v \in V \setminus S}{\left\{\frac{1}{M}\sum_{i=1}^{M}{\mathbf{E}_{\pi_{i}}(R(S|\mathbf{L}_{i}) - R(S \cup \{v\}|\mathbf{L}_{i}))}\right\}}.
 \end{displaymath}

 Considering the expected worst-case error, $\mathbf{E}_{\pi}(R_{worst}(S))$, observe that, by Jensen's inequality~\cite{probability},
 \begin{equation}
  \label{eq:switching_random_lower_bound}
 \mathbf{E}_{\pi}\left(\max_{i=1,\ldots, M}{\{R(S|\mathbf{L}_{i})\}}\right) \geq \max_{i=1,\ldots, M}{\{\mathbf{E}_{\pi}(R(S|\mathbf{L}_{i}))\}} 
  = \max_{i=1,\ldots, M}{\left\{\mathbf{E}_{\pi_{i}}(R(S|\mathbf{L}_{i}))\right\}}.
  \end{equation}
  The lower bound  (\ref{eq:switching_random_lower_bound}) is the worst-case expected error experienced when the leader set is $S$.    Since $\mathbf{E}_{\pi_{i}}(R(S|\mathbf{L}_{i}))$ is supermodular as a function of $S$, the function
  \begin{equation}
  \label{eq:random_failures_switching_helper}
  \tilde{F}_{c}(S) = \max{\left\{\mathbf{E}_{\pi_{i}}(R(S|\mathbf{L}_{i})), c\right\}}
  \end{equation}
  is supermodular in $S$ by Lemma \ref{lemma:max_supermod}.  Therefore, the problem of selecting up to $k$ leaders in order to minimize $\mathbf{E}_{\pi}(R_{worst}(S))$ can be approximately solved by an algorithm similar to \textbf{switching-$k$}.  The modified algorithm takes $\pi_{1}, \ldots, \pi_{M}$ as additional input, and replaces the function $F_{\alpha^{j}}(S)$ at line 5 with the function $\tilde{F}_{\alpha^{j}}(S)$ defined in (\ref{eq:random_failures_switching_helper}).


  In order to choose the minimum-size set of leaders such that $\max_{i=1,\ldots, M}{\left\{\mathbf{E}_{\pi}(R(S|\mathbf{L}_{i}))\right\}}$ is below an error bound $\alpha$, a supermodular optimization problem analogous to (\ref{eq:dual_switching_opt}) can be used.   The constraint  of the modified problem is given by $\tilde{F}_{\alpha}(S) \leq \alpha$ with $\tilde{F}_{\alpha}(S)$ defined as in (\ref{eq:random_failures_switching_helper}).

\subsection{Leader Selection Under Arbitrarily Time-Varying Topologies}
\label{subsec:arbitrary}

In this section, MAS with topologies that vary arbitrarily in time are considered.  We assume that time is divided into steps, and that at step $t$, the MAS owner has knowledge of the topologies $G_{1}, \ldots, G_{t-1}$ for steps $1, 2, \ldots, t-1$, respectively, as well as their corresponding weighted Laplacians $L_{1}, \ldots, L_{t-1}$, but does not know the topology $G_{t}$ for step $t$.  Furthermore, we assume that, while the topology may vary arbitrarily over time, the variation is sufficiently slow that the agent dynamics approximate the steady-state for graph topology $G_{t}$ during the $t$-th time step.

Under this model, a fixed set of leaders chosen at step $t=1$, may give poor performance for the subsequent topologies $G_{2}, \ldots, G_{T}$.  Instead, it is assumed that, at step $t$, a new set of leaders $S_{t}$ is selected based on the observed topologies $G_{1}, \ldots, G_{t-1}$.  The error for each topology is given by $R(S_{t}|L_{t})$. The leader selection problem for time-varying topologies is stated as
\begin{equation}
\label{eq:arbitrary_opt}
\begin{array}{cc}
\mbox{minimize}_{S_{1}, \ldots, S_{T}} & \sum_{t=1}^{T}{R(S_{t}|L_{t})} \\
\mbox{s.t.} & |S_{t}| \leq k
\end{array}
\end{equation}

Problem (\ref{eq:arbitrary_opt}) is an online supermodular optimization problem~\cite{streeter2007online}.
The method for choosing a set of leader nodes $S_{t}^{\ast}$ for the $t$-th time step is as follows.  Consider the \textbf{static}-$k$ algorithm.  Since $G_{t}$ is unknown and random,
the node $j$ that maximizes $\{R(S_{t,i}^{\ast}) - R(S_{t,i}^{\ast} \cup \{j\})\}$ at the $i$-th iteration of the algorithm is also random.  Let
\begin{equation}
\label{eq:time_varying_distribution}
\pi_{t,i}(l) = Pr(\arg\max_{j}{\{R(S_{t,i}^{\ast}) - R(S_{t,i}^{\ast} \cup \{j\})\}} = l).
\end{equation}
Then for step $t$, instead of selecting $s_{t,i}^{\ast}$ deterministically as in Line 3 of \textbf{static}-$k$, $s_{t,i}^{\ast}$ is selected probabilistically with distribution $\pi_{t,i}$ in (\ref{eq:time_varying_distribution}).

In general, the exact values of $\pi_{t,i}$ will not be known during leader selection.  To address this, an online learning technique  is used to estimate $\pi_{t,i}$ based on observations from the previous $t-1$ time steps.  Under this approach, a set of weights $\mathbf{w}_{t,1}, \ldots, \mathbf{w}_{t,k}$ is maintained, where $\mathbf{w}_{t,i}$ is a vector in $\mathbf{R}^{n}$ with $\mathbf{w}_{t,i}(j)$ representing the weight assigned to choosing node $j$ as the $i$-th leader during step $t$.  Define
\begin{equation}
\label{eq:s_opt}
s_{t,i}^{opt} \triangleq \arg\max_{j}{\{R(S^{\ast}_{t,i-1}|L_{t}) - R(S^{\ast}_{t,i-1} \cup \{j\}|L_{t})\}}.
\end{equation}
In other words $s_{t,i}^{opt}$ in (\ref{eq:s_opt}) is the best possible choice of $s_{t,i}^{\ast}$ for the topology $G_{t}$.
Then define the loss $l_{t,i,j}$ associated with choosing $s_{t,i}^{\ast}=j$ to be
\begin{equation}
l_{t,i,j} \triangleq 1 - \frac{R(S_{t}^{\ast}|L_{t}) - R(S_{t}^{\ast} \cup \{j\}|L_{t})}{R(S_{t}^{\ast}|L_{t}) - R(S_{t}^{\ast} \cup\{s_{t,i}^{opt}\}|L_{t})}.
\end{equation}
At the end of step $t$, the value of $\mathbf{w}_{t,i}(j)$ is updated to
$\mathbf{w}_{t+1,i}(j) = \beta^{l_{t,i,j}}\mathbf{w}_{t,i}(j)$,
where $\beta \in (0,1]$ is a system parameter that can be tuned to adjust the performance of the learning algorithm.  This is interpreted as penalizing  node $j$ for suboptimal performance during interval $t$.  By decreasing the value of $\beta$, nodes experiencing higher losses $l_{t,i,j}$ will be much less likely to be selected during the $(t+1)$-th time step.
$\mathbf{w}_{t+1,i}$ is then normalized to obtain the estimated distribution $\pi_{t+1,i}$.  This process is described in detail in algorithm \textbf{online}-$k$.

In analyzing this approach, the total error $\sum_{t}{R(S_{t}|G_{t})}$ can be compared to the error achievable when all $T$ topologies are known in advance.  The following theorem gives a bound on the difference between these two errors.
\begin{figure}
\centering
\renewcommand{\arraystretch}{0.6}
\begin{tabular}{lr}
\hline
\small{\textbf{Algorithm online-$k$}: Algorithm for selecting up to $k$ leaders } & \\
\small{for an arbitrarily time-varying topology} & \\
\hline
\small{\textbf{Input:} Current weights $\mathbf{w}_{t,1}, \ldots, \mathbf{w}_{t,k}$} & \\
~~~~~~~\small{$G_{t} = (V,E_{t})$, link error variances $\nu_{ij}^{t}$} & \\
~~~~~~~\small{Parameter $\beta \in (0,1]$} & \\
~~~~~~~\small{Maximum number of leader nodes $k$} & \\
~~~~~~~\small{Current set of leader nodes $S^{\ast}_{t}$} & \\
\small{\textbf{Output:} Updated weights $\mathbf{w}_{t+1,1}, \ldots, \mathbf{w}_{t+1,k}$} & \\
~~~~~~~~~\small{Set  of leader nodes $S_{t+1}^{\ast}$} & \\
\small{\textbf{Initialization:} $S_{t+1}^{\ast} \leftarrow \emptyset$} & \small{1}\\
\small{\textbf{for} $i=1,\ldots,k$} & \small{2}\\
\small{~~~~\small{$s_{t,i}^{opt} \leftarrow \arg\max_{j}{\{R(S_{t,i-1}^{\ast}) - R(S_{t,i-1}^{\ast} \cup \{j\})\}}$}} & \small{3} \\
~~~~\small{\textbf{for} $j=1,\ldots,n$} & \small{4}\\
~~~~~~~~\small{$l_{t,i,j} \leftarrow 1 - \frac{R(S_{t,i-1}^{\ast}) - R(S_{t,i-1}^{\ast} \cup \{j\})}{R(S_{t,i-1}^{\ast}) - R(S_{t,i-1}^{\ast} \cup \{s_{t,i}^{opt}\})}$} & \small{5} \\
~~~~~~~~\small{$\mathbf{w}_{t+1,i}(j) \leftarrow (\mathbf{w}_{t,i}(j))\beta^{l_{t,i,j}}$} & \small{6}\\
~~~~\small{\textbf{end}} & \small{7}\\
~~~~\small{$\mathbf{\pi}_{t+1,i} \leftarrow \mathbf{w}_{t+1,i}/\mathbf{1}^{T}\mathbf{w}_{t+1,i}$} & \small{8}\\
~~~~\small{Choose $s_{t+1,i}^{\ast}$ randomly  with distribution $\pi_{t+1,i}$} & \small{9}\\
~~~~\small{$S_{t+1}^{\ast} \leftarrow S_{t+1}^{\ast} \cup \{s_{t+1,i}^{\ast}\}$} & \small{10}\\
\small{\textbf{end}} & \small{11}\\
\hline
\end{tabular}
\label{algorithm:iteration}
\vspace{-20pt}
\end{figure}
\begin{theorem}
\label{theorem:optimality_online}
Suppose that the algorithm \textbf{online-$k$} is executed for $T$ steps, and let $G_{1}, \ldots, G_{T}$ be the topologies during those steps.  Let $R_{max}$ be defined as in Theorem \ref{theorem:primal_static_opt_bound}.
Define the error $K$ to be
\begin{equation}
K \triangleq (1-1/e)\sum_{t=1}^{T}{R(S_{t}|L_{t})} - \left(\max_{|S|=k}{\left\{\sum_{t=1}^{T}{R(S|L_{t})}\right\}}\right).
\end{equation}
Then $K \leq \mathcal{O}(\sqrt{R_{max}kT\log{n}})$.
\end{theorem}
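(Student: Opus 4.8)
The plan is to recognize Problem~(\ref{eq:arbitrary_opt}) as an instance of online submodular maximization and to reduce the statement to the online-greedy guarantee of~\cite{streeter2007online}. Since $R(S\mid L_t)$ is nonincreasing and supermodular by Corollary~\ref{theorem:R_supermodular}, the per-round marginal error reductions $\Delta_{t,i}(j) = R(S_{t,i-1}^{\ast}\mid L_t) - R(S_{t,i-1}^{\ast}\cup\{j\}\mid L_t)$ play the role of the marginal gains of a monotone submodular reward, and the offline greedy procedure underlying \textbf{static}-$k$ is exactly the $(1-1/e)$-approximation that supplies the factor $(1-1/e)$ appearing in $K$. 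So the first step is to make this correspondence precise, so that the diminishing-returns inequality of Definition~\ref{def:submodular} can be invoked at each greedy slot and chained across the $T$ rounds.

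Next I would unfold \textbf{online}-$k$ into $k$ independent no-regret learners. The weight vectors $\mathbf{w}_{t,1},\ldots,\mathbf{w}_{t,k}$ with the multiplicative update $\mathbf{w}_{t+1,i}(j)=\beta^{l_{t,i,j}}\mathbf{w}_{t,i}(j)$ are precisely a Hedge / randomized-weighted-majority instance run at each slot $i$ over the $n$ candidate nodes, fed the normalized losses $l_{t,i,j}\in[0,1]$. Standard exponential-weights analysis gives each slot regret $O(\sqrt{T\log n})$ in the normalized loss, and the online-greedy chaining lemma of~\cite{streeter2007online} then shows that composing these $k$ learners through the submodular structure yields
\begin{displaymath}
(1-1/e)\sum_{t=1}^{T}{R(S_t\mid L_t)} - \max_{|S|=k}\left\{\sum_{t=1}^{T}{R(S\mid L_t)}\right\} \;\le\; \sum_{i=1}^{k}\rho_i ,
\end{displaymath}
where $\rho_i$ is slot $i$'s regret measured on the true error scale, i.e. the normalized Hedge regret rescaled by the largest marginal gain available at slot $i$. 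This inequality is exactly $K \le \sum_i \rho_i$.

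To obtain the claimed $O(\sqrt{R_{max}\,kT\log n})$ rather than the cruder $R_{max}\sqrt{kT\log n}$, the quantitative observation is that the per-slot scales telescope and can be summed. Writing $\rho_i = c\sqrt{g_i\,T\log n}$ with $g_i$ the slot-$i$ gain scale, and bounding $\sum_i g_i \le R_{max}$ (the total reduction achievable beyond the first leader is controlled by the worst single-leader error $R_{max}$ of Theorem~\ref{theorem:primal_static_opt_bound}), the Cauchy--Schwarz inequality gives $\sum_{i=1}^{k}\rho_i = c\sqrt{T\log n}\sum_{i=1}^{k}\sqrt{g_i} \le c\sqrt{kT\log n}\,\sqrt{\sum_{i=1}^{k}g_i} \le c\sqrt{R_{max}\,kT\log n}$, which is the stated bound.

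The step I expect to be the main obstacle is making the reduction to~\cite{streeter2007online} rigorous in the presence of the randomness and history-dependence built into \textbf{online}-$k$: the gains fed to learner $i$ at round $t$ are evaluated at the realized prefix $S_{t,i-1}^{\ast}$, which is itself random and depends on the earlier learners and on all previous rounds. Controlling this requires the conditional expectation-and-telescoping argument of the online-greedy analysis, bounding each learner's regret after conditioning on its realized prefix. Closely tied to this is justifying the per-slot scales $g_i$ so that $\sum_i g_i \le R_{max}$, since the very first slot is degenerate (its normalized losses collapse toward $0$ because $R(\emptyset)$ is effectively unbounded) and must be handled separately from the telescoping of slots $i\ge 2$. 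Once this bookkeeping is in place, the remaining ingredients---the $(1-1/e)$ greedy bound and the $O(\sqrt{T\log n})$ Hedge regret---are standard.
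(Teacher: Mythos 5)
Your proposal takes essentially the same route as the paper: the paper's entire proof consists of observing that $R(\cdot|L_{1}),\ldots,R(\cdot|L_{T})$ is a sequence of supermodular functions bounded above by $R_{max}$ and then invoking Lemma 4 of \cite{streeter2007online} directly. What you have written is a correct unpacking of the internals of that cited lemma (the per-slot exponential-weights regret, the chaining across the $k$ greedy slots, and the Cauchy--Schwarz aggregation of the per-slot scales), details the paper delegates entirely to the citation.
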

\begin{proof}
$R(S|L_{1}), \ldots, R(S|L_{T})$ is a sequence of supermodular functions bounded above by $R_{max}$.  Then, by Lemma 4 of \cite{streeter2007online},
\begin{equation}
\label{eq:online_proof_helper}
(1-1/e)\sum_{t=1}^{T}{R((S_{t}^{\ast}|L_{t}}) - \left(\max_{|S|=k}{\left\{\sum_{t=1}^{T}{R(S|L_{t})}\right\}}\right)
\leq \mathcal{O}(\sqrt{R_{max}kT\log{n}})
\end{equation}
as desired.
\end{proof} 
\section{Simulation Study}
\label{sec:simulation}

\noindent In this section, we evaluate the performance of our leader selection algorithms.  
\label{parameters}
Simulations are carried out using Matlab.  A network of 100 agents is simulated, with agents placed at random positions within a $1000$m x $1000$m rectangular area. Two agents are assumed to share a link if they are deployed within $300$m of each other.  The error variance $\nu_{ij}$ of link $(i,j)$ is assumed to be proportional to the distance between agents $i$ and $j$.  Each data point in the following figures represents an ensemble average of $50$ trials, unless otherwise indicated.

For comparison, five different leader selection algorithms are simulated.  In the first algorithm, a random subset of agents is chosen to act as leaders.  In the second algorithm, the $k$ nodes with highest degree (i.e., largest number of neighbors) are chosen to act as leaders.
In the third algorithm, the $k$ nodes with degree closest to the average degree are selected.  The fourth algorithm simulated, for the case of selecting up to $k$ leaders in a static topology, is the convex optimization approach of \cite{fardad2011noisefree}.
The fifth algorithm  was our supermodular optimization.

\begin{figure}[h]
\centering
$\begin{array}{cc}
\includegraphics[width=3in]{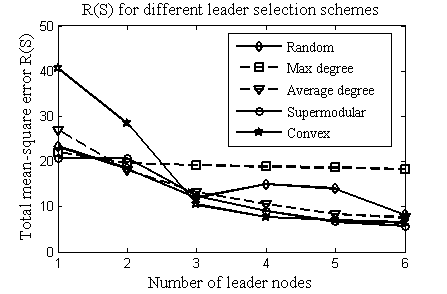} &
\includegraphics[width=3in]{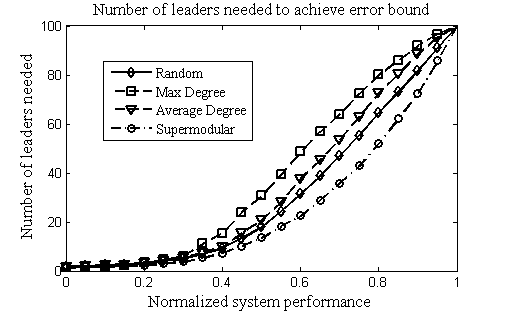} \\
(a) & (b) \\
\end{array}$
\caption{(a) A comparison of our supermodular optimization approach to leader selection with the convex optimization approach of \cite{fardad2011noisefree}, as well as random and degree-based leader selection.  Either the supermodular or convex optimization approach provides minimal error, depending on the number of leader agents. (b) The supermodular optimization approach \textbf{static-$\alpha$} requires fewer leaders to satisfy the error constraint than either the degree-based or random heuristics.  Error values are normalized to between 0 and 1.}
\label{fig:static_comparison}
\end{figure}

\underline{\emph{Case 1: MAS with static network topology -- }} Figure \ref{fig:static_comparison}(a) compares the performance of the five algorithms considered for the
problem of choosing up to $k$ leaders in order to minimize the total system error.
For this comparison, in order to reduce the runtime of the  convex optimization
approach, a smaller network of 25 nodes is used.   Figure \ref{fig:static_comparison}(a) shows the error achieved by the different leader selection algorithms for a fixed network topology and varying leader set size, in which either the convex or supermodular approaches provide optimal performance depending on the number of leader agents.  When $k=1,2,5,6$, our supermodular optimization approach results in lower mean-square error, while the convex optimization approach in \cite{fardad2011noisefree} selects leaders with lower error when $k=3,4$.

For the problem of choosing the minimum number of leaders to achieve an error bound, the supermodular optimization approach requires only 40 leaders to achieve normalized error of $0.7$ (for example), compared to $50$ leaders for the random heuristic and over 60 leaders for the maximum degree method (Figure \ref{fig:static_comparison}(b)).
Figure \ref{fig:static_comparison}(b) also suggests that the random heuristic consistently outperforms both degree-based algorithms.  Selecting the nodes with average degree also performs better than selecting the maximum-degree nodes to act as leaders.

\begin{figure}[h]
\centering
$\begin{array}{cc}
\includegraphics[width=3in]{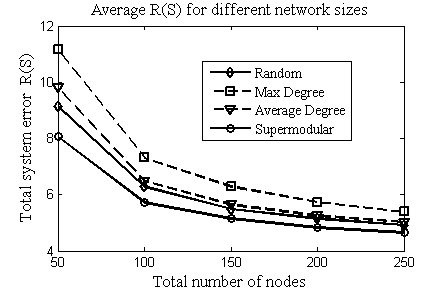} &
\includegraphics[width=3in]{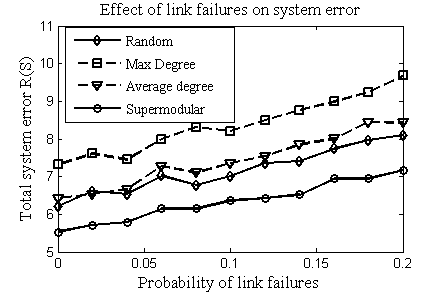} \\
(a) & (b)
\end{array}$
\caption{(a) Error experienced by MAS as the number of agents increases.  For each network size, the supermodular optimization approach provides the lowest overall error, followed by random selection, average degree-based selection, and maximum degree-based selection.  Overall error decreases in all cases due to the increase in node density.  (b) Effect of random failures on MAS error when number of agents is $100$ and number of leaders is $10$.  Although all selection schemes experience an increase in error due to link failures, the increase is smallest for the supermodular approach.}
\label{fig:network_size}
\end{figure}

The total system error experienced by the network as a function of network size is explored in Figure \ref{fig:network_size}(a).  The number of leaders is equal to $0.1n$, where $n$ is the number of agents.  Since the deployment area remains constant, adding agents to the network increases the number of links, resulting in smaller overall error.  Hence, while the supermodular approach still outperforms the other methods, the difference in overall error decreases as the node density grows large.

~\\

\underline{\emph{Case 2: MAS experiencing random link failures -- }}
Figure \ref{fig:network_size}(b) shows the error experienced for each method when links fail independently and at random, with probability ranging from $0$ to $0.2$.  The number of leaders is equal to $10$ for each scheme, while the network size is $100$.  The supermodular optimization algorithm uses the Monte Carlo approach described in Section \ref{subsec:random_topology}.  While each scheme sees a degradation in performance as the probability of failure increases, this degradation is minimized by the supermodular optimization method.



\begin{figure}[h]
\centering
$\begin{array}{cc}
\includegraphics[width=3in]{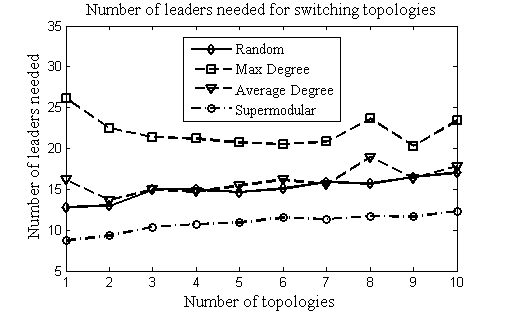} &
\includegraphics[width=3in]{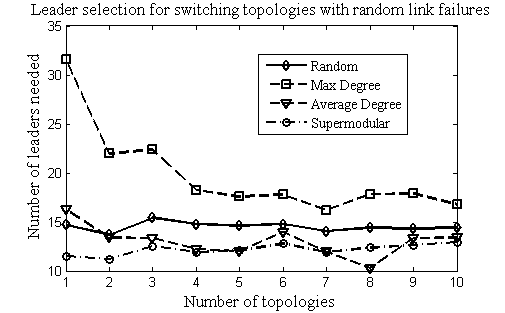} \\
(a) & (b)
\end{array}$
\caption{(a) Number of leaders required to achieve system error of 6 when the MAS switches between $M$ randomly generated topologies, where $M$ varies from $1$ to $10$.  The number of leaders needed increases with the number of topologies; the supermodular selection method requires the fewest number of leaders. (b) Number of leaders required when the MAS switches between $M$ randomly generated topologies and links fail independently with probability $p=0.05$.  The number of leaders required is greater due to link failures.}
\label{fig:switching}
\end{figure}

\underline{\emph{Case 3: MAS that switch between predefined topologies --}}
Figure \ref{fig:switching}(a) shows the number of leaders needed to achieve an error level of $6$ for each algorithm for MAS under switching topologies.  For this evaluation, a set of $M$ topologies, where $M$ varied from $1$ to $10$, is generated at random based on the deployment area and node communication range described in the first paragraph of Section \ref{sec:simulation}.  A fixed leader set $S$ is then selected using each heuristic.  Each scheme requires a larger leader set as the number of prespecified topologies increased; however, for the supermodular optimization approach, a fixed set of $10$ leaders provides an error of less than $6$ for 10 different topologies.
Overall, fewer leaders are needed for the supermodular optimization approach.  Random leader selection requires fewer leaders than the average degree heuristic, which in turn outperforms selection of the highest-degree nodes as leaders.

The case of MAS with switching topologies and independent, random link failures is shown in Figure \ref{fig:switching}(b).  Link failures increase the number of leaders required to achieve the error bound for each value of the number of topologies, $M$. While the supermodular optimization approach continued to perform better than the other heuristics in most cases, the performance improvement was less significant.  

\begin{figure}[h]
\centering
\includegraphics[width=3in]{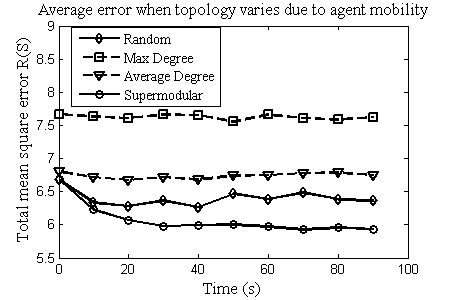}
\caption{Leader selection under arbitrary time-varying topologies.  Nodes move according to a group mobility model~\cite{hong1999group} with speed 30 m/s.  A new set of leaders is selected every 10 seconds.  While the performance of the four selection algorithms is comparable, the online supermodular approach \textbf{online-$k$} performs better over time by incorporating observed network topology information.}
\label{fig:arbitrary}
\end{figure}

\underline{\emph{Case 4: MAS with arbitrarily time-varying topology -- }}
Figure \ref{fig:arbitrary} shows the performance of leader selection schemes when the topology varies over time due to agent mobility.  Nodes are assumed to move according to a group mobility model, in which nodes attempt to maintain their positions with respect to a reference point~\cite{hong1999group}.  The reference point varies according to a random walk with speed 30 m/s.  Each node's position is equal to its specified position relative to the reference plus a uniformly distributed error.  A new set of $10$ leaders is selected every 10 seconds.
As in the other cases, the supermodular optimization approach consistently provides the lowest error, followed by the random, average degree, and maximum degree heuristics.  Moreover, the \textbf{online-$k$} algorithm improves its performance over time by observing which agents provided the best performance when chosen as leaders and assigning those agents a higher weight.

\section{Conclusions}
\label{sec:conclusion}
\noindent In this paper, the problem of selecting leaders in linear multi-agent systems in order to minimize error due to communication link noise was studied.  We analyzed the total mean-square error in the follower agent states, and formulated the problem of selecting up to $k$ leaders in order to minimize the error, as well as the problem of selecting the minimum-size set of leaders to achieve a given upper bound on the error.  We examined both problems for different cases of MAS, including MAS with (a) static network topology, (b) topologies that experience random link  failures, (c) switching between predefined topologies, and (d) topologies that vary arbitrarily over time.  We showed that all of these cases can be solved within a supermodular optimization framework.  We introduced efficient algorithms for selecting a set of leaders that approximates the optimum set up to a provable bound for each of the four cases.   
 Our proposed approach was evaluated and compared with other methods, including random leader selection, selecting high-degree agents as leaders, selecting average-degree agents as leaders, and a convex optimization-based approach through a simulation study.  Our study showed that supermodular leader selection significantly outperformed the random and degree-based leader selection algorithms in static as well as dynamic MAS while providing provable bounds on the MAS performance, and provides performance comparable to the convex optimization approach.



\bibliographystyle{IEEEtran}
\bibliography{Clark_TAC_2011}
\appendices
\section{Derivation of Agent Dynamics}
\label{sec:derivation_agent_dynamics}
In this appendix, the dynamics of each agent are derived following the analysis in \cite{barooah2006graph}, and a proof that the error variance is equal to $\frac{1}{2}\mathbf{tr}(L_{ff}^{-1})$ is provided.  The goal of each follower agent $i$ is to estimate $x_{i}-x_{i}^{\ast}$, the deviation from the desired state.  Since $x_{i}^{\ast} - x_{j}^{\ast} = r_{ij}$, we assume that $x_{i}^{\ast} = x_{j} + r_{ij}$.  Each agent $i$'s relative state estimate for each $j \in N(i)$ can therefore be written as
\begin{displaymath}
y_{ij} = x_{i} - x_{j} + \epsilon_{ij} = x_{i}-x_{i}^{\ast} + r_{ij} + \epsilon_{ij}.
\end{displaymath}
Letting $\hat{y}_{ij} = y_{ij}-r_{ij} = x_{i}-x_{i}^{\ast} + \epsilon_{ij}$, the goal of agent $i$ is to estimate $x_{i}-x_{i}^{\ast}$ given the system of equations
\begin{displaymath}
\hat{y}_{ij} = x_{i}-x_{i}^{\ast} + \epsilon_{ij}, \quad j \in N(i),
\end{displaymath}
which can be written in vector form as $\hat{\mathbf{y}} = \mathbf{1}(x_{i}-x_{i}^{\ast}) + \epsilon$. By the Gauss-Markov Theorem, the best linear unbiased estimator of  $x_{i}-x_{i}^{\ast}$ is equivalent to the least squares estimator, which is given by $(\mathbf{1}^{T}C\mathbf{1})^{-1}\mathbf{1}^{T}C^{-1}\mathbf{\hat{y}}$, where $C$ is the diagonal covariance matrix of $\epsilon$.  The estimator reduces to
\begin{displaymath}
x_{i}-x_{i}^{\ast} \approx \left(\sum_{j \in N(i)}{\nu_{ij}^{-1}}\right)^{-1}\sum_{j \in N(i)}{\nu_{ij}^{-1}\hat{y}_{ij}},
\end{displaymath}
thus motivating the choice of link weights $D_{i}^{-1}\nu_{ij}^{-1}$, where $D_{i} = \sum_{j \in N(i)}{\nu_{ij}^{-1}}$.  The following lemma, which first appeared in \cite{barooah2006graph}, characterizes the asymptotic error variance of the follower agent states under these dynamics.

\begin{lemma}
\label{lemma:appendix_error_proof}
Let $X(t)$ denote the covariance matrix of $\mathbf{x}_{f}(t)$.  Then $\lim_{t \rightarrow \infty}{X(t)} = \frac{1}{2}L_{ff}^{-1}$.
\end{lemma}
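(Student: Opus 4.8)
The plan is to recognize the follower error as a linear, time-invariant stochastic process driven by white noise and to obtain its stationary covariance from the associated Lyapunov equation. First I would set $\mathbf{e}(t) \triangleq \mathbf{x}_f(t) - \mathbf{x}_f^{\ast}$; since $\mathbf{x}_f^{\ast}$ is a deterministic constant, the covariance $X(t)$ of $\mathbf{x}_f(t)$ coincides with that of $\mathbf{e}(t)$. Subtracting the equilibrium relation $L_{ff}\mathbf{x}_f^{\ast} + L_{fl}\mathbf{x}_l^{\ast} + B\mathbf{r} = 0$ from the vector dynamics of Section \ref{subsec:system_model} yields the homogeneous stochastic differential equation $\dot{\mathbf{e}}(t) = -A\mathbf{e}(t) + \mathbf{w}(t)$, where $A \triangleq D_f^{-1}L_{ff}$.

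Second, I would identify the intensity matrix $Q$ of the aggregate noise $\mathbf{w}(t)$. Tracing the derivation of the agent dynamics, the $i$-th component is $w_i(t) = -D_i^{-1}\sum_{j \in N(i)}\nu_{ij}^{-1}\epsilon_{ij}(t)$. Using the independence of distinct link noises together with $\mathbf{E}(\epsilon_{ij}(t)\epsilon_{ij}(t+\tau)) = \nu_{ij}\delta(\tau)$, only the diagonal terms survive, giving
\begin{displaymath}
\mathbf{E}(w_i(t)w_i(t+\tau)) = D_i^{-2}\sum_{j \in N(i)}\nu_{ij}^{-1}\,\delta(\tau) = D_i^{-1}\delta(\tau),
\end{displaymath}
while $\mathbf{E}(w_i(t)w_{i'}(t+\tau)) = 0$ for $i \neq i'$ because no link noise is shared across the updates of different agents. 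Hence $Q = D_f^{-1}$.

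Third, since $L_{ff}$ is positive definite when $G$ is connected and $D_f$ is a positive diagonal matrix, $A = D_f^{-1}L_{ff}$ is similar to the symmetric positive-definite matrix $D_f^{-1/2}L_{ff}D_f^{-1/2}$ and therefore has eigenvalues with strictly positive real part. The system is asymptotically stable, the transient term $e^{-At}X(0)e^{-A^{T}t}$ decays, and so the stationary covariance $X_\infty \triangleq \lim_{t \to \infty}X(t) = \int_0^\infty e^{-As}Qe^{-A^{T}s}\,ds$ exists and is the unique solution of the continuous Lyapunov equation $AX_\infty + X_\infty A^{T} = Q$. I would then verify by substitution that $X_\infty = \tfrac{1}{2}L_{ff}^{-1}$ solves this equation: using $A^{T} = L_{ff}D_f^{-1}$ (by symmetry of $L_{ff}$ and diagonality of $D_f$),
\begin{displaymath}
A\left(\tfrac{1}{2}L_{ff}^{-1}\right) + \left(\tfrac{1}{2}L_{ff}^{-1}\right)A^{T} = \tfrac{1}{2}D_f^{-1} + \tfrac{1}{2}D_f^{-1} = D_f^{-1} = Q.
\end{displaymath}
By uniqueness of the Lyapunov solution, $\lim_{t\to\infty}X(t) = \tfrac{1}{2}L_{ff}^{-1}$, as claimed.

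The main obstacle I anticipate is pinning down the noise intensity $Q$, and in particular establishing that its off-diagonal entries vanish. This hinges on interpreting $\epsilon_{ij}$ as the noise in agent $i$'s directed measurement of $j$ (as in the estimator setup of this appendix), so that the noise streams entering distinct agents' updates are independent; were the per-link noise shared between the two endpoints, $Q$ would acquire off-diagonal terms and the clean factor $\tfrac{1}{2}$ would be lost. The remaining steps — stability, existence of the limit, and the Lyapunov substitution — are routine linear-systems computations once $Q$ is fixed, and a scalar check ($\dot{e}=-ae+w$ with intensity $q$ gives $X_\infty = q/(2a)$) confirms that the factor $\tfrac{1}{2}$ is intrinsic to the Lyapunov structure rather than an artifact of the white-noise convention.
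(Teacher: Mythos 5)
Your proof is correct and follows essentially the same route as the paper: both arguments reduce the problem to the continuous Lyapunov equation $D_{f}^{-1}L_{ff}X + XL_{ff}D_{f}^{-1} = D_{f}^{-1}$ and verify that $X = \frac{1}{2}L_{ff}^{-1}$ solves it. You additionally supply details the paper asserts without proof --- the explicit computation of the noise intensity $Q = D_{f}^{-1}$ from the per-link statistics (including the diagonality issue you rightly flag), the stability of $D_{f}^{-1}L_{ff}$ via similarity to $D_{f}^{-1/2}L_{ff}D_{f}^{-1/2}$, and uniqueness of the Lyapunov solution --- all of which are correct.
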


\begin{proof}
The dynamics of $\mathbf{x}_{f}(t)$ are given by
\begin{displaymath}
\dot{\mathbf{x}}_{f}(t) = -D_{f}^{-1}(L_{ff}\mathbf{x}_{f}(t) + L_{fl}\mathbf{x}_{l}^{\ast} + B\mathbf{r}) + \mathbf{w}(t),
\end{displaymath}
where $\mathbf{w}(t)$ is a zero-mean white process with variance $D_{f}^{-1}$.  Asymptotically, the covariance matrix of $\mathbf{x}_{f}(t)$ is given as the positive definite solution $X$ to the Lyapunov equation
\begin{displaymath}
-D_{f}^{-1}L_{ff}X - XL_{ff}D_{f}^{-1} + D_{f}^{-1} = 0.
\end{displaymath}
By inspection, $X = \frac{1}{2}L_{ff}^{-1}$.
\end{proof}

\section{Proof of Lemma \ref{lemma:max_supermod}}
\label{sec:supermod_lemma_proof}
Lemma \ref{lemma:max_supermod} is restated and proved as follows.
\begin{lemma}
Let $f:2^{V} \rightarrow \mathbf{R}$ be a nonincreasing supermodular function, and let $c \geq 0$ be constant.  Then $g(S) = \max{\{f(S),c\}}$ is supermodular.
\end{lemma}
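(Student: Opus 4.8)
The plan is to work directly from the definition of supermodularity (Definition \ref{def:submodular}): fixing $S \subseteq T$ and $v \notin T$, I must show $g(S) - g(S \cup \{v\}) \geq g(T) - g(T \cup \{v\})$. To keep the bookkeeping manageable I would abbreviate $A = f(S)$, $B = f(S\cup\{v\})$, $P = f(T)$, $Q = f(T \cup \{v\})$ and first record what the hypotheses give about these four numbers. Because $f$ is nonincreasing (Definition \ref{def:monotone}) and $S \subseteq S\cup\{v\}$, $S \subseteq T$, $S\cup\{v\} \subseteq T\cup\{v\}$, $T \subseteq T\cup\{v\}$, I get $A \geq B$, $A \geq P$, $B \geq Q$, and $P \geq Q$; supermodularity of $f$ supplies the crucial inequality $A - B \geq P - Q$. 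In this notation the goal reads $\max\{A,c\} - \max\{B,c\} \geq \max\{P,c\} - \max\{Q,c\}$.

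Writing $\phi(x) = \max\{x,c\}$, the key fact I would exploit is that $\phi$ is nondecreasing and admits the representation $\phi(y) - \phi(x) = \int_x^y \mathbf{1}\{t \geq c\}\,dt$ for any $y \geq x$. This converts the target into the geometric claim that the portion of $[B,A]$ lying above the level $c$ is at least as long as the portion of $[Q,P]$ above $c$, i.e. $\int_B^A \mathbf{1}\{t\geq c\}\,dt \geq \int_Q^P \mathbf{1}\{t\geq c\}\,dt$. Intuitively $[B,A]$ is both longer than $[Q,P]$ (by supermodularity) and shifted upward (by monotonicity), and each effect can only enlarge the overlap with $[c,\infty)$.

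I would make this precise by a shifting argument. Setting $\delta = B - Q \geq 0$ and substituting, I rewrite $\int_Q^P \mathbf{1}\{t\geq c\}\,dt = \int_B^{P+\delta} \mathbf{1}\{t - \delta \geq c\}\,dt$. Since the indicator $\mathbf{1}\{\,\cdot\, \geq c\}$ is nondecreasing and $\delta \geq 0$, removing the shift only increases the integrand, so this is at most $\int_B^{P+\delta}\mathbf{1}\{t \geq c\}\,dt$; and since $A - B \geq P - Q$ rearranges to $P + \delta \leq A$ while the integrand is nonnegative, extending the upper limit from $P+\delta$ to $A$ only increases the integral, yielding the claim.

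The main obstacle I anticipate is that $B = f(S\cup\{v\})$ and $P = f(T)$ are not comparable to each other, so a brute-force case split on the position of the threshold $c$ relative to the four values $A,B,P,Q$ branches into many subcases. The antiderivative representation of $\phi$ is exactly what lets me avoid this: it packages the two competing effects---the larger marginal decrease and the upward shift---into a single monotone-integrand estimate, and this is precisely where both hypotheses on $f$ enter (supermodularity as the length comparison $A-B \geq P-Q$, monotonicity as $B \geq Q$ together with $A \geq B$ and $P \geq Q$). As a sanity check I would note that the boundary regimes $c \leq Q$ and $c \geq A$ are consistent with the argument, reducing respectively to the supermodularity of $f$ and to the trivial identity $0 \geq 0$.
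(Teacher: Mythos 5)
Your proof is correct, and it takes a genuinely different route from the paper's. The paper verifies the lattice characterization of supermodularity, $F(A)+F(B)\leq F(A\cup B)+F(A\cap B)$, by a case analysis on where the threshold $c$ falls relative to the four function values, invoking monotonicity of $f$ in the mixed cases (e.g., when $f$ crosses $c$ between $A$ and $B$). You instead work directly with the marginal-difference definition and replace the case split with the antiderivative identity $\max\{y,c\}-\max\{x,c\}=\int_{x}^{y}\mathbf{1}\{t\geq c\}\,dt$ for $y\geq x$, reducing the claim to comparing two interval integrals; the shift step (valid because monotonicity gives $\delta=B-Q\geq 0$) and the extension step (valid because supermodularity gives $P+\delta\leq A$, and $P\geq Q$ keeps the interval $[B,P+\delta]$ properly ordered) then do all the work. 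Each approach has its merits: the paper's is elementary and requires no integral representation, but the enumeration of cases is fiddly and, as written there, somewhat sketchy at the boundaries; yours is case-free, transparently exhaustive, and strictly more general --- the same argument with $\phi'$ in place of the indicator shows that $\phi\circ f$ is supermodular for any nondecreasing convex $\phi$, and it makes clear that the hypothesis $c\geq 0$ is never actually used, only the monotonicity and supermodularity of $f$. Both proofs use the same two hypotheses in essentially the same roles (supermodularity as a length comparison, monotonicity as an ordering of values), so your sanity checks at the extremes $c\leq Q$ and $c\geq A$ correctly bracket the content of the lemma.
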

\begin{proof}
It is enough to show that for any $c$, $F(S) \triangleq \max{\{f(S), c\}}$ is supermodular.  The proof uses the fact that a function $F$ is supermodular if and only if (\cite{fujishige2005submodular})
\begin{equation}
\label{eq:equiv_submod_def}
F(A) + F(B) \leq F(A \cup B) + F(A \cap B)
\end{equation}
as well as the fact that $f(A \cup B) \leq (f(A), f(B)) \leq f(A \cap B)$.  There are four cases.

\noindent \underline{\emph{Case 1:} $\alpha < f(A \cup B)$:} In this case, (\ref{eq:equiv_submod_def}) follows from the supermodularity of $f$.

\noindent \underline{\emph{Case 2:} $f(A) < c$, $f(B) > c$:} Under this case, (\ref{eq:equiv_submod_def}) is equivalent to $f(B) \leq f(A \cap B)$, which follows from the monotonicity property.  The case where $f(A) > c$ and $f(B) < c$ is similar.

\noindent \underline{\emph{Case 3:} $f(A) < c$, $f(B) < c$, $f(A \cap B) > c$:} For this case, (\ref{eq:equiv_submod_def}) is equivalent to $f(A \cap B) \geq \alpha$, which is true by assumption.

\noindent \underline{\emph{Case 4:} $f(A \cap B) \leq \alpha$:} Eq. (\ref{eq:equiv_submod_def}) is trivially satisfied in this case.
\end{proof}

\section{Proof of Theorem \ref{theorem:commute_time}}
\label{sec:appendix}
In this appendix, a proof of Theorem \ref{theorem:commute_time} is given.
 Before proving Theorem \ref{theorem:commute_time}, the following intermediate lemmas are needed.

\begin{lemma}
\label{lemma:intermediate0}
Consider the equation $Lv = J$.  Define $v^{\ast}$ to be the unique vector satisfying $Lv^{\ast} = J$.  When  $v_{u}^{\ast} = 1$, $J_{i} = 0$ for $i \in V \setminus (S+u)$, and $v_{s}^{\ast} = 0$ for $s \in S$,   $(L_{f}^{-1})_{uu}$ is equal to $J_{u}^{-1}$.
\end{lemma}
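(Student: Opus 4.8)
The plan is to exploit the block structure of the Laplacian together with the imposed boundary values on $v^{\ast}$ to collapse the full (singular) system $Lv^{\ast} = J$ onto the follower block, where $L_{ff}$ is invertible. First I would partition the nodes into followers $V \setminus S$ and leaders $S$ and write $v^{\ast} = [v_{f}^{\ast T} \ v_{l}^{\ast T}]^{T}$ accordingly. The hypothesis $v_{s}^{\ast} = 0$ for every $s \in S$ says $v_{l}^{\ast} = 0$, so the follower rows of $Lv^{\ast} = J$ read
\[
L_{ff} v_{f}^{\ast} + L_{fl} v_{l}^{\ast} = L_{ff} v_{f}^{\ast} = J_{f},
\]
where $J_{f}$ is the restriction of $J$ to the follower coordinates. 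Thus the leader contribution drops out entirely, and I am left with a system governed purely by the follower block $L_{ff}$.

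Next I would use the hypothesis $J_{i} = 0$ for all $i \in V \setminus (S \cup \{u\})$ to pin down $J_{f}$. Among the follower coordinates, $J$ vanishes everywhere except at $u$, so $J_{f} = J_{u} e_{u}$, where $e_{u}$ denotes the standard basis vector supported at $u$. Since $G$ is connected, $L_{ff}$ is positive definite and hence invertible by \cite[Lemma 10.36]{mesbahi2010graph}, so I can solve explicitly for $v_{f}^{\ast} = J_{u} L_{ff}^{-1} e_{u}$.

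Finally I would extract the $u$-th coordinate and invoke the normalization $v_{u}^{\ast} = 1$. Reading off the $u$-entry of $v_{f}^{\ast}$ gives
\[
1 = v_{u}^{\ast} = J_{u} (L_{ff}^{-1} e_{u})_{u} = J_{u} (L_{ff}^{-1})_{uu},
\]
and rearranging yields $(L_{ff}^{-1})_{uu} = J_{u}^{-1}$, as claimed.

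I do not anticipate a substantive obstacle: the entire content of the lemma is the observation that grounding the leaders ($v_{l}^{\ast} = 0$) decouples the follower equations from the leader states, reducing a singular full-graph system to the invertible follower block. The only point requiring care is the bookkeeping distinction between the interior follower equations (where $J$ is zero, giving harmonicity away from $u$) and the single inhomogeneous equation at $u$ that determines $J_{u}$; once the problem is cast in the form $L_{ff} v_{f}^{\ast} = J_{u} e_{u}$, this distinction is handled automatically by the matrix inverse, and the leader rows of $Lv^{\ast} = J$ (which merely fix the leader entries of $J$) play no role in the conclusion.
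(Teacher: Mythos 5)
Your proposal is correct and follows essentially the same route as the paper's own proof: both restrict $Lv^{\ast}=J$ to the follower block using $v_{l}^{\ast}=0$, invoke the invertibility of $L_{ff}$ for a connected graph to write $v_{f}^{\ast}=L_{ff}^{-1}J_{f}$, and then use the sparsity of $J_{f}$ (nonzero only at $u$) together with the normalization $v_{u}^{\ast}=1$ to read off $(L_{ff}^{-1})_{uu}J_{u}=1$. The only cosmetic difference is that you package the sparsity as $J_{f}=J_{u}e_{u}$ before inverting, while the paper inverts first and kills the off-$u$ terms afterward; the content is identical.
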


\begin{proof}
Write $v^{\ast} = [v_{f}^{\ast T} \quad 0]^{T}$ and $J = [J_{f}^{T} \quad J_{l}^{T}]^{T}$.  The equation $Lv^{\ast} = J$ then reduces to $L_{f}v_{f}^{\ast} = J_{f}$, which is equivalent to $v_{f}^{\ast} = L_{f}^{-1}J_{f}$.  Multiplying both sides of the equation by $e_{u}^{T}$, where $e_{u}$ has a $1$ in the $u$-th entry and $0$s elsewhere, yields $v_{u}^{\ast} = e_{u}^{T}L_{f}^{-1}J_{f}$.  The right hand side can be expanded to
\begin{equation}
\label{eq:intermediate0}
(L_{f}^{-1})_{u1}J_{1} + \cdots + (L_{f}^{-1})_{u(n-|S|)}J_{n-|S|} = 1.
\end{equation}
By definition of $J$, all terms of the left-hand side of (\ref{eq:intermediate0}) are zero except $(L_{f}^{-1})_{uu}J_{u}$.  Dividing both sides by $J_{u}$ yields the desired result.
\end{proof}

\begin{lemma}
\label{lemma:intermediate1}
  Let $v^{\ast}$ be defined as in Lemma \ref{lemma:intermediate0}. Then $v^{\ast}$ satisfies
\begin{equation}
\label{eq:intermediate1_equation}
v_{i}^{\ast} = \sum_{j \in N(i)}{P(i,j)v_{j}^{\ast}} \qquad \forall i \in V \setminus (S+u),
\end{equation}
where $P(i,j)$ is defined as in (\ref{eq:probability_distribution}).
\end{lemma}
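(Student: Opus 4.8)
The plan is to prove the identity by simply reading off the $i$-th scalar equation of the vector equation $Lv^{\ast} = J$ for each interior index $i \in V \setminus (S+u)$ and invoking the defining property $J_{i} = 0$ at those indices. This is essentially the statement that $v^{\ast}$ is \emph{harmonic} at every follower node other than $u$, with the transition probabilities $P(i,j)$ playing the role of the harmonic averaging weights.

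First I would fix an index $i \in V \setminus (S+u)$ and write out the $i$-th row of $Lv^{\ast} = J$, namely $\sum_{j}{L_{ij}v_{j}^{\ast}} = J_{i}$. By hypothesis $J_{i} = 0$ for all such $i$, so this reduces to $\sum_{j}{L_{ij}v_{j}^{\ast}} = 0$. Next I would substitute the explicit entries of the weighted Laplacian from (\ref{eq:Laplacian_def}): the diagonal term contributes $L_{ii}v_{i}^{\ast} = D_{i}v_{i}^{\ast}$, each neighbor contributes $L_{ij}v_{j}^{\ast} = -\nu_{ij}^{-1}v_{j}^{\ast}$ for $j \in N(i)$, and all remaining entries vanish. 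The row equation therefore becomes
\begin{displaymath}
D_{i}v_{i}^{\ast} - \sum_{j \in N(i)}{\nu_{ij}^{-1}v_{j}^{\ast}} = 0.
\end{displaymath}

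Finally I would divide both sides by $D_{i}$, which is strictly positive because $G$ is connected and hence every node has at least one incident edge with positive weight $\nu_{ij}^{-1}$. Recognizing $\nu_{ij}^{-1}/D_{i}$ as precisely $P(i,j)$ from (\ref{eq:probability_distribution}) yields $v_{i}^{\ast} = \sum_{j \in N(i)}{P(i,j)v_{j}^{\ast}}$, which is the claimed identity (\ref{eq:intermediate1_equation}). I do not expect a genuine obstacle here: the computation is a direct expansion of the matrix equation, and the only point requiring a word of care is that the identity holds only at interior indices $i \in V \setminus (S+u)$, since it is exactly at these indices that the right-hand side $J_{i}$ vanishes; at $u$ and at the leader nodes the corresponding entries of $J$ need not be zero, so the mean-value relation is not asserted there.
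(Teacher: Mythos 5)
Your proof is correct and is essentially the paper's own argument: both simply read off the $i$-th row of $Lv^{\ast} = J$ at an interior index, use $J_{i} = 0$, and divide by $D_{i} = \sum_{j \in N(i)}{\nu_{ij}^{-1}}$ to recognize the weights $P(i,j)$. Your version is in fact slightly more careful, since you expand the Laplacian entries with their signs explicitly and note why $D_{i} > 0$, whereas the paper's one-line computation leaves these details implicit.
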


\begin{proof}
By definition of $L$ and the fact that $J_{i} = 0$,
\begin{equation}
\left(\sum_{j \in N(i)}{L(i,j)}\right)v_{i}^{\ast} = \sum_{j \in N(i)}{L(i,j)v_{j}^{\ast}}.
\end{equation}
Dividing both sides by $\sum_{j \in N(i)}{L(i,j)}$ gives (\ref{eq:intermediate1_equation}).
\end{proof}


\begin{lemma}
\label{lemma:intermediate2}
Define $\tilde{v}_{i}(S,u)$ to be the probability that a random walk starting at $i$ with transition probabilities given by (\ref{eq:probability_distribution}) reaches node $u \in V \setminus S$ before any node in the set $S$.  Then $\tilde{v}_{i} = v^{\ast}_{i}$ for all $i$.
\end{lemma}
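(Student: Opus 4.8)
The plan is to show that the hitting-probability function $\tilde{v}(S,u)$ solves exactly the same boundary-value problem that characterizes $v^{\ast}$: the discrete harmonic recursion (\ref{eq:intermediate1_equation}) on the interior $V \setminus (S+u)$, together with the boundary values $v_{u}^{\ast} = 1$ and $v_{s}^{\ast} = 0$ for $s \in S$. Since Lemma \ref{lemma:intermediate1} already establishes that $v^{\ast}$ satisfies the harmonic recursion, and Lemma \ref{lemma:intermediate0} fixes the boundary values of $v^{\ast}$, it suffices to verify that $\tilde{v}$ obeys the same recursion and the same boundary values, and then to invoke uniqueness of the solution.

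The boundary conditions for $\tilde{v}$ follow directly from its definition. A walk started at $u$ is already at $u$, and because $u \notin S$ it reaches $u$ before $S$; hence $\tilde{v}_{u} = 1$. A walk started at $s \in S$ has already hit $S$ at time zero and so cannot reach $u$ first; hence $\tilde{v}_{s} = 0$. For the interior recursion I would use \emph{first-step analysis}: for $i \in V \setminus (S+u)$, condition on the first transition. By the Markov property, given that the first step moves to $j \in N(i)$, the conditional probability of reaching $u$ before $S$ equals the probability $\tilde{v}_{j}$ of reaching $u$ before $S$ starting afresh from $j$. Averaging over neighbors with the transition weights $P(i,j)$ yields $\tilde{v}_{i} = \sum_{j \in N(i)} P(i,j)\,\tilde{v}_{j}$, which is precisely (\ref{eq:intermediate1_equation}).

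It then remains to argue uniqueness: any two functions satisfying (\ref{eq:intermediate1_equation}) on the interior with identical values on $S+u$ must coincide. Writing the interior restriction of the recursion in matrix form, the difference $d = v^{\ast} - \tilde{v}$ (whose boundary entries vanish) satisfies $(I - P_{II})\,d_{I} = 0$, where $P_{II}$ is the transition matrix restricted to the interior nodes. Because $G$ is connected and the boundary $S+u$ is nonempty, from every interior node the boundary is reachable with positive probability, so $P_{II}$ is strictly substochastic along some exit path from each node; consequently its spectral radius is strictly less than one, $I - P_{II}$ is nonsingular, and $d_{I} = 0$. Equivalently, one may invoke the maximum principle for discrete harmonic functions. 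I expect this uniqueness step --- in particular the careful justification that the killed walk exits the interior almost surely --- to be the main obstacle, whereas the boundary-value verification via first-step analysis is routine.
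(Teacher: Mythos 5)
Your proposal is correct and takes essentially the same route as the paper: both verify the boundary values $\tilde{v}_{u}=1$, $\tilde{v}_{s}=0$ for $s \in S$, establish the interior recursion (\ref{eq:intermediate1_equation}) for $\tilde{v}$ (the paper via stationarity of the walk, you via first-step analysis --- the same computation), and conclude that $\tilde{v}=v^{\ast}$ by uniqueness of the solution to this discrete boundary-value problem. The only difference is in the uniqueness step, where the paper simply cites the Maximum Principle of Lov\'asz while you supply a self-contained substochastic-matrix argument (spectral radius of $P_{II}$ strictly below one because the boundary $S+u$ is reachable from every interior node in the connected graph $G$), which is valid and which you correctly note is equivalent to invoking the maximum principle.
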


\begin{proof}
By definition, $\tilde{v}_{i} = v^{\ast}_{i}$ for all $i \in S$ and for $i=u$.  It remains to show the result for $i \in V \setminus (S+u)$.  By the stationarity of the random walk,
\begin{equation}
\label{eq:harmonic}
\tilde{v}_{i} = \sum_{j \in N(i)}{P(i,j) \tilde{v}_{j}}.
\end{equation}
Hence, by (\ref{eq:harmonic}) and Lemma \ref{lemma:intermediate1}, both $v^{\ast}$ and $\tilde{v}$ are harmonic functions of the index $i$.  By the Maximum Principle \cite{lovasz1993random}, $v^{\ast} = \tilde{v}$.
\end{proof}


\begin{proof}[Proof of Theorem \ref{theorem:commute_time}]
By Lemma \ref{lemma:intermediate0},
\begin{equation}
\label{eq:Lf_inv}
(L_{f}^{-1})_{uu} = \left(\sum_{t \in N(u)}{\nu_{ut}^{-1}(1 - v_{t}^{\ast})}\right)^{-1}.
\end{equation}
Hence, it suffices to show that the commute time $\kappa (S,u)$ is proportional to the right-hand side of (\ref{eq:Lf_inv}).
To show this, first observe that the probability $Q(u,S)$ that a random walk originating at $u$ will reach $S$ before returning to $u$ is given by $1 - \sum_{t \in N(u)}{\tilde{v}_{t}P(u,t)}$, where $\tilde{v}$ is as in Lemma \ref{lemma:intermediate2}.
Proposition 2.3 of \cite{lovasz1993random} gives $Q(u,S)$ as a function of the commute time,
\begin{displaymath}
Q(u,S) = \frac{2 \sum_{(s,t) \in E}{\nu_{st}^{-1}}}{\kappa(S,u)D_{u}}.
\end{displaymath}
This yields
\begin{displaymath}
1 - \sum_{t \in N(u)}{\tilde{v}_{t}P(u,t)} = \frac{2 \sum_{(s,t) \in E}{\nu_{st}^{-1}}}{\kappa(S,u)D_{u}}.
\end{displaymath}
Rearranging terms and applying Lemma \ref{lemma:intermediate2} gives
\begin{IEEEeqnarray*}{rCl}
\kappa(S,u) &=& \left(2 \sum_{(s,t) \in E}{\nu_{st}^{-1}}\right)\left[D_{u}\left(1 - \sum_{t \in N(u)}{v^{\ast}_{t}P(u,t)}\right)\right]^{-1} \\
&=& \left(2 \sum_{(s,t) \in E}{\nu_{st}^{-1}}\right)\left[\sum_{t \in N(u)}{\nu_{uv}^{-1}(1 - v_{t}^{\ast})}\right]^{-1} = \left(2 \sum_{(s,t) \in E}{\nu_{st}^{-1}}\right)(L_{ff}^{-1})_{uu},
\end{IEEEeqnarray*}
which proves the theorem.
\end{proof}

\section{Alternate Proof of Theorem \ref{theorem:commute_time_supermodular}}
\label{sec:alternate_proof_main_theorem}
In this section, an alternative proof to Theorem \ref{theorem:commute_time_supermodular} is presented, which uses an illustrative, graphical approach instead of the derivation with indicator functions in Section \ref{sec:problem_static}.  We first restate Theorem \ref{theorem:commute_time_supermodular}.

\begin{theorem}
    The commute time $\kappa(S,u)$ is a nonincreasing supermodular function of $S$.
    \end{theorem}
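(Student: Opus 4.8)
The plan is to reprove Theorem \ref{theorem:commute_time_supermodular} with a sample-path (coupling) argument, using the electrical-network picture as organizing intuition rather than the algebra of indicator products. I would first recall, via Theorem \ref{theorem:commute_time}, that if we regard $G$ as a resistor network in which edge $(i,j)$ carries conductance $\nu_{ij}^{-1}$, then $\kappa(S,u)$ equals the effective resistance between $u$ and the single terminal obtained by merging (shorting) all nodes of $S$, up to the positive constant $2\sum_{(s,t)\in E}\nu_{st}^{-1}$. In this picture, enlarging $S$ to $S\cup\{j\}$ simply solders node $j$ to that terminal with a zero-resistance wire, which makes both the monotonicity and the diminishing-returns statements visual.

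For the nonincreasing property I would give the graphical version of the argument in the main text: couple the walks for $S$ and for a superset $T\supseteq S$ by running a single trajectory out of $u$. Pathwise, the instant this trajectory has reached $S$ and returned to $u$ it has, a fortiori, reached $T$ and returned, so the commute time to $T$ is dominated trajectory by trajectory, giving $\kappa(T,u)\le\kappa(S,u)$; equivalently, shorting additional nodes into the terminal cannot raise the effective resistance.

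For supermodularity I would fix $A\subseteq B$ and $j\notin B$ and condition each commute on whether the walk from $u$ reaches $j$ before the target set. Illustrated graphically, there are two cases: if the walk meets the target set before $j$, the excursions realizing the commute to $S$ and to $S\cup\{j\}$ coincide and contribute nothing to the marginal $\kappa(S,u)-\kappa(S\cup\{j\},u)$; if it meets $j$ first, the marginal is exactly the expected length of the extra loop from $j$ out to the target set and back to $u$ in place of going from $j$ straight to $u$. This yields the factorization $\kappa(S,u)-\kappa(S\cup\{j\},u) = Pr(\tau_{j}(S))\,\mathbf{E}(h_{jSu}-h_{ju})$, so to establish (\ref{eq:commute_supermod}) it suffices, exactly as in the main proof, to show that both factors are nonincreasing as the target set grows from $A$ to $B$.

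The probability factor is the easy one: the event that the walk reaches $j$ before every node of $B$ is contained in the event that it reaches $j$ before every node of $A$ whenever $A\subseteq B$, so $Pr(\tau_{j}(B))\le Pr(\tau_{j}(A))$ by set inclusion, a one-line picture requiring no computation. The detour factor $h_{jBu}\le h_{jAu}$ is where I expect the real work to lie: intuitively, along any trajectory emanating from $j$ the first visit to $B$ occurs no later than the first visit to $A$ because $A\subseteq B$, and after that visit the residual walk to $u$ can be glued on identically. Making this rigorous is the main obstacle, since it is not enough to compare hitting times of the target sets --- one must couple the two full $j\to\text{target}\to u$ excursions on a common probability space and invoke the strong Markov property at the hitting time of $B$ to control the return leg to $u$, ensuring the domination survives the second half of each excursion. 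Once both monotonicities are in hand, (\ref{eq:commute_supermod}) follows, and the supermodularity of $R(S)$ itself is then immediate from Lemma \ref{lemma:additive_submod} exactly as in Corollary \ref{theorem:R_supermodular}.
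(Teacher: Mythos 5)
Your proposal is correct and follows essentially the same route as the paper: the same conditioning on $\tau_{j}(A)$, the same factorization $\kappa(A,u)-\kappa(A\cup\{j\},u)=Pr(\tau_{j}(A))\,\mathbf{E}(h_{jAu}-h_{ju})$, and the same two monotonicity claims, matching in particular the sample-path proof given in Appendix \ref{sec:alternate_proof_main_theorem}. The one step you flag as the ``main obstacle'' is in fact immediate once all quantities are read off a single trajectory started at $j$: since $A\subseteq B$, the first hit of $B$ occurs no later than the first hit of $A$, so the first visit to $u$ thereafter satisfies $h_{jBu}\le h_{jAu}$ pathwise, with no separate coupling construction or strong Markov argument required --- this is precisely what the paper's indicator-function computation (and the three-case picture of Appendix \ref{sec:alternate_proof_main_theorem}) formalizes.
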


    \begin{proof}
    The nonincreasing property follows from the fact that, if $S \subseteq T$, then any random walk that reaches $S$ and returns to $u$ has also reached $T$ and returned to $u$.  Thus $\kappa(T,u) \leq \kappa(S,u)$.

    \qquad By definition of submodularity~\cite{fujishige2005submodular}, $\kappa(S,u)$ is a supermodular function of $S$ if and only if, for any sets $A$ and $B$ with $A \subseteq B$ and any $j \notin B$,
    \begin{equation}
    \label{eq:supermod_def_Thm_4}
    \kappa(A,u) - \kappa(A \cup \{j\}, u) \geq \kappa(B,u) - \kappa(B \cup \{j\}, u).
    \end{equation}
    Consider the quantity $\kappa(A,u) - \kappa(A \cup \{j\},u)$.  Define $A^{\prime} = A \cup \{j\}$, and define $T_{Au}$ and $T_{A^{\prime}u}$ to be the (random) times for a random walk to reach $A$ (respectively $A^{\prime}$) and return to $u$.  Then by definition, $\kappa(A,u) = \mathbf{E}(T_{Au})$ and $\kappa(A^{\prime},u) = \mathbf{E}(T_{A^{\prime}u})$.  This implies $\kappa(A,u) - \kappa(A^{\prime},u) = \mathbf{E}(T_{Au} - T_{A^{\prime}u})$.

    \qquad Let $\tau_{j}(A)$ denote the event where the random walk reaches node $j$ before any of the nodes in $A$.  Further, let $h_{jAu}$ be the time for the random walk to travel from $j$ to $A$ and then to $u$, while $h_{ju}$ is the time to travel directly from $j$ to $u$.  We then have
    \begin{IEEEeqnarray*}{rCl}
    \kappa(A,u) - \kappa(A^{\prime},u) &=& \mathbf{E}(T_{Au} - T_{A^{\prime}u}|\tau_{j}(A))Pr(\tau_{j}(A))
      + \mathbf{E}(T_{Au} - T_{A^{\prime}u}|\tau_{j}(A)^{c})Pr(\tau_{j}(A)^{c}) \\
     &=& \mathbf{E}(h_{jAu} - h_{ju})Pr(\tau_{j}(A)),
     \end{IEEEeqnarray*}
     noting that, if the walk reaches $A$ first, then $T_{Au}$ and $T_{A^{\prime}u}$ are equal.  Hence (\ref{eq:supermod_def_Thm_4}) is equivalent to
     \begin{equation}
     \label{eq:submitted_13}
     \mathbf{E}(h_{jAu}-h_{ju})Pr(\tau_{j}(A)) \geq \mathbf{E}(h_{jBu} - h_{ju})Pr(\tau_{j}(B)).
     \end{equation}


\qquad To complete the proof, it suffices to show that $h_{jAu} \geq h_{jBu}$ and $Pr(\tau_{j}(A)) \geq Pr(\tau_{j}(B))$.  If the walk reaches $j$ before $B$, then it has also reached $j$ before $A$, since $A \subseteq B$.  Thus $\tau_{j}(B) \subseteq \tau_{j}(A)$, and hence $Pr(\tau_{j}(B)) \leq Pr(\tau_{j}(A))$.

\qquad It remains to show that $h_{jAu} \geq h_{jBu}$.  We consider three cases, corresponding to different sample paths of the random walk on $G$. As a preliminary, let $\zeta_{ab}$ denote the time for a random walk starting at $a$ to reach node $b$. Each case is illustrated by a corresponding figure.

\qquad \emph{\underline{Case I -- The walk reaches $v \in A$ before any node in $B \setminus A$:}} In this case, $h_{jAu}$ is equal to $\zeta_{jv} + \zeta_{vu}$, while $h_{jBu}$ is equal to the same quantity.  Thus $h_{jAu} = h_{jBu}$.  This case is illustrated in Figure \ref{fig:case_1}.

\begin{figure}[h]
\centering
\includegraphics[width=3in]{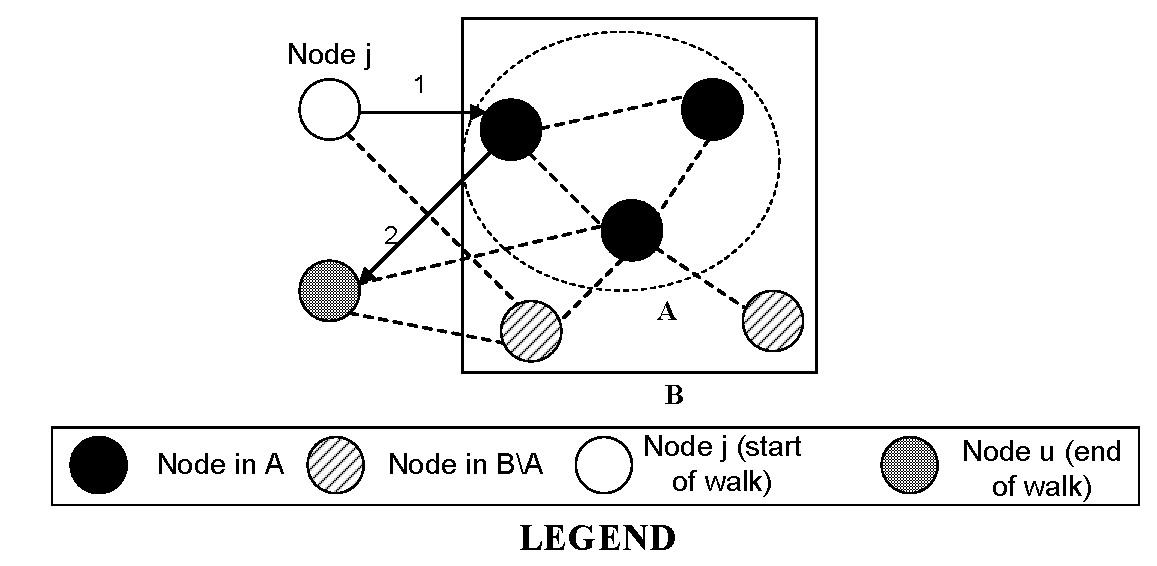}
\caption{Illustration of the proof that the commute time is supermodular, Case I, showing nodes $j$, $u$, and sets $A$ and $B$ with $A \subseteq B$.  In this case the walk, starting at node $j$, reaches set $A$ before any node in $B \setminus A$ and then continues to node $u$.}
\label{fig:case_1}
\end{figure}


\qquad \emph{\underline{Case II -- The walk reaches $t \in B \setminus A$, then $v \in A$ before $u$:}}  In this case, the $h_{jAu}$ is equal to $\zeta_{jv} + \zeta_{vu} = \zeta_{jt} + \zeta_{tv} + \zeta_{vu}$, since the walk reaches $t$ before $v$. Similarly, $h_{jBu}$ is equal to $\zeta_{jt} + \zeta_{tu} = \zeta_{jt} + \zeta_{tv} + \zeta_{vu}$, since the walk reaches $v$ after $t$ but before $u$.  Hence $h_{jAu} = h_{jBu}$ in Case II as well.  This case is illustrated in Figure \ref{fig:case_2}. 

\begin{figure}[h]
\centering
\includegraphics[width=3in]{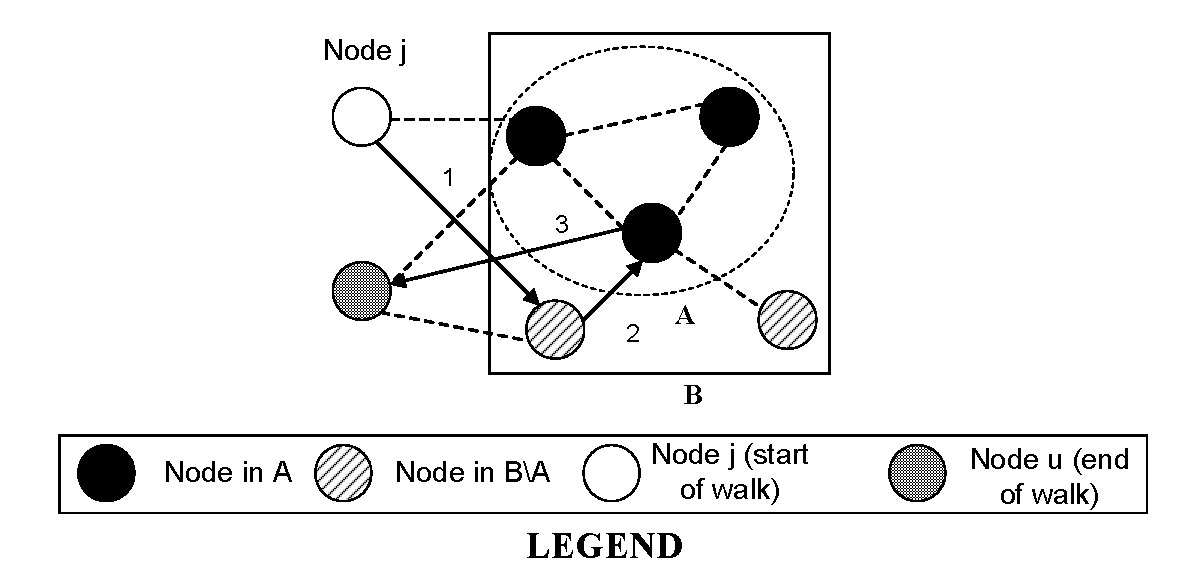}
\caption{Illustration of the proof that the commute time is supermodular, Case II, showing nodes $j$, $u$, and sets $A$ and $B$ with $A \subseteq B$.  In this case the walk, starting at node $j$, reaches set $B \setminus A$, then set $A$, and then node $u$.}
\label{fig:case_2}
\end{figure}

\qquad \emph{\underline{Case III -- The walk reaches $t \in B \setminus A$ and then $u$ before $v \in A$:}} $h_{jAu}$ is equal to $\zeta_{ju} + \zeta_{uv}$.  Since the walk reaches $t \in B \setminus A$ before $u$, this is equal to $\zeta_{jt} + \zeta_{tu} + \zeta_{uv}$.  However, since $h_{jBu}$ is the time for the walk to reach any node in $B$ and then travel to $u$, $h_{jBu} = \zeta_{jt} + \zeta_{tu}$.  Thus $h_{jAu} > h_{jBu}$ in Case III.  Case III is illustrated in Figure \ref{fig:case_3}.

\begin{figure}[h]
\centering
\includegraphics[width=3in]{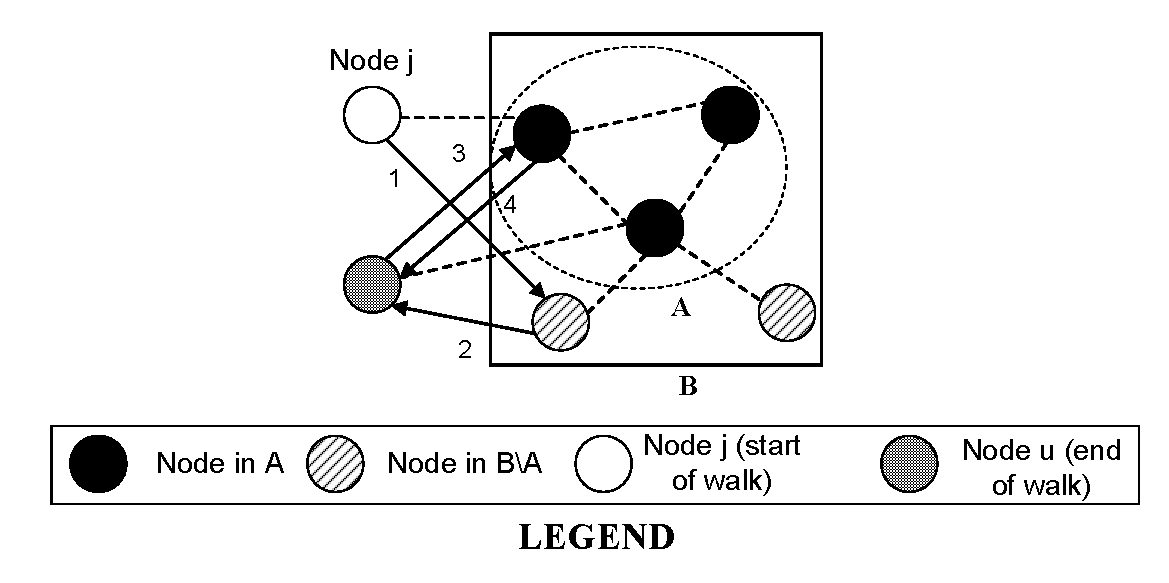}
\caption{Illustration of the proof that the commute time is supermodular, Case III, showing nodes $j$, $u$, and sets $A$ and $B$ with $A \subseteq B$.  In this case the walk, starting at node $j$, reaches set $B \setminus A$ and node $u$ before reaching set $A$.}
\label{fig:case_3}
\end{figure}
Together, these cases imply (\ref{eq:submitted_13}), and hence the supermodularity of $\kappa(S,u)$ as a function of $S$. 
\end{proof}

\end{document}